\documentclass[11pt]{scrartcl} 
\usepackage[sort&compress,numbers]{natbib}
\usepackage{amsmath,amssymb,latexsym,amsopn,amsfonts}
\usepackage[boxed,ruled,vlined,linesnumbered,noresetcount]{algorithm2e}
\usepackage{url,fullpage}
\usepackage{framed,microtype}
\usepackage[normalem]{ulem}
\usepackage{array}
\usepackage{graphicx}
\usepackage[hidelinks]{hyperref} 
\usepackage{color}
\usepackage[dvipsnames]{xcolor}
\definecolor{oskar_green}{rgb}{0.0, 0.5, 0.0}

\newcommand{\ol}[1]{\textcolor{oskar_green}{#1}}
\newcommand{\VCalgSize}{footnotesize}
\newcommand{\VCalgSizeSmall}{scriptsize}



\newcommand{\bigO}{\mathcal{O}}
\newcommand{\remove}[1]{}

\newcommand{\baseSnapshot}{\mathrm{baseSnapshot}}
\newcommand{\exceed}{\Delta}
\newcommand{\pndSnap}{\mathit{pndTsk}}
\newcommand{\repSnap}{\mathit{repSnap}}
\newcommand{\etal}{\emph{et al.}\xspace}
\newcommand{\eg}{\emph{e.g.,}\xspace}
\newcommand{\ie}{\emph{i.e.,}\xspace}

\newcommand{\figScale}{0.5}

\usepackage{etoolbox}
\patchcmd{\thebibliography}{\chapter*}{\section*}{}{}

\newtheorem{remark}{Remark}[section]
\newtheorem{theorem}{Theorem}[section]
\newtheorem{lemma}[theorem]{Lemma}
\newtheorem{definition}{Definition}[section]


\newtheorem{claim}[theorem]{Claim}

\newenvironment{claimProof}[1]{\par\noindent\underline{Proof:}\space#1}{\hfill $\blacksquare$}

\newcommand{\sS}{\mathcal{S}}
\newcommand{\sP}{\mathcal{P}}

\newcommand{\N}{\mathbb{N}}
\newcommand{\capacity}{\mathsf{capacity}}

\newcommand{\op}{\mathsf{op}}
\newcommand{\sWrite}{\mathsf{write}}
\newcommand{\sSnapshot}{\mathsf{snapshot}}
\newcommand{\sStore}{\mathsf{safeReg}}

\newcommand{\sdelta}{S_i \cap \Delta_i}



\usepackage{caption}
\setcapindent{0pt}
\addtokomafont{caption}{\centering}

\SetKwInOut{Input}{input}
\SetKwInOut{Output}{output}

\newenvironment{proof}{\noindent \textbf{Proof.}}{\hfill$\blacksquare$}
\renewenvironment{claimProof}{\noindent\textbf{Proof of claim.}}{\hfill$\Box$}



\usepackage[utf8]{inputenc}

\begin{document}
\setcounter{footnote}{2}
\title{Self-Stabilizing Snapshot Objects\\ for Asynchronous Failure-Prone Networked Systems\\~{(preliminary version)}} 

\author{Chryssis Georgiou~\footnote{Department of Computer Science, University of Cyprus, Nicosia, Cyprus. E-mail: \texttt{chryssis@cs.ucy.ac.cy}} \and Oskar Lundstr\"om~\footnote{Department of Computer Science and Engineering, Chalmers University of Technology, Gothenburg, SE-412 96, Sweden, E-mail: \texttt{osklunds@student.chalmers.se}.} \and Elad M.\ Schiller~\footnote{Department of Computer Science and Engineering, Chalmers University of Technology, Gothenburg, SE-412 96, Sweden, E-mail: \texttt{elad@chalmers.se}.}} 

	
\date{}

\maketitle

\begin{abstract}
A \emph{snapshot object} simulates the behavior of an array of single-writer/multi-reader shared registers that can be read atomically. Delporte-Gallet \etal proposed two fault-tolerant algorithms for snapshot objects in asynchronous crash-prone message-passing systems. Their first algorithm is \emph{non-blocking}; it allows snapshot operations to terminate once all write operations have ceased.
%
%
It uses $\bigO(n)$ messages of $\bigO(n \cdot \nu)$ bits, where $n$ is the number of nodes and $\nu$ is the number of bits it takes to represent the object. Their second algorithm allows snapshot operations to \emph{always terminate} independently of write operations. It incurs $\bigO(n^2)$ messages.

The fault model of Delporte-Gallet \etal considers node failures (crashes). We aim at the design of even more robust snapshot objects. We do so through the lenses of \emph{self-stabilization}---a very strong notion of fault-tolerance. In addition to Delporte-Gallet \etal's fault model, our self-stabilizing algorithm can recover after the occurrence of \emph{transient faults}; these faults represent arbitrary violations of the assumptions according to which the system was designed to operate (as long as the code stays intact).   

In particular, in this work, we propose self-stabilizing variations of Delporte-Gallet \etal's non-blocking algorithm and always-terminating algorithm. Our algorithms have similar communication costs to the ones by Delporte-Gallet \etal and $\bigO(1)$ recovery time (in terms of asynchronous cycles) from transient faults. The main differences are that our proposal considers repeated gossiping of $\bigO(\nu)$ bit messages and deals with bounded space (which is a prerequisite for self-stabilization). We also consider an input parameter, $\delta$, for which we claim an ability to balance the costs of snapshot operations. We validate our correctness proof, evaluate the performance of Delporte-Gallet \etal's algorithms and our proposed variations and investigate the properties of $\delta$ via PlanetLab experiments, where significant latency and communication costs reduction are observed. 


\end{abstract}

\thispagestyle{empty}

\section{Introduction}
\label{sec:intro}
We propose self-stabilizing implementations of shared memory snapshot objects for asynchronous networked systems whose nodes may crash.

\smallskip

\noindent \textbf{Context and Motivation.~~} 
Shared registers are fundamental objects that facilitate synchronization in distributed systems. In the context of networked systems, they provide a higher abstraction level than a simple end-to-end communication. Share registers  provide persistent and consistent distributed storage that can simplify the design and analysis of dependable distributed systems. Snapshot objects extend shared registers. They provide a way to further make the design and analysis of algorithms that base their implementation on shared registers easier. Snapshot objects allow an algorithm to construct consistent global states of the shared storage in a way that does not disrupt the system computation. Their efficient and fault-tolerant implementation is a fundamental problem, as  there are many examples of algorithms that are built on top of snapshot objects; see textbooks such as~\cite{DBLP:series/synthesis/2011Welch,DBLP:series/synthesis/2018Taubenfeld} and recent reviews, such as~\cite{DBLP:reference/algo/Ruppert16}.

\smallskip

\noindent \textbf{Task description.~~} 
Consider a fault-tolerant distributed system of $n$ asynchronous nodes that are prone to failures. Their interaction is based on the emulation of Single-Writer/Multi-Reader (SWMR) shared registers over a message-passing communication system. Snapshot objects can read the entire array of system registers~\cite{DBLP:journals/jacm/AfekADGMS93,DBLP:journals/dc/Anderson94}. The system lets each node update its own register via $\sWrite()$ operations and retrieve the value of all shared registers via $\sSnapshot()$ operations. Note that these snapshot operations may occur concurrently with the write operations that individual nodes perform. We are particularly interested in the study of atomic snapshot objects that are \emph{linearizable}~\cite{DBLP:journals/toplas/HerlihyW90}: the operations $\sWrite()$ and $\sSnapshot()$ appear as if they have been executed instantaneously, one after the other (in other words, they appear to preserve real-time ordering).   

\smallskip

\noindent \textbf{Fault Model.~~} 
We consider an asynchronous message-passing system that has no guarantees on the communication delay. Moreover, there is no notion of global (or universal) clocks and we do not assume that the algorithm can explicitly access the local clock (or  timeout mechanisms). Our fault model includes $(i)$ node crashes, and $(ii)$ communication failures, such as packet omission, duplication, and reordering. In addition to the failures captured in our model, we also aim to recover from \emph{transient faults}, \ie any temporary violation of assumptions according to which the system and network were designed to behave, \eg the corruption of control variables, such as the program counter and operation indices, which are responsible for the correct operation of the studied system, or operational assumptions, such as that at least half of the system nodes never fail. Since the occurrence of these failures can be combined, we assume that these transient faults can alter the system state in unpredictable ways. 

We follow the design criteria of self-stabilization, which was proposed by Dijkstra~\cite{DBLP:journals/cacm/Dijkstra74} and detailed in~\cite{DBLP:books/mit/Dolev2000,DBLP:series/synthesis/2019Altisen}. In particular, when modeling the system, we assume that the above violations bring the system to an arbitrary state from which a \emph{self-stabilizing algorithm} should recover the system. 
%
%
Therefore, starting from an arbitrary state, the correctness proof of self-stabilizing systems~\cite{DBLP:journals/cacm/Dijkstra74} has to demonstrate the return to a ``correct behavior'' within a bounded period, which brings the system to a \emph{legitimate state}. The complexity measure of self-stabilizing systems is the length of the recovery period.      

As transient faults can occur at any point in a system's lifetime, self-stabilizing systems need to keep communicating their state structures for cleaning any potential corrupted (stale) information; to this respect, a self-stabilizing system cannot really terminate~\cite[Chapter 2.3]{DBLP:books/mit/Dolev2000}. Specifically, the proposed solution repeatedly broadcasts $\bigO(\nu)$-size gossip messages that facilitate the system clean-up from stale information, where $\nu$ is the number of bits it takes to represent the object. We note the trade-off between the cost related to these gossip messages and the recovery time. That is, one can balance this trade-off by, for example, reducing the rate of gossip messages, which prolongs the stabilization time. We clarify that the rate in which these repeated clean-up operations take place does not impact the execution time of the $\sWrite()$ and $\sSnapshot()$ operations. 

\smallskip

\noindent \textbf{Related work.~~} 
We now overview existing work related to ours. Our review does not focus on algorithms for shared memory system; although there are examples for both non-self-stabilizing~\cite{DBLP:journals/ipl/KirousisST96,DBLP:journals/tpds/KirousisST94} and self-stabilizing~\cite{Abraham07selfstabilizing} solutions. 
 
\noindent \emph{Shared registers emulation in message-passing systems:~~}
Attiya \etal~\cite{DBLP:journals/jacm/AttiyaBD95} implemented SWMR atomic shared memory in an asynchronous networked system. They assume that the majority of the nodes do not crash or get disconnected. Their work builds on this assumption in the following manner: Any majority subset of the system nodes includes at least one non-faulty node; thus, any two majority subsets of the system nodes have a non-empty intersection. They show that if a majority of the nodes acknowledge an update to the shared register, then that update can safely be considered visible to all non-faulty nodes that retrieve the latest update from a majority of nodes. Attiya \etal also show that this assumption is essential for solvability. Their seminal work has many generalizations and applications~\cite{DBLP:journals/eatcs/Attiya10}. The literature includes a large number of simulation of shared registers for networked systems, which differ in their fault tolerance properties, time complexity, storage costs, and system properties, \eg~\cite{DBLP:journals/jal/Attiya00,DBLP:journals/siamcomp/DuttaGLV10,DBLP:conf/netys/HadjistasiNS17,DBLP:conf/podc/MostefaouiR16,DBLP:books/sp/Raynal18,DBLP:books/daglib/0032304}.

In the context of self-stabilization, the literature includes a practically-self-stabilizing variation for the work of Attiya \etal~\cite{DBLP:journals/jacm/AttiyaBD95} by Alon \etal~\cite{DBLP:journals/jcss/AlonADDPT15}. Their proposal guarantees wait-free recovery from transient faults. However, there is no bound on the recovery time. Dolev \etal~\cite{DBLP:journals/corr/abs-1806-03498} consider MWMR atomic storage that is wait-free in the absence of transient faults. They guarantee a bounded time recovery from transient faults in the presence of a fair scheduler. They demonstrate the algorithm's ability to recover from transient faults using unbounded counters and in the presence of fair scheduling. Then they deal with the event of integer overflow via a consensus-based procedure. Since integer variables can have $64$-bits, their algorithm seldom uses this non-wait-free (consensus-based) procedure for dealing with integer overflows. In fact, they model integer overflow events as transient faults, which implies bounded recovery time from transient faults in the seldom presence of a fair scheduler (using bounded memory). They call these systems \emph{self-stabilizing systems in the presence of seldom fairness.} Our work adopts these design criteria. We also make use of their self-stabilizing quorum and gossip services~\cite[Section~13]{DBLP:journals/corr/abs-1806-03498}.

\noindent \emph{Implementing a snapshot object on top of a message-passing system:~~} A straightforward way for implementing snapshot objects is to consider a layer of $n$ SWMR atomic registers emulated in a networked system. On top of this layer of these emulated registers, the system can run any algorithm for implementing a snapshot object for a system with shared variables. Delporte-Gallet \etal~\cite{DBLP:journals/tpds/Delporte-Gallet18} avoid this composition, obtaining, in this way, a more efficient implementation with respect to the communication costs. Specifically, they claim that when stacking the shared-memory atomic snapshot algorithm of~\cite{DBLP:journals/jacm/AfekADGMS93} on the shared-memory emulation of \cite{DBLP:journals/jacm/AttiyaBD95} (with some improvements), the number of messages per snapshot operation is $8n$ and it takes four round trips. Their proposal, instead, takes $2n$ messages per snapshot operation and just one round trip to complete. The algorithms we propose in the present work follow the approach of Delporte-Gallet, which does not consider stacking. Our proposal have the same communication costs for write and snapshot operations Delporte-Gallet \etal~\cite{DBLP:journals/tpds/Delporte-Gallet18}. Moreover, they tolerate any failure (in any communication or operation invocation pattern) that~\cite{DBLP:journals/tpds/Delporte-Gallet18} can. Furthermore, our algorithms deal with transient faults by periodically removing stale information. To that end, the algorithms broadcast gossip message of $\bigO(\nu)$ bits, where $\nu$ is the number of bits it takes to represent the object.

In the context of self-stabilization, there exist algorithms for the propagation of information with feedback, \eg~\cite{DBLP:journals/jpdc/DelaetDNT10} that can facilitate the implementation of snapshot objects that can recover from transient faults, but not from node failures. For the sake of clarity, we note that ``stacking''  of self-stabilizing algorithms for asynchronous systems is not a straightforward process (since the existing ``stacking'' requires schedule\ol{r} fairness, see~\cite[Section 2.7]{DBLP:books/mit/Dolev2000}). Moreover, we are unaware of an attempt in the literature to stack a self-stabilizing shared-memory atomic snapshot algorithm (such as the weak snapshots algorithm of Abraham~\cite{Abraham07selfstabilizing} that uses $\bigO(n)$ register size) over a self-stabilizing shared-memory emulation, such as the one of Dolev \etal~\cite{DBLP:journals/siamcomp/DolevIM97}.

\smallskip

\noindent \textbf{Our Contributions.~~} We present an novel module for dependable distributed systems:  self-stabilizing algorithms for snapshot objects in networked systems. To the best of our knowledge, we are the first to provide a broad fault model that includes both node failures and transient faults. Specifically, we advance the state of the art as follows:

\begin{enumerate}
	\item As a first contribution, we offer a self-stabilizing variation of the non-blocking algorithm presented by Delporte-Gallet \etal~\cite{DBLP:journals/tpds/Delporte-Gallet18}. Their solution tolerates node failures as well as packet omission, duplication, and reordering. Each snapshot or write operation uses $\bigO(n)$ messages of $\bigO(\nu\cdot n)$ bits, where $n$ is the number of nodes and $\nu$ is the number of bits for encoding the object. The termination of a snapshot operation depends on the assumption that the invocation of all write operations cease eventually.
	
	Our solution broadens the set of failure types it can tolerate. Namely, the proposed algorithms can also recover after the occurrence of transient faults, which model any violation of the assumptions according to which the system was designed to operate (as long as the code stays intact). We increase the communication costs slightly by using gossip messages of $\bigO(\nu)$ bits that are sent on every link, where $\nu$ is the number of bits it takes to represent the object.
	
	\item Our second contribution offers a self-stabilizing variation of the always-terminating algorithm presented by Delporte-Gallet \etal~\cite{DBLP:journals/tpds/Delporte-Gallet18}. Our algorithm can: (i) recover from transient faults, and (ii) both write and snapshot operations always terminate (regardless of the invocation patterns of any operation).
	
	We achieve $(ii)$ by choosing to use \emph{safe registers} for storing the result of recent snapshot operations, rather than (the more evolved) \emph{reliable broadcast} mechanism. Moreover, instead of dealing with one snapshot task at a time, we take care of several at a time. We also	consider an input parameter, $\delta$. For the case of $\delta=0$, our self-stabilizing algorithm guarantees an always-termination behavior in a way the resembles the non-self-stabilizing always-terminating algorithm by Delporte-Gallet \etal~\cite{DBLP:journals/tpds/Delporte-Gallet18} that has the cost of $\bigO(n^2)$ messages per snapshot operation. For the case of $\delta>0$, our solution aims at using $\bigO(n)$ messages per snapshot operation while monitoring the number of concurrent write operations. Once our algorithm notices that a snapshot operation runs concurrently with at least $\delta$ write operations, it blocks all write operations and uses $\bigO(n^2)$ messages for completing the snapshot operations.

Thus, the proposed algorithm can trade communication costs with an  $\bigO(\delta)$ bound on snapshot operation latency. Moreover, between any two consecutive periods in which snapshot operations block the system for write operations, the algorithm guarantees that at least $\delta$ write operations occur.    

\item Using Lamport's happened-before relation~\cite{DBLP:journals/cacm/Lamport78}, we bound the cost of the proposed self-stabilizing operations to constants, for the case of non-blocking solution, and $\bigO(n+\delta)$, for the case of the always-terminating solution, where $\delta$ is a predefined constant that we introduced in the previous item. The two proposed algorithms presented in sections~\ref{sec:aussnba} and~\ref{sec:aussata} consider unbounded counters. In Section~\ref{sec:bounded}, we explain how to bound these counters.


\item We validate our correctness proof, evaluate the performance of the algorithms and investigate the properties of $\delta$ via a Rust-based implementation and PlanetLab experiments. We observe that Delporte-Gallet \etal's non-self-stabilizing non-blocking solution and our self-stabilizing variation have similar behaviors with negligible differences in their overheads. Moreover, Delporte-Gallet \etal's non-self-stabilizing alway-terminating solution's snapshot latency grows linearly as the number of concurrent snapshotters increases. By varying $\delta$, our self-stabilizing solution gets various latencies. For $\delta = 0$, the write latency scales fairly well and the snapshot latency is a constant that appears to be twice as much as the latency of the non-blocking solutions. In our experiments we noticed that, on the one hand, a higher $\delta$ leads to shorter write latency, while on the other hand, a higher $\delta$ leads to longer snapshot latency.

\end{enumerate}

\smallskip

\noindent \textbf{Organization.} We state our system settings in Section~\ref{sec:sys}. We review the non-self-stabilizing solutions by Delporte-Gallet \etal~\cite{DBLP:journals/tpds/Delporte-Gallet18} in Section~\ref{sec:back}. Our self-stabilizing non-blocking and always-terminating algorithms are presented in Sections~\ref{sec:aussnba} and~\ref{sec:aussata}, respectively; they consider unbounded counters. In Section~\ref{sec:bounded}, we explain how to bound the counters of the proposed self-stabilizing algorithms. We empirically validate our correctness in Section~\ref{sec:eval}. We conclude in Section~\ref{sec:disc}.


\section{System settings}
\label{sec:sys}
We consider an asynchronous message-passing system that has no guarantees on the communication delay. Moreover, there is no notion of global (or universal) clocks and we do not assume that the algorithms can explicitly access the local clock (or  timeout mechanisms). The system consists of a set, $\sP$, of $n$ failure-prone nodes (or processors) with unique identifiers. Any pair of nodes $p_i,p_j \in \sP$ have access to a bidirectional communication channel, $channel_{i,j}$, that, at any time, has at most $\capacity \in \N$ packets on transit from $p_i$ to $p_j$ (this assumption is due to a well-known impossibility~\cite[Chapter 3.2]{DBLP:books/mit/Dolev2000}).

\subsection{Execution model}
\label{sec:interModel}
Our analysis considers the \emph{interleaving model}~\cite{DBLP:books/mit/Dolev2000}, in which the node's program is a sequence of \emph{(atomic) steps}. Each step starts with an internal computation and finishes with a single communication operation, \ie a message $send$ or $receive$. The \emph{state}, $s_i$, of node $p_i \in \sP$ includes all of $p_i$'s variables and $channel_{j,i}$. The term \emph{system state} (or configuration) refers to the tuple $c = (s_1, s_2, \cdots,  s_n)$. We define an \emph{execution (or run)} $R={c[0],a[0],c[1],a[1],\ldots}$ as an alternating sequence of system states $c[x]$ and steps $a[x]$, such that each $c[x+1]$, except for the starting one, $c[0]$, is obtained from $c[x]$ by $a[x]$'s execution.

\subsection{The specifications of the snapshot object task}
\label{sec:snapshotTask}
The set of \emph{legal executions} ($LE$) refers to all the executions in which the requirements of the task $T$ hold. In this work, $T_{\text{snapshot}}$ denotes the task of snapshot object emulation and $LE_{\text{snapshot}}$ denotes the set of executions in which the system fulfills $T_{\text{snapshot}}$'s requirements, which consider the operations $\sWrite()$ and $\sSnapshot()$. When node $p_i$ invokes the $\sWrite(v)$ operation, the system stores value $v$ in the emulated shared memory associated with $p_i$. When node $p_i$ invokes the $\sSnapshot()$ operation, the system returns an array that includes all the emulated shared memories in a way that fulfills the properties of termination and linearizability. Termination is defined as the end of operation execution within a finite time. Linearizability is given by the definition of linearizable snapshot histories, which is based on the terms events and event histories. The definitions here are brief version of the ones that are given by Delporte-Gallet \etal~\cite{DBLP:journals/tpds/Delporte-Gallet18}.

\smallskip

\noindent 
\emph{Events:~~} Let $\op$ be a $\sWrite()$ or $\sSnapshot()$ operation. The execution of an operation $\op$ by a processor $p_i$ is modeled by two steps: the invocation step,
denoted by $invoc(\op)$, which calls the $\op$ operation,  and a response event, denoted $resp(\op)$ (termination), which occurs when $p_i$ terminates (completes) the operation. For the sake of simple presentation, by \emph{event} we refer to either an operation's start step or an operation's end step.

\smallskip

\noindent 
\emph{Effective operations:~~} We say that a $\sSnapshot()$ operation is \emph{effective} when the invoking processor does not fail during the operation's execution. We say that a $\sWrite()$ operation is effective when the invoking processor does not fail  during its execution, or in case it does fail, the operation's effect is returned by an effective snapshot operation.

\smallskip

\noindent 
\emph{Histories:~~} A history is a sequence of operation start and end steps that are totally ordered. We consider histories to compare in an abstract way between two executions of the studied algorithms. Given any two events
$e$ and $f$, $e < f$ if $e$ occurs before $f$ in the corresponding history. A history is denoted by $\widehat{H}=(E,<)$, where $E$ is the set of events. Given an infinite history $\widehat{H}$, we require that: (i) its first event is an invocation and (ii) each invocation is followed by its matching response event. If $\widehat{H}$ is finite, then  $\widehat{H}$ might not contain the matching response event of the last invocation event.

\smallskip

\noindent 
\emph{Linearizable snapshot history:~~} A snapshot-based history $\widehat{H}= (H,<)$ models a computation at the abstraction level at which the write and snapshot operations are invoked. It is linearizable if there is an equivalent sequential history $\widehat{H}_{seq}= (H,<_{seq})$ in which the sequence of effective $\sWrite()$ and $\sSnapshot()$ operations issued by the processes is such that: 
\begin{enumerate}
	\item Each effective operation appears as if executed at a single point of the timeline between its invocation event and its response event, and 
	\item Each effective $\sSnapshot()$ operation returns an array $reg$ such that: (i) {$reg[i]=(v,\bullet)$} if the operation $\sWrite(v)$ by $p_i$ appears previously in the sequence and it is the latest one. (ii) Otherwise $reg[i]=\bot$.
\end{enumerate}

\begin{figure*}[t!]
	\begin{\VCalgSize}
		\centering
		\begin{tabular}{llll}
			\cline{2-3}
			\multicolumn{1}{l|}{}                    & \multicolumn{2}{l|}{~~~~~~~~~~~~~~~~~~~~~~~~~~~~~~~~~~~~~~~~~~~~\textbf{Frequency}}                                                                               \\ \hline
			\multicolumn{1}{|l|}{\textbf{Duration}}  & \multicolumn{1}{l|}{\textit{Rare}}                          & \multicolumn{1}{l|}{\textit{Not rare}}                     \\ \hline
			\multicolumn{1}{|l|}{}                   & \multicolumn{1}{l|}{Any violation of the assumptions according}         & \multicolumn{1}{l|}{Packet failures: omissions,}        \\
			\multicolumn{1}{|l|}{\textit{Transient}}          & \multicolumn{1}{l|}{to which the system is assumed to}           & \multicolumn{1}{l|}{duplications, reordering} \\
			\multicolumn{1}{|l|}{\textit{}}          & \multicolumn{1}{l|}{operate (as long as the code stays intact).}    & \multicolumn{1}{l|}{(assuming communication}      \\
			\multicolumn{1}{|l|}{\textit{}} & \multicolumn{1}{l|}{This can result in any state corruption.} & \multicolumn{1}{l|}{fairness holds).}                                   \\ \cline{2-3} 
			\hline
			\multicolumn{1}{|l|}{\textit{Permanent}} & \multicolumn{1}{l|}{Crash failures.}                 & \multicolumn{1}{l|}{}                                   \\ \hline
			\vspace*{0.25em}
		\end{tabular}
		\includegraphics[clip=true,scale=0.7]{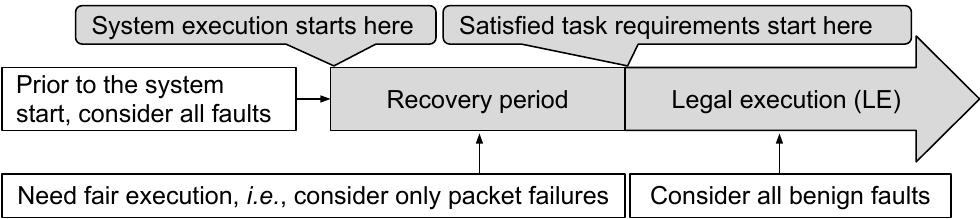}\\
		\caption{\label{fig:self-stab-SDN}\footnotesize{The table above details our fault model and the chart illustrates when each fault set is relevant. The chart's gray shapes represent the system execution, and the white boxes specify the failures considered to be possible at different execution parts and recovery guarantees of the proposed self-stabilizing algorithm. The set of benign faults includes both packet failures and crash failures.}}
	\end{\VCalgSize}
\end{figure*}

\subsection{The Fault Model and Self-stabilization}
We model a failure occurrence as a step that the environment takes rather than the algorithm. 

\smallskip

\noindent \textbf{Benign failures.~~}
\label{sec:benignFailures}
When the occurrence of a failure cannot cause the system execution to lose legality, \ie to leave $LE$, we refer to that failure as a benign one (Figure~\ref{fig:self-stab-SDN}).

\smallskip

\noindent \textit{Node failure.~~}
We consider \emph{crash} failures, in which nodes stop taking steps. We assume that the system does not have access to failure detectors~\cite{DBLP:journals/jacm/ChandraT96}. We assume that the number of failing nodes is bounded by $f$ and that $2f<n$ for the sake of guaranteeing correctness~\cite{DBLP:books/mk/Lynch96}. In Section~\ref{sec:bounded} we discuss how to relax this assumption, \eg assuming that $f=0$ or considering \emph{fail-stop failures}, which assume access to unreliable failure detectors~\cite{DBLP:journals/jacm/ChandraT96}.


\smallskip

\noindent \textit{Communication failures and fairness.~~}
%
%
We assume that the communication channels are prone to packet failures, such as omission, duplication, reordering. However, we assume that if $p_i$ sends a message infinitely often to $p_j$, node $p_j$ receives that message infinitely often. The latter is called the \emph{fair communication} assumption. This paper uses a refinement of the fair communication assumption in which if $p_i$ sends a message of type $X$ and with an index number of at least $x$ infinitely often to $p_j$, node $p_j$ receives that message of type $X$ and with an index number of at least $x$  infinitely often.


\smallskip

\noindent \textbf{Arbitrary transient faults.~~}
We consider any violation of the assumptions according to which the system was designed to operate. We refer to these violations and deviations as \emph{arbitrary transient faults} and assume that they can corrupt the system state arbitrarily (while keeping the program code intact). The occurrence of an arbitrary transient fault is rare. Thus, as in~\cite{DBLP:journals/cacm/Dijkstra74,DBLP:books/mit/Dolev2000,DBLP:series/synthesis/2019Altisen}, our model assumes that the last arbitrary transient fault occurs before the system execution starts. Moreover, it leaves the system to start in an arbitrary state. 

%
%
An algorithm is \textit{self-stabilizing} with respect to the task of $LE$, when every (unbounded) execution $R$ of the algorithm reaches within a finite period a suffix $R_{legal} \in LE$ that is legal. That is, Dijkstra~\cite{DBLP:journals/cacm/Dijkstra74} requires that $\forall R:\exists R': R=R' \circ R_{legal} \land R_{legal} \in LE \land |R'| \in \N$, where the operator $\circ$ denotes that $R=R' \circ R''$ concatenates $R'$ with $R''$.

%
%

\subsection{Complexity Measures}
\label{sec:timeComplexity}
The main complexity measure of self-stabilizing algorithms, called \emph{stabilization time}, is the time it takes the system to recover after the occurrence of the last transient fault. 


\smallskip
\noindent
\textbf{Message round-trips and iterations of self-stabilizing algorithms.~~}
\label{sec:messageRoundtrips}
The correctness proof depends on the nodes' ability to exchange messages during the periods of recovery from transient faults. The proposed solution considers communications that follow the pattern of request-reply, \ie $\mathsf{MESSAGE}\text{-}\mathsf{TYPE}$ and $\mathsf{MESSAGE}\text{-}\mathsf{TYPEack}$ messages, as well as $\mathsf{GOSSIP}$ messages for which the algorithm does not send replies. The definitions of our complexity measures use the notion of a message round-trip for the cases of request-reply messages and the term algorithm iteration.
 
We give a detailed definition of \emph{round-trips} as follows. Let $p_i \in \sP$ and $p_j \in \sP \setminus \{p_i\}$. Suppose that immediately after system state $c$, node $p_i$ sends a message $m$ to $p_j$, for which $p_i$ awaits a reply. At system state $c'$, that follows $c$, node $p_j$ receives message $m$ and sends a reply message $r_m$ to $p_i$. Then, at system state $c''$, that follows $c'$, node $p_i$ receives $p_j$'s response, $r_m$. In this case, we say that $p_i$ has completed with $p_j$ a round-trip of message $m$. 

It is well-known that self-stabilizing algorithms cannot terminate their execution and stop sending messages~\cite[Chapter 2.3]{DBLP:books/mit/Dolev2000}. Moreover, their code includes a do-forever loop. Thus, we define a \emph{complete iteration} of a self-stabilizing algorithm. 
Suppose that node $p_i$ sends a gossip message infinitely often to $p_j \in \sP \setminus \{p_i\}$ (regardless of the message payload). Suppose that immediately after the state $c_{begin}$, node $p_i$ takes a step that includes the execution of the first line of the do-forever loop, and immediately after system state $c_{end}$, it holds that: (i) $p_i$ has completed the iteration it has started immediately after $c_{begin}$ (regardless of whether it enters branches), (ii) every request-reply message $m$ that $p_i$ has sent to any non-failing node $p_j \in \sP$ during the iteration (that has started immediately after $c_{begin}$) has completed its round trip, and (iii) it includes the arrival of at least one gossip message from $p_i$ to any non-failing $p_j \in \sP \setminus \{p_i\}$. In this case, we say that $p_i$'s complete iteration (with round-trips) starts at $c_{begin}$ and ends at $c_{end}$. The term \emph{non-failing} refers to a node that was recently working correctly and connected to the network. The term \emph{non-faulty} refers to a node that always works correctly and connected to the network.

\smallskip
\noindent
\textbf{Cost measures: asynchronous cycles and the happened-before relation.~~}
\label{ss:asynchronousCycles}
We say that a system execution is \emph{fair} when every step that is applicable infinitely often is executed infinitely often and fair communication is kept. Since asynchronous systems do not consider the notion of time, we use the term (asynchronous) cycles as an alternative way to measure the period between two system states in a fair execution. The first (asynchronous) cycle (with round-trips) of a fair execution $R=R' \circ R''$ is the shortest prefix $R'$ of $R$, such that each non-failing node executes at least one complete iteration in $R'$, where $\circ$ is the concatenation operator. The second cycle in execution $R$ is the first cycle in execution $R''$, and so on.

\begin{remark}
	\label{ss:first asynchronous cycles}
	For the sake of simple presentation of the correctness proof, when considering fair executions, we assume that any message that arrives in $R$ without being transmitted in $R$ does so within $\bigO(1)$ asynchronous cycles in $R$. 
\end{remark}

Lamport~\cite{DBLP:journals/cacm/Lamport78} defined the happened-before relation as the least strict partial order on events for which: (i) If steps $a, b \in R$ are taken by processor $p_i \in \sP$, $a \rightarrow b$ if $a$ appears in $R$ before $b$. (ii) If step $a$ includes sending a message $m$ that step $b$ receives, then $a \rightarrow b$. Using the happened-before definition, one can create a directed acyclic (possibly infinite) graph $G_R:(V_R,E_R)$, where the set of nodes, $V_R$, represents the set of system states in $R$. Moreover, the set of edges, $E_R$, is given by the happened-before relation. In this paper, we assume that the weight of an edge that is due to cases (i) and (ii) are zero and one, respectively. When there is no guarantee that execution $R$ is fair, we consider the weight of the heaviest directed path between two system state $c,c' \in R$ as the cost measure between $c$ and $c'$.      

\begin{algorithm*}[t!]
\begin{\VCalgSize}	
	\noindent \textbf{Definitions of $\preceq$:\label{ln:preceq}}
For integers $t$ and $t'$: $(\bullet,t) \preceq (\bullet,t') \iff t \leq t'$; For arrays $tab$ and $tab'$ of $(\bullet, integer)$: $tab \preceq tab' \iff \forall p_k \in \sP : tab[k] \preceq tab'[k]$; Also, $a \prec b \equiv a \preceq b \land a \neq b$\;

\smallskip

\noindent \textbf{local variables initialization:\label{ln:varStart}}
	
		$ssn := 0; ts := 0$\tcc*{snapshot, resp., write operation indices} 
$reg := [\bot, \ldots ,\bot]$\tcc*{shared registers ($\bot$ is smaller than any possibly written value)\label{ln:var}} 

	\smallskip

\noindent \textbf{macro}  $\mathrm{merge}(Rec)$ \label{ln:0merge} {
\lFor{$p_k \in \sP$}{$reg[k]\gets \max (\{reg[k]\} \cup \{r[k] \mid r \in Rec\})$\label{ln:0regGetsMaxRegCupMid}}
}

	\smallskip
	
\noindent \textbf{operation}  $\sWrite(v)$ \label{ln:0operationWriteV} \Begin{
	$ts \gets ts+1; reg[i] \gets (v, ts)$; \textbf{let} $lReg:=reg$\label{ln:0tsPlusOne}\;
	\lRepeat{$\mathrm{WRITEack}(\mathnormal{regJ} \succeq lReg)$ received from a majority\label{ln:0waitUntilWRITEackReg}}{$\mathsf{broadcast~} \mathrm{WRITE}(lReg)$\label{ln:0broadcastWRITEreg};}
	
	$\mathrm{merge}(Rec)$ \textbf{where} $Rec$ is the set of $reg$ arrays received at line~\ref{ln:0waitUntilWRITEackReg}\label{ln:0mergeRecWrite}\;
	$\Return()$\label{ln:0writeReturn}\;
}

\noindent \textbf{operation}  $\sSnapshot()$ \label{ln:0operationSnapshot} \Begin{
	\Repeat{$prev = reg$}{
		\textbf{let} $prev := reg$; $ssn \gets ssn + 1$\label{ln:0prevSsnGetsRegSsnPlusOne}\;
		\lRepeat{$\mathrm{SNAPSHOTack}(\bullet, \mathnormal{ssnJ} =ssn)$ received from a majority\label{ln:0waitUntilSNAPSHOTackReg}}{$\mathsf{broadcast} ~\mathrm{SNAPSHOT}(reg, ssn)$\label{ln:0ssnPlusOne};}
		$\mathrm{merge}(Rec)$ \textbf{where} $Rec$ is the set of $reg$ arrays received at line~\ref{ln:0waitUntilSNAPSHOTackReg}\label{ln:0mergeRecSnapshot}\;
	}\label{ln:0prevEqualsReg}
	$\Return(reg)$\label{ln:0snapshotReturn}\;
}

	\smallskip
	
\textbf{upon} message $\mathrm{WRITE}(\mathnormal{regJ})$  \textbf{arrival} \textbf{from} $p_j$ \label{ln:0arrivalWRITE} \Begin{
	\lFor{$p_k \in \sP$}{$reg[k] \gets  \max_{\preceq}(reg[k], \mathnormal{regJ}[k])$\label{ln:0kDotRegGetsMaxPreceqRegJWrite}}
	\textbf{send} $\mathrm{WRITEack}(reg)$ \textbf{to} $p_j$\label{ln:0sendWRITEackReg}\;
}

\smallskip

\textbf{upon} message $\mathrm{SNAPSHOT}(\mathnormal{regJ}, ssn)$ \textbf{arrival} \textbf{from} $p_j$ \label{ln:0arrivalSNAPSHOT} \Begin{
	\lFor{$p_k \in \sP$}{$reg[k] \gets \max_{\preceq}\{ reg[k], \mathnormal{regJ}[k] \}$\label{ln:0kDotRegGetsMaxPreceqRegJSnapshoot}}
	\textbf{send} $\mathrm{SNAPSHOTack}(reg, ssn)$ to $p_j$\label{ln:0sendSNAPSHOTackRegSsn}\;
}
\end{\VCalgSize}

\caption{\label{alg:0disCongif}The non-self-stabilizing and non-blocking algorithm by Delporte-Gallet \etal~\cite{DBLP:journals/tpds/Delporte-Gallet18} that emulates a snapshot object; code for $p_i$}

	
\end{algorithm*}

\section{Background}
\label{sec:back}
For the sake of completeness, we review the solutions of Delporte-Gallet \etal~\cite{DBLP:journals/tpds/Delporte-Gallet18}. 
%

\subsection{The non-blocking algorithm by Delporte-Gallet \etal}
\label{sec:nonBlock}
The non-blocking solution to snapshot object emulation by Delporte-Gallet \etal~\cite[Algorithm~1]{DBLP:journals/tpds/Delporte-Gallet18} allows all write operations to terminate regardless of the invocation patterns of the other write or snapshot operations (as long as the invoking processors do not fail during the operation). However, for the case of snapshot operations, termination is guaranteed only if eventually the system execution reaches a period in which there are no concurrent write operations. 
Algorithm~\ref{alg:0disCongif} presents Delporte-Gallet \etal~\cite[Algorithm~1]{DBLP:journals/tpds/Delporte-Gallet18}. That is, we have changed some of the notation of Delporte-Gallet to fit the presentation style of this paper. 


\smallskip

\noindent \emph{Local variables.~~}
The node state appears in lines~\ref{ln:varStart} to~\ref{ln:var} and automatic variables (which are allocated and deallocated automatically when program flow enters and leaves the variable's scope) are defined using the let keyword, \eg the variable $prev$ (line~\ref{ln:0prevSsnGetsRegSsnPlusOne}). Also, when a message arrives, we use the parameter name $\mathrm{xJ}$ to refer to the arriving value for the message field $x$. Moreover, we denote variable $X$'s value at $p_i$ by $X_i$.

Processor $p_i$ stores an array $reg$ of $|\sP|$ elements (line~\ref{ln:var}), such that the $k$-th entry stores the most recent information about processor $p_k$'s object value and $reg[i]$ stores $p_i$'s actual object value. Every entry is a pair of the form $(v,ts)$, where the field $v$ is a $\nu$-bits object value and $ts$ is an unbounded integer that stores the object timestamp. The values of $ts$ serve as the index of $p_i$'s write operations. Similarly, $p_i$ maintains an index for the snapshot operations, $ssn$ (sequence number). Algorithm~\ref{alg:0disCongif} defines also the relation $\preceq$ that compares $(v,ts)$ and $(v',ts')$ according to the write operation indices (line~\ref{ln:preceq}).

\smallskip

\noindent \textbf{The $\sWrite()$ operation.~~}
Algorithm~\ref{alg:0disCongif}'s $\sWrite()$ operation appears in lines~\ref{ln:0operationWriteV} to~\ref{ln:0writeReturn} (client-side) and lines~\ref{ln:0arrivalWRITE} to~\ref{ln:0sendWRITEackReg} (server-side).
The client-side operation $\sWrite(v)$ stores the pair $(v,ts)$ in $reg[i]$ (line~\ref{ln:0tsPlusOne}), where $p_i$ is the invoking processor and $ts$ is a unique operation index. The primitive $\mathsf{broadcast}$ sends to all the processors in $\sP$ the message $\mathrm{WRITE}$ about $p_i$'s local perception of $reg$'s value. 

Upon the arrival of a $\mathrm{WRITE}$ message to $p_i$ from $p_j$ (line~\ref{ln:0arrivalWRITE}), the server-side code is run. Processor $p_i$ updates $reg$ according to the timestamps of the arriving values (line~\ref{ln:0kDotRegGetsMaxPreceqRegJWrite}). Then, $p_i$ replies to $p_j$ with the message $\mathrm{WRITEack}$ (line~\ref{ln:0sendSNAPSHOTackRegSsn}), which includes $p_i$'s local perception of the system shared registers.

Getting back to the client-side,  $p_i$ repeatedly broadcasts the message $\mathrm{WRITE}$ to all processors in $\sP$ until it receives replies from a majority of them (line~\ref{ln:0broadcastWRITEreg}). Once that happens, it uses the arriving values for keeping $reg$ up-to-date (line~\ref{ln:0mergeRecWrite}).

\smallskip

\noindent \textbf{The $\sSnapshot()$ operation.~~}
Algorithm~\ref{alg:0disCongif}'s $\sSnapshot()$ operation appears in
lines~\ref{ln:0operationSnapshot} to~\ref{ln:0snapshotReturn} (client-side) and lines~\ref{ln:0arrivalSNAPSHOT} to~\ref{ln:0sendSNAPSHOTackRegSsn} (server-side).
Recall that Delporte-Gallet \etal~\cite[Algorithm~1]{DBLP:journals/tpds/Delporte-Gallet18} is non-blocking with respect to the snapshot operations as long as there are no concurrent write operations. Thus, the client-side is written in the form of a repeat-until loop. Processor $p_i$ tries to query the system for the most recent value of the shared registers. The success of such attempts, \ie satisfying the if-statement condition of line~\ref{ln:0prevEqualsReg}, depends on the above assumption. Therefore, before each such broadcast, $p_i$ copies $reg$'s value to $prev$ (line~\ref{ln:0prevSsnGetsRegSsnPlusOne}) and exits the repeat-until loop only when the updated value of $reg$ indicates that there are no concurrent write operations. We note that the absence of concurrent writes implies the success of the atomic snapshot since it considers all previous write operations.

\begin{figure*}[t!]
	\begin{center}
		\includegraphics[page=1,scale=\figScale]{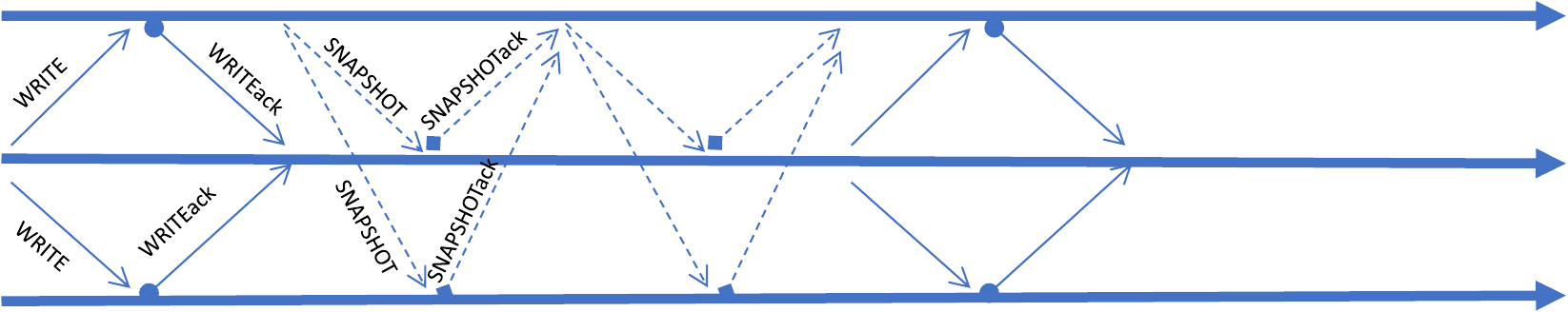}
		~~\\
		~~\\
		\includegraphics[page=3,scale=\figScale]{DrawingsCrop.pdf}
	\end{center}
	\caption{\label{fig:0nonBlockExec}Examples of Algorithm~\ref{alg:0disCongif}'s executions. The upper drawing illustrates a case of a terminating snapshot operation (dashed line arrows) that occurs between two write operations (solid line arrows). The acknowledgments of these messages are arrows that start with circles and squares, respectively. The lower drawing illustrates a case in which every execution of line~\ref{ln:0ssnPlusOne} occurs concurrently with write operations (regardless of whether the algorithm is self-stabilizing or not). Thus, snapshot operations cannot terminate.}
\end{figure*}

Figure~\ref{fig:0nonBlockExec} depicts two examples of Algorithm~\ref{alg:0disCongif}'s execution. The upper drawing illustrates a write operation that is followed by a snapshot operation and then a second write operation. We use this example when comparing with algorithms~\ref{alg:9disCongif},~\ref{alg:disCongif} and~\ref{alg:terminating}. The lower drawing illustrates a case of an unbounded sequence of write operations that disrupts a snapshot operation, which does not terminate for an unbounded period.

\begin{algorithm*}[t!]
\begin{\VCalgSize}

\noindent \textbf{local variables initialization:}
		$ssn,sns, ts := 0$\tcc*{snapshot and write operation indices}
$reg := [\bot, \ldots ,\bot]$\tcc*{shared registers ($\bot$ is smaller than any possibly written value)}
$writePending \gets \bot$\tcc*{stores $p_i$'s write task}
$\textbf{foreach } k,s : \repSnap[k, s] := \bot$\tcc*{stores $p_k$'s snapshot task result for index $s$}


\smallskip

\noindent \textbf{macro}  $\mathrm{merge}(Rec)$ \label{ln:9merge} {
\lFor{$p_k \in \sP$}{$reg[k]\gets \max (\{reg[k]\} \cup \{r[k] \mid r \in Rec\})$\label{ln:9regGetsMaxRegCupMid}}
}

\smallskip

\textbf{do forever} \Begin{\label{ln:9doForever}
\lIf{$(writePending \neq \bot)$}{$\mathrm{baseWrite}(writePending);writePending \gets \bot$\label{ln:backGroundWrite}}
\If{\emph{(there are messages $\mathrm{SNAP}()$ received and not yet processed)\label{ln:unboundedBuffer}}}{
\textbf{let} $\mathrm{SNAP}(source, sn)$ \textbf{be} {the oldest of these messages}\;
$\baseSnapshot(source, sn)$\;
\textbf{wait} \textbf{until} $(\repSnap[source, sn] \neq \bot)$\label{ln:waitUntilReadSnap};
}
}

\smallskip

\noindent \textbf{operation} $\sWrite(v)$ \Begin{
$writePending \gets v$;
\textbf{wait} \textbf{until} $(writePending = \bot); \Return()$\label{ln:preWrite};
}
 
\smallskip

\noindent \textbf{operation} $\sSnapshot()$ \Begin{
$sns \gets sns + 1; \label{ln:9snspp}\mathsf{reliableBroadcast} ~\mathrm{SNAP}(i, sns)$\label{ln:reliableBroadcastout}\;
\textbf{wait} \textbf{until} $(\repSnap[i, sns] \neq \bot); \Return(\repSnap[i, sns])$\label{ln:9repSnapIsns}\;
}

\smallskip

\noindent \textbf{function}  $\mathrm{baseWrite}(v)$  \label{ln:9baseWrite} \Begin{
$ts \gets ts + 1; reg[i] \gets (ts,v)$; \textbf{let} $lReg:=reg$\;
\lRepeat{$\mathrm{WRITEack}(\mathnormal{regJ} \succeq lReg)$ received from a majority\label{ln:9waitUntilWRITEack}}{$\mathsf{broadcast~} \mathrm{WRITE}(lReg)$\label{ln:9broadcastWRITEreg2};}
	$\mathrm{merge}(Rec)$ \textbf{where} $Rec$ is the set of $reg$ arrays received at line~\ref{ln:9waitUntilWRITEack}\label{ln:9mergeRecWrite2}\;
}

\noindent \textbf{function}  $\baseSnapshot(s, t)$ \label{ln:9baseSnapshot}\Begin{
\While{$\repSnap[s, t] = \bot$\label{ln:whileStart}}{
\textbf{let} $prev := reg$; $ssn \gets ssn + 1$\label{ln:prevRegSsnPlusOne}\; 
		\Repeat{$\mathrm{SNAPSHOTack}(\mathnormal{sJ}=s, \mathnormal{tJ}=t, \bullet, \mathnormal{ssnJ}=ssn)$ \emph{received from a majority)}\label{ln:9waitUntilSNAPSHOTackReg2}}{$\mathsf{broadcast}~\mathrm{SNAPSHOT}(s, t, reg, ssn)$\label{ln:9waitUntilSNAPSHOTack};}
$\mathrm{merge}(Rec)$ \textbf{where} $Rec$ is the set of $reg$ arrays received at line~\ref{ln:9waitUntilSNAPSHOTackReg2}\;
\lIf{$prev = reg$}{$\mathsf{reliableBroadcast} ~\mathrm{END}(s, t, prev)$\label{ln:prevReg}}
}
}

\smallskip

\textbf{upon} message $\mathrm{WRITE}(\mathnormal{regJ})$  \textbf{arrival} \textbf{from} $p_j$ \Begin{
\lFor{$p_k \in \sP$}{$reg[k] \gets  \max_{\prec_{sn}}(reg[k], \mathnormal{regJ}[k])$}
\textbf{send} $\mathrm{WRITEack}(reg)$ \textbf{to} $p_j$\;
}

\smallskip

\textbf{upon} message $\mathrm{SNAPSHOT}(s, t, \mathnormal{regJ}, \mathnormal{ssnJ})$ \textbf{arrival} \textbf{from} $p_j$ \Begin{
\lFor{$p_k \in \sP$}{$reg[k] \gets \max_{\prec_{sn}} (reg[k], \mathnormal{regJ}[k])$}
\textbf{send} $\mathrm{SNAPSHOTack}(s, t, reg, \mathnormal{ssnJ})$ \textbf{to} $p_j$\;
}

\smallskip

\textbf{upon} message $\mathrm{END}(s, t, val)$ \textbf{arrival} \textbf{from} $p_j$ \textbf{do} {
$\repSnap[s, t] \gets val$\label{ln:uponEND}\;
}	
	
\caption{\label{alg:9disCongif}The non-self-stabilizing and always-terminating algorithm by Delporte-Gallet \etal~\cite{DBLP:journals/tpds/Delporte-Gallet18} that emulates a snapshot object; code for $p_i$}	
	
\end{\VCalgSize}
	
\end{algorithm*}

\subsection{The always-terminating algorithm by Delporte-Gallet \etal}
\label{sec:alwaysTerm}
Delporte-Gallet \etal~\cite[Algorithm~2]{DBLP:journals/tpds/Delporte-Gallet18} guarantees termination for any invocation pattern of write and snapshot operations, as long as the invoking processors do not fail during these operations. Its advantage over Delporte-Gallet \etal~\cite[Algorithm~1]{DBLP:journals/tpds/Delporte-Gallet18} is that it can deal with an infinite number of concurrent write operations. This is because it guarantees the non-blocking progress criterion for the snapshot operations. We present~\cite[Algorithm~2]{DBLP:journals/tpds/Delporte-Gallet18} in Algorithm~\ref{alg:9disCongif} using the presentation style of this paper. We review Algorithm~\ref{alg:9disCongif} while pointing out some key challenges that exist when considering self-stabilization.

\smallskip

\noindent \textbf{High-level overview.~~}
Delporte-Gallet \etal~\cite[Algorithm~2]{DBLP:journals/tpds/Delporte-Gallet18} uses a job-stealing scheme for allowing rapid termination of snapshot operations. Processor $p_i \in \sP$ starts its $\sSnapshot$ operation by queueing this new task at all processors $p_j \in \sP$. Once $p_j$ receives $p_i$'s new task and when that task reaches the queue front, $p_j$ starts the $\baseSnapshot(s, t)$ procedure, which is similar to Algorithm~\ref{alg:0disCongif}'s $\sSnapshot()$ operation. This joint participation in all snapshot operations makes sure that all processors are aware of all on-going snapshot operations.

This joint awareness allows the system processors to make sure that no write operation can stand in the way of on-going snapshot operations. To that end, the processors wait until the oldest snapshot operation terminates before proceeding with later operations. Specifically, they defer write operations that run concurrently with snapshot operations. This guarantees termination of snapshot operations via the interleaving and synchronization of snapshot and write operations.


\smallskip

\noindent \textbf{Detailed description.~~}
Algorithm~\ref{alg:9disCongif} extends Algorithm~\ref{alg:0disCongif} in the sense that it uses all of Algorithm~\ref{alg:0disCongif}'s variables and two additional ones. These are the second operation index, $sns$, and an array $\repSnap$, which $\sSnapshot()$ operations use. The entry $\repSnap[x,y]$ holds the outcome of $p_x$'s $y$-th snapshot operation, where no explicit bound on the number of invocations of snapshot operations is given. 

In the context of self-stabilization, the use of such unbounded variables is not possible. The reasons are that real-world systems have bounded size memory as well as the fact that a single transient fault can bring any counter to its near overflow value and fill up any finite capacity buffer. We discuss the way around this challenge in Section~\ref{sec:bounded}.       

\smallskip

\noindent \emph{The $\sWrite()$ operation and the $\mathrm{baseWrite}()$ function.~~}
Since $\sWrite(v)$ operations are preemptible, $p_i$ cannot always start immediately to write. Instead, $p_i$ stores $v$ in $writePending_i$ (line~\ref{ln:preWrite}). The algorithm then runs the write operation as a background task (line~\ref{ln:backGroundWrite}) using the $\mathrm{baseWrite}()$ function (lines~\ref{ln:9baseWrite} to~\ref{ln:9mergeRecWrite2}).

\smallskip

\noindent \emph{The $\sSnapshot()$ operation.~~}
A call to $\sSnapshot()$ (line~\ref{ln:9snspp}) causes $p_i$ to reliably broadcast, via the primitive $\mathsf{reliableBroadcast}$, a new $sns$ index in a $\mathrm{SNAP}$ message to all processors in $\sP$. Processor $p_i$ then waits for the task’s completion by placing it as a background task (line~\ref{ln:9repSnapIsns}). We note that 
for our proposed solutions 
we do not assume access to a reliable broadcast mechanism such as $\mathsf{reliableBroadcast}$; see Section~\ref{sec:aussata} for details and an alternative approach that uses safe registers instead of the (more involved) $\mathsf{reliableBroadcast}$ primitive. 

\smallskip

\noindent \emph{The $\baseSnapshot()$ function.~~}
This function essentially follows the $\sSnapshot()$ operation of Algorithm~\ref{alg:0disCongif}. That is, Algorithm~\ref{alg:9disCongif}'s snapshot repeat-until loops iterates until the retrieved $reg$ vector equals to the one that was known prior to the last repeat-until iteration. 
Algorithm~\ref{alg:9disCongif}'s $\mathrm{baseSnapshot()}$ procedure returns after at least one snapshot process has terminated. In detail, processor $p_i$ stores in $\repSnap[s,t]$, via a reliable broadcast of the $\mathrm{END}$ message, the result of the snapshot process (line~\ref{ln:prevReg} and~\ref{ln:uponEND}).

\smallskip

\noindent \emph{Synchronization between the $\mathrm{baseWrite}()$ and $\baseSnapshot()$ functions.~~}
Algorithm~\ref{alg:9disCongif} interleaves the background tasks in a do-forever loop (lines~\ref{ln:backGroundWrite} to~\ref{ln:waitUntilReadSnap}). As long as there is an awaiting write task, processor $p_i$ runs the $\mathrm{baseWrite}()$ function (line~\ref{ln:backGroundWrite}). Also, if there is an awaiting snapshot task, processor $p_i$ selects the oldest task, $(source, sn)$, and uses the $\baseSnapshot(source, sn)$ function. Here, Algorithm~\ref{alg:9disCongif} blocks until $\repSnap[source, sn]$ contains the result of that snapshot task. 

Note that line~\ref{ln:unboundedBuffer} implies that Algorithm~\ref{alg:9disCongif} does not explicitly assume that processor $p_i$ has bounded space for storing $\mathrm{SNAP}$ messages. In the context of self-stabilization, there must be an explicit bound on the size of all memory in use. We discuss how to overcome this challenge in Section~\ref{sec:bounded}.

\begin{figure*}[t!]
	\begin{center}
		\includegraphics[page=4,scale=\figScale]{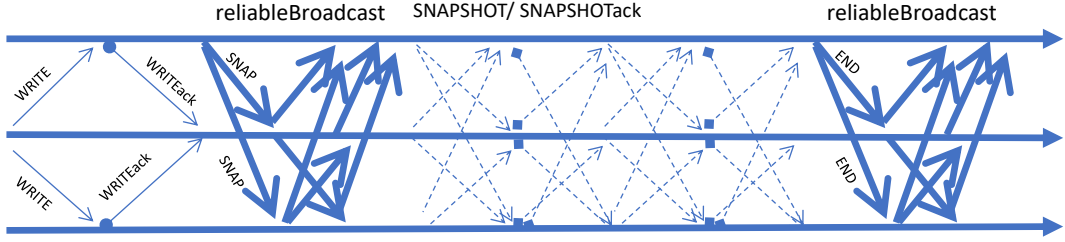}
	\end{center}	
	\caption{\label{fig:9nonBlockExec}Algorithm~\ref{alg:9disCongif}'s execution for the case depicted by the upper drawing of Figure~\ref{fig:0nonBlockExec}. The drawing illustrates a case of a terminating snapshot operation (dashed line arrows) that occurs after (a non-concurrent) write operation (thin solid line arrows). The acknowledgments of these messages are arrows that start with circles and squares, respectively.}
\end{figure*}

Figure~\ref{fig:9nonBlockExec} depicts an example of Algorithm~\ref{alg:9disCongif}'s execution where a write operation is followed by a snapshot operation. Note that each snapshot operation is handled separately and the communication costs of each such operation requires $\bigO(n^2)$ messages.



\section{An Unbounded Self-stabilizing Non-blocking Algorithm}
\label{sec:aussnba}
We propose Algorithm~\ref{alg:disCongif} as an elegant extension of Delporte-Gallet \etal~\cite[Algorithm~1]{DBLP:journals/tpds/Delporte-Gallet18}; 
we have simply added 
the boxed code lines to Algorithm~\ref{alg:disCongif}. Algorithms~\ref{alg:0disCongif} and~\ref{alg:disCongif} differ in their ability to deal with stale information, which can appear in the system after the occurrence of transient faults. 


\subsection{Algorithm description}
\label{sec:def}
Algorithm~\ref{alg:disCongif} considers the case in which any of $p_i$'s operation indices, $ssn_i$ and $ts_i$, is smaller than some other $ssn$ or $ts$ value, say, $ssn_m$, $reg_{i}[i].ts$, $reg_{j}[i].ts$ or $reg_m[i].ts$, where $X_m$ appears in the $X$ field of some in-transit message. 

For the case of corrupted $ssn$ values, $p_i$'s client-side simply ignores arriving message with $ssn$ values that do not match $ssn_i$ (line~\ref{ln:waitUntilSNAPSHOTackReg}). For the sake of clarity of our proposal, we also periodically remove any stored snapshot replies that their $ssn$ fields are not equal to $ssn_i$ (line~\ref{ln:deleteSNAPSHOTack}). 

For the case of corrupted $ts$ values, $p_i$'s do-forever loop makes sure that $ts_i$ is not smaller than $reg_i[i].ts$ (line~\ref{ln:tsGetsMaxTsRegTs}) before gossiping to every processor $p_j \in \sP$ its local copy of $p_j$'s shared register (line~\ref{ln:sendGossip}). Also, upon the arrival of such gossip messages, Algorithm~\ref{alg:disCongif} merges the arriving information with the local one (line~\ref{ln:GOSSIPupdate}). Moreover, when replies from write or snapshot messages arrive to $p_i$, it merges the arriving $ts$ value with the one in $ts_i$ (line~\ref{ln:tsRegMaxTsRegTs}).


Figure~\ref{fig:disCongifExec} depicts an example of Algorithm~\ref{alg:disCongif}'s execution in which a write operation is followed by a snapshot operation. Note that gossip messages do not interfere with write and snapshot operations.  

\begin{figure*}[t!]
	\begin{center}
		\includegraphics[page=2,scale=\figScale]{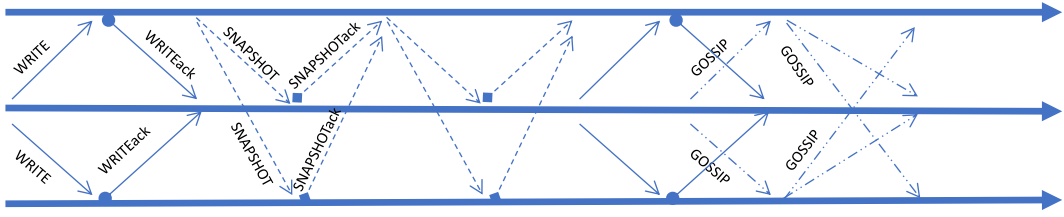}
	\end{center}
	\caption{\label{fig:disCongifExec}{Algorithm~\ref{alg:disCongif}'s execution for the case depicted in the upper drawing of Figure~\ref{fig:0nonBlockExec}. The drawing illustrates a case of a terminating snapshot operation (dashed line arrows) that occurs between two write operations (solid line arrows). The acknowledgments of these messages are arrows that start with circles and squares, respectively. The gossip messages appear in the dashed-double-dot lines.}}
\end{figure*}

\begin{algorithm*}[t!]
\begin{\VCalgSize}	
	\noindent \textbf{Definitions of $\preceq$:}
For integers $t$ and $t'$: $(\bullet,t) \preceq (\bullet,t') \iff t \leq t'$; For arrays $tab$ and $tab'$ of $(\bullet, integer)$: $tab \preceq tab' \iff \forall p_k \in \sP : tab[k] \preceq tab'[k]$; Also, $a \prec b \equiv a \preceq b \land a \neq b$\label{ln:def3}\;   

\smallskip

\noindent \textbf{local variables initialization (optional in the context of self-stabilization):}

	$ssn := 0; ts := 0$\tcc*{snapshout, resp., write operation indices} 
	 ~$reg := [\bot, \ldots ,\bot]$\tcc*{shared registers ($\bot$ is smaller than any possibly written value)}


	\smallskip

\noindent \textbf{macro}  $\mathrm{merge}(Rec)$ \label{ln:merge} \Begin{
\fbox{$ts\gets \max (\{ts, reg[i].ts\} \cup \{r[i].ts \mid r \in Rec\})$\label{ln:tsRegMaxTsRegTs};}

\lFor{$p_k \in \sP$}{$reg[k]\gets \max (\{reg[k]\} \cup \{r[k] \mid r \in Rec\})$\label{ln:regGetsMaxRegCupMid}}
}

	\smallskip

\noindent \textbf{do forever} \Begin{\label{ln:disCongifDoForever}
	\fbox{\lForEach{$ssn'\neq ssn$}{\textbf{delete} $\mathrm{SNAPSHOTack}(-, ssn')$\label{ln:deleteSNAPSHOTack}}}
	
	\fbox{$ts\gets \max \{ts, reg[i].ts\}$\label{ln:tsGetsMaxTsRegTs};}
	
	\fbox{\lFor{$p_k \in \sP:k\neq i$}{$\mathbf{send}~ \mathrm{GOSSIP}(reg[k])~ \mathsf{to}~ p_k$\label{ln:sendGossip}}}
}		
	\smallskip
	
\noindent \textbf{operation}  $\sWrite(v)$ \label{ln:operationWriteV} \Begin{
	$ts \gets ts+1; reg[i] \gets (v, ts)$; \textbf{let} $lReg:=reg$\label{ln:tsPlusOne}\;
	\lRepeat{$\mathrm{WRITEack}(\mathnormal{regJ} \succeq lReg)$ received from a majority\label{ln:waitUntilWRITEackReg}}{$\mathsf{broadcast~} \mathrm{WRITE}(lReg)$\label{ln:broadcastWRITEreg};}
	
	$\mathrm{merge}(Rec)$ \textbf{where} $Rec$ is the set of $reg$ arrays received at line~\ref{ln:waitUntilWRITEackReg}\label{ln:mergeRecWrite}\;
	$\Return()$\label{ln:writeReturn}\;
}

\noindent \textbf{operation}  $\sSnapshot()$ \label{ln:operationSnapshot} \Begin{
	\Repeat{$prev = reg$}{
		\textbf{let} $prev := reg$; $ssn \gets ssn + 1$\label{ln:prevSsnGetsRegSsnPlusOne}\;
		\lRepeat{$\mathrm{SNAPSHOTack}(\bullet, \mathnormal{ssnJ} =ssn)$ received from a majority\label{ln:waitUntilSNAPSHOTackReg}}{$\mathsf{broadcast} ~\mathrm{SNAPSHOT}(reg, ssn)$\label{ln:ssnPlusOne};}
		$\mathrm{merge}(Rec)$ \textbf{where} $Rec$ is the set of $reg$ arrays received at line~\ref{ln:waitUntilSNAPSHOTackReg}\label{ln:mergeRecSnapshot}\;
	}
	$\Return(reg)$\label{ln:snapshotReturn}\;
}

\textbf{upon} message $\mathrm{GOSSIP}(\mathnormal{regJ})$ \textbf{arrival} \textbf{from} $p_j$ \Begin{ 
\fbox{$reg[i]\gets \max \{reg[i],\mathnormal{regJ}\};ts\gets \max \{ts, reg[i].ts\}$\label{ln:GOSSIPupdate};}}

	\smallskip
	
\textbf{upon} message $\mathrm{WRITE}(\mathnormal{regJ})$  \textbf{arrival} \textbf{from} $p_j$ \label{ln:arrivalWRITE} \Begin{
	\lFor{$p_k \in \sP$}{$reg[k] \gets  \max_{\preceq}(reg[k], \mathnormal{regJ}[k])$\label{ln:kDotRegGetsMaxPreceqRegJWrite}}
	\textbf{send} $\mathrm{WRITEack}(reg)$ \textbf{to} $p_j$\label{ln:sendWRITEackReg}\;
}

\textbf{upon} message $\mathrm{SNAPSHOT}(\mathnormal{regJ}, ssn)$ \textbf{arrival} \textbf{from} $p_j$ \label{ln:arrivalSNAPSHOT} \Begin{
	\lFor{$p_k \in \sP$}{$reg[k] \gets \max_{\preceq}\{ reg[k], \mathnormal{regJ}[k] \}$\label{ln:kDotRegGetsMaxPreceqRegJSnapshoot}}
	\textbf{send} $\mathrm{SNAPSHOTack}(reg, ssn)$ to $p_j$\label{ln:sendSNAPSHOTackRegSsn}\;
}
\end{\VCalgSize}

\caption{\label{alg:disCongif}Self-stabilizing algorithm for non-blocking snapshot object; code for $p_i$. The boxed code lines mark the added code to Algorithm~\ref{alg:0disCongif}.}	
\end{algorithm*}

\subsection{Correctness}
\label{sec:cor1}
Although the extension performed to Algorithm~\ref{alg:0disCongif} for obtaining Algorithm~\ref{alg:disCongif} includes only few changes, proving convergence and closure for Algorithm~\ref{alg:disCongif} is not straightforward. We proceed with the details.

\smallskip

\noindent \textbf{Notation and definitions}
Definition~\ref{def:safeConfig} refers to $p_i$'s timestamps and snapshot sequence numbers, where $p_i \in \sP$. The set of $p_i$'s timestamps includes $ts_{i}$, $reg_{i}[i].ts$, $reg_{j}[i].ts$ and the value of $reg_m[i].ts$ in the payload of any message $m$ that is in transient in the system. The set of $p_i$'s snapshot sequence numbers includes $ssn_{i}$ and the value of $ssn_m$ in the payload of any $\mathrm{SNAPSHOT}$ and $\mathrm{SNAPSHOTack}$ message $m$ from, and respectively, to $p_i$ that is in transient in the system.

\begin{definition}[Algorithm~\ref{alg:disCongif}'s consistent operation indices]
\label{def:safeConfig}
(i) Let $c$ be a system state in which $ts_i$ is greater than or equal to any $p_i$'s timestamp values in the variables and fields related to $ts$. We say that the $ts$'s timestamps are consistent in $c$.
(ii) Let $c$ be a system state in which $ssn_i$ is greater than or equal to any $p_i$'s snapshot sequence numbers in the variables and fields related to $ssn$. We say that the $ssn$'s snapshot sequence numbers are consistent in $c$.	
\end{definition}

Theorems~\ref{thm:mainConvergence} and~\ref{thm:closure1} show the properties required by the self-stabilization design criteria.  

\begin{theorem}[\textbf{Algorithm~\ref{alg:disCongif}'s convergence}]
	\label{thm:mainConvergence}
	Let $R$ be a fair and unbounded execution of Algorithm~\ref{alg:disCongif}. Within $\bigO(1)$ asynchronous cycles in $R$, the system reaches a state $c \in R$ in which $ts$' timestamps and $ssn$'s snapshot sequence numbers are consistent in $c$.
\end{theorem}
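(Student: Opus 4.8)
The goal is to show that within $\bigO(1)$ asynchronous cycles every copy of the timestamp/sequence-number data converges so that $ts_i$ (resp.\ $ssn_i$) dominates all the corresponding values scattered across $p_i$'s own variables, the other nodes' views of $p_i$, and the in-transit messages. The plan is to treat the $ssn$ part and the $ts$ part separately, since they are cleaned up by different mechanisms: $ssn$ by the filtering at line~\ref{ln:deleteSNAPSHOTack} and the ignoring of mismatched acks at line~\ref{ln:waitUntilSNAPSHOTackReg}, and $ts$ by the max-updates at lines~\ref{ln:tsGetsMaxTsRegTs}, \ref{ln:tsRegMaxTsRegTs}, \ref{ln:GOSSIPupdate} together with the gossip broadcast at line~\ref{ln:sendGossip}.

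First I would handle the $ssn$ consistency. Fix $p_i$. In the arbitrary starting state there may be a stale $\mathrm{SNAPSHOTack}(-,ssn')$ stored at $p_i$ with $ssn'\neq ssn_i$, or stale $\mathrm{SNAPSHOT}$ / $\mathrm{SNAPSHOTack}$ messages in transit carrying arbitrary $ssn$ values. The key observations are: (a) $p_i$ only ever increments $ssn_i$ (line~\ref{ln:prevSsnGetsRegSsnPlusOne}), never copies an incoming value into it, so no in-transit message can raise $ssn_i$; (b) any $\mathrm{SNAPSHOT}(reg,ssn)$ message currently in a channel either was placed there by the environment (a transient fault) — of which there are boundedly many, all flushed within $\bigO(1)$ cycles by Remark~\ref{ss:first asynchronous cycles} — or was broadcast by $p_i$ at line~\ref{ln:ssnPlusOne} during an ongoing snapshot, in which case its $ssn$ field equals the $ssn_i$ value at send time, hence is $\le$ the current $ssn_i$; (c) the stored stale acks at $p_i$ are deleted within one complete iteration of the do-forever loop by line~\ref{ln:deleteSNAPSHOTack}. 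So after $\bigO(1)$ cycles: the pre-existing garbage channel messages are gone, the stored garbage acks are gone, and any $\mathrm{SNAPSHOTack}(reg,ssn)$ newly produced by a server (line~\ref{ln:sendSNAPSHOTackRegSsn}) echoes an $ssn$ that originated from some sender's $ssn_i$, which is bounded by that sender's current value. A short argument that no sender's $ssn$ value ever exceeds what it legitimately reached (again using that $ssn$ is only incremented locally) closes case (ii).

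Next, the $ts$ consistency, which I expect to be the main obstacle, because $ts$ \emph{is} updated from incoming data (unlike $ssn$), so I cannot simply say "only $p_i$ touches it." The structure I would use: the relevant values for $p_i$ are $ts_i$, $reg_i[i].ts$, $reg_j[i].ts$ for $j\neq i$, and $reg_m[i].ts$ in any in-transit $m$. Step one: pre-existing transient garbage in channels flushes in $\bigO(1)$ cycles (Remark~\ref{ss:first asynchronous cycles}); moreover $p_i$'s do-forever loop does $ts_i\gets\max\{ts_i,reg_i[i].ts\}$ every iteration (line~\ref{ln:tsGetsMaxTsRegTs}), so within one cycle $ts_i \ge reg_i[i].ts$ and this invariant is preserved forever after (every place $reg_i[i]$ is written — line~\ref{ln:tsPlusOne} and the $\mathrm{GOSSIP}$ handler line~\ref{ln:GOSSIPupdate} — either advances $ts_i$ in lockstep or is itself guarded by a $ts_i\gets\max$). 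Step two: show $ts_i$ dominates the $reg_j[i].ts$ held at other nodes. Here I use that $p_i$ gossips $reg_i[k]$ to $p_k$ each iteration (line~\ref{ln:sendGossip}), but that propagates $p_i$'s view outward, not inward; what I actually need is that whatever large value sits in $reg_j[i].ts$ must have a "source" that eventually reaches $p_i$. The cleanest route is a potential/extremal argument: let $T$ be the maximum over all of $p_i$'s timestamp values in the (post-flush) state; this maximum is finite, is not increased by any step except a genuine $\sWrite$ by $p_i$ (which advances $ts_i$ first), and $ts_i$ is non-decreasing and bounded below by $reg_i[i].ts$. Any node $p_j$ holding $reg_j[i].ts = T$ will, within $\bigO(1)$ cycles, either send that value to $p_i$ in a $\mathrm{WRITEack}$/$\mathrm{SNAPSHOTack}$ reply (during some operation of $p_i$) or — and this is the case that needs the gossip — the value stays put at $p_j$ while $p_i$'s outgoing gossip $\mathrm{GOSSIP}(reg_i[j])$ keeps $p_j$'s view of $p_j$ fresh but does nothing for slot $i$. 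I would resolve this by noting that the $\mathrm{merge}$ at lines~\ref{ln:mergeRecWrite},\ref{ln:mergeRecSnapshot} folds the reply arrays (which each server fills with its whole $reg$, line~\ref{ln:sendWRITEackReg}) into $ts_i$ via line~\ref{ln:tsRegMaxTsRegTs}, so as soon as $p_i$ completes one operation after a server with the large value has replied, $ts_i \ge T$; combined with the flushing of channel garbage and the fact that $T$ cannot grow meanwhile (no new operation raises slot $i$ past $ts_i+1$, and $ts_i$ is the one doing the chasing), this gives consistency in $\bigO(1)$ cycles. The delicate point to nail down is exactly this "$T$ does not outrun $ts_i$" monotonicity together with the accounting that a bounded number of cycles suffices for one full operation round-trip of $p_i$ to occur — which is exactly what Corollary~\ref{thm:basicCorollary}(3) and the definition of asynchronous cycle provide. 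Finally I would assemble the two parts: after $\max$ of the two $\bigO(1)$ bounds, both invariants of Definition~\ref{def:safeConfig} hold, and a one-line check that every subsequent step preserves them gives the claimed state $c$.
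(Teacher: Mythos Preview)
Your decomposition into separate $ssn$ and $ts$ arguments, and your treatment of the $ssn$ part, match the paper's approach (the paper packages these as two lemmas, with the $ssn$ lemma proved via non-decreasingness of $ssn_i$, flushing of pre-existing channel garbage, and the fact that servers only echo back the $ssn$ they received).

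The $ts$ part, however, has a genuine gap. You correctly isolate the hard case: some remote $p_j$ holds $reg_j[i].ts = T$ while $ts_i < T$. You observe that $p_i$'s \emph{outgoing} gossip $\mathrm{GOSSIP}(reg_i[j])$ ``does nothing for slot $i$'', and then try to close the case via the $\mathrm{merge}$ performed inside a $\sWrite$ or $\sSnapshot$ operation of $p_i$. That route does not work, for two reasons. First, Algorithm~\ref{alg:disCongif}'s do-forever loop (lines~\ref{ln:disCongifDoForever}--\ref{ln:sendGossip}) never itself invokes an operation; if the environment never calls $\sWrite$ or $\sSnapshot$ at $p_i$, then $p_i$ never reaches line~\ref{ln:mergeRecWrite} or~\ref{ln:mergeRecSnapshot} and never executes $\mathrm{merge}$. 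Second, even when $p_i$ does perform an operation, the quorum service (Corollary~\ref{thm:basicCorollary}) only guarantees replies from \emph{some} majority, which need not include the particular $p_j$ that holds $T$.

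The mechanism you are missing is the gossip in the \emph{other} direction. When $p_j$ iterates its own do-forever loop, line~\ref{ln:sendGossip} (executed at $p_j$) sends $\mathrm{GOSSIP}(reg_j[i])$ to $p_i$; on arrival, line~\ref{ln:GOSSIPupdate} at $p_i$ performs $reg_i[i]\gets\max\{reg_i[i],\mathnormal{regJ}\}$ followed by $ts_i\gets\max\{ts_i,reg_i[i].ts\}$. Hence within one complete iteration of $p_j$ (plus delivery, which is part of the definition of an asynchronous cycle), $ts_i$ absorbs $reg_j[i].ts$ directly, with no operation of $p_i$ required. Combined with the monotonicity of all $ts$ fields and the $\bigO(1)$-cycle flushing of pre-existing channel contents (Remark~\ref{ss:first asynchronous cycles}), this closes the $ts$ case. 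This is essentially the route the paper takes: it establishes monotonicity of all the relevant sequences, shows $ts_i\ge reg_i[i].ts$ after one loop iteration via line~\ref{ln:tsGetsMaxTsRegTs}, and then bounds in-transit $reg[i].ts$ values by tracing each message back to the state at its sender when it was sent --- with the $\mathrm{GOSSIP}$ event explicitly among the cases considered. Once you use the incoming gossip, your extremal-$T$ potential argument becomes unnecessary.
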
 
\begin{proof} The proof of the theorem follows by Lemmas~\ref{thm:eventuallyAscendingSent} and~\ref{thm:sequenceNumberRecovery}. The operation $\sWrite(v)$ (line~\ref{ln:operationWriteV}) starts by changing the state of the invoking processor, \ie incrementing $ts$ (line~\ref{ln:tsPlusOne}). Lemma~\ref{thm:eventuallyAscendingSent} shows the conditions that holds immediately before that increment occurs.   
	
\begin{lemma}[\textbf{Timestamp convergence}]
	\label{thm:eventuallyAscendingSent}
   Let $R$ be an unbounded fair execution of Algorithm~\ref{alg:disCongif}. Within $\bigO(1)$ asynchronous cycles in $R$, the system reaches a state $c \in R$ in which the value of $ts_i$ is greater than or equal to any $p_i$'s timestamp value. Moreover, suppose that node $p_i$ takes a step immediately after $c$ that includes the execution of line~\ref{ln:tsPlusOne}. Then in $c$, it holds that $ts_{i}=reg_{i}[i].ts=reg_{j}[i].ts$ as well as for every messages $m\in \mathit{channel}_{i,j},\mathit{channel}_{j,i}$ that is in transit from $p_i$ to $p_j$ or $p_j$ to $p_i$ it holds that $m.reg[i].ts=ts_{i}$.
\end{lemma} 

\begin{proof}
	Claims~\ref{thm:neverRemove},~\ref{thm:neverRemoveAgain},~\ref{thm:GOSSIPregJ} and~\ref{thm:tsititTR} prove the lemma. Claim~\ref{thm:neverRemove} denotes by $X_{i,\ell}$ the $\ell$-th value stored in $X_i$ during $R$, where $\ell \in \mathbb{N}$.
	\begin{claim}
		\label{thm:neverRemove}
		The sequences $ts_{i,\ell}$, $reg_{i,\ell}[i].ts$, $reg_{j,\ell}[i].ts$, $reg_{i,\ell}[i]$ and $reg_{j,\ell}[i]$ are non-decreasing.
	\end{claim}
	\begin{claimProof}
	We note that Algorithm~\ref{alg:disCongif} does only the following actions on $ts$ and $reg$ fields: increment (line~\ref{ln:tsPlusOne}) and merge using the $\max$ function (lines~\ref{ln:tsRegMaxTsRegTs},~\ref{ln:regGetsMaxRegCupMid},~\ref{ln:tsGetsMaxTsRegTs},~\ref{ln:mergeRecSnapshot},~\ref{ln:GOSSIPupdate},~\ref{ln:kDotRegGetsMaxPreceqRegJWrite} and~\ref{ln:kDotRegGetsMaxPreceqRegJSnapshoot}). That is, there are no assignments. Thus, the claim is true, because the value of these fields is never decremented during $R$.
	\end{claimProof}
		
	\begin{claim}
		\label{thm:neverRemoveAgain}
	 Within $\bigO(1)$ asynchronous cycles in $R$, $ts_{i} \geq reg_{i}[i].ts$.
	\end{claim}
	\begin{claimProof}
		Since $R$ is unbounded, it holds that node $p_j \in \sP$ calls lines~\ref{ln:tsGetsMaxTsRegTs},~\ref{ln:sendGossip} and~\ref{ln:GOSSIPupdate} for an unbounded number of times during $R$. Recall the line numbers that may change the value of $ts_{i}$ and $reg_{i}[i].ts$, cf. the proof of Claim~\ref{thm:neverRemove}. Note that only line~\ref{ln:tsPlusOne} change the value of $ts_{i}$, via an increment (thus the claim states an inequality rather than an equality) whereas lines~\ref{ln:tsRegMaxTsRegTs},~\ref{ln:tsGetsMaxTsRegTs} and~\ref{ln:GOSSIPupdate} update $ts_i$ and $reg_{i}[i].ts$ by taking the maximum value of $ts_i$ and $reg_{i}[i].ts$. The rest of the proof is implied by Claim~\ref{thm:neverRemove}, and the fact that $p_i$ executes lines~\ref{ln:tsGetsMaxTsRegTs},~\ref{ln:sendGossip} and~\ref{ln:GOSSIPupdate} at least once in every $\bigO(1)$ asynchronous cycles.
	\end{claimProof}
	
	Algorithm~\ref{alg:disCongif} sends $\mathrm{GOSSIP}$ messages in line~\ref{ln:sendGossip}, request messages in lines~\ref{ln:broadcastWRITEreg} and~\ref{ln:ssnPlusOne} as well as replies in lines~\ref{ln:sendWRITEackReg} and~\ref{ln:sendSNAPSHOTackRegSsn}. Claim~\ref{thm:GOSSIPregJ}'s proof considers lines~\ref{ln:broadcastWRITEreg} and~\ref{ln:ssnPlusOne} in which $p_i$ sends a request message to $p_j$, whereas Claim~\ref{thm:tsititTR}'s proof considers lines~\ref{ln:sendGossip},~\ref{ln:sendWRITEackReg} and~\ref{ln:sendSNAPSHOTackRegSsn} in which $p_j$ replies or gossips to $p_i$.

\begin{claim}
		\label{thm:GOSSIPregJ}
	 Let $m\in \mathit{channel}_{i,j}$ be a message on transit from $p_i$ to $p_j$ (during the first asynchronous cycles of $R$) and $reg_m$ the value of the $reg$ filed in $m$, where $p_i,p_j \in \sP$ are non-failing nodes. Within $\bigO(1)$ asynchronous cycles in $R$, $reg_{i}[i].ts \geq reg_{m}[i].ts$ and $reg_{i}[i].ts \geq  \mathnormal{regJ}[i].ts$ whenever $p_j$ raises the events $\mathrm{GOSSIP}(\mathnormal{regJ})$, $\mathrm{WRITE}(\mathnormal{regJ})$ or $\mathrm{SNAPSHOT}(\mathnormal{regJ},\bullet)$. 
	 
	\end{claim}
	\begin{claimProof}
		Suppose during the first asynchronous cycles of $R$, node $p_i$ indeed sends message $m$, \ie $m$ does not appear in $R$'s starting system state. Let $a_{k} \in R$ be the first step in $R$ in which $p_i$ calls line~\ref{ln:sendGossip},~\ref{ln:broadcastWRITEreg} or~\ref{ln:ssnPlusOne} and for which there is a step $a_{depart,k} \in R$, which appears in $R$ after $a_{k}$ and in which message $m$ is sent (in a packet by the end-to-end or quorum protocol). Note that the value of $reg_{m}[i].ts$ in the message payload is defined by the value of $reg_{i}[i].ts$ in the system state that immediately precedes $a_{k}$. The rest of the proof relies on the fact that until $m$ arrives to $p_j$, the invariant $reg_{i}[i].ts \geq reg_{m}[i].ts$ holds (due to Claim~\ref{thm:neverRemove}).
		
		
		 Let $a_{arrival,k} \in R$ be the first step that appears after $a_{depart,k}$ in $R$ in which the node at $p_j$ delivers the packet that $a_{depart,k}$ transmits the message $m$ (if there are several such packets, consider the last to arrive). By the assumption that $R$ is unbounded and fair as well as the definition of asynchronous cycles, step $a_{arrival,k}$ indeed appears in $R$ within $\bigO(1)$ asynchronous cycles. During $a_{arrival,k}$, node $p_j$ raises the message delivery event $\mathrm{GOSSIP}(\mathnormal{regJ})$ (when $a_{k}$ considers line~\ref{ln:GOSSIPupdate}), $\mathrm{WRITE}(\mathnormal{regJ})$ (when $a_{k}$ considers line~\ref{ln:arrivalWRITE}) or 		 
		 $\mathrm{SNAPSHOT}(\mathnormal{regJ}, ssn)$ (when $a_{k}$ considers line~\ref{ln:arrivalSNAPSHOT}), such that $reg_{i}[i].ts \geq reg_{m}[i].ts = \mathnormal{regJ}[i].ts$.
		 
		 Suppose that step $a_{k}$ does not appear in $R$, \ie $m$ appears in $R$'s starting system state. By the definition of asynchronous cycles with round-trips (Remark~\ref{ss:first asynchronous cycles}), within $\bigO(1)$ asynchronous cycles, all messages in transit to $p_j$ arrive (or leave the communication channel). Thus, within $\bigO(1)$ asynchronous cycles, the system start a suffix, $R'$, of $R$ in which all delivered messages during $R'$ were indeed sent during $R$. Thus, the above proof holds with respect to messages received during $R'$ that were sent during $R$. 
	\end{claimProof}
	
	
	\begin{claim}
	\label{thm:tsititTR}
	Let $m \in \mathit{channel}_{j,k}$ be a message on transit from $p_j$ to $p_k$ (during the first asynchronous cycles of $R$) and $reg_{m}$ the value of the $reg$ filed in $m$, where $p_i,p_j,p_k \in \sP$ are non-failing nodes and $i=k$ may or may not hold. Within $\bigO(1)$ asynchronous cycles, $reg_{j}[i].ts \geq reg_{m}[i].ts$ and $reg_{i}[i].ts \geq  \mathnormal{regJ}[i].ts$ whenever $p_k$ raises the events $\mathrm{GOSSIP}(\mathnormal{regJ})$, $\mathrm{WRITEack}(\mathnormal{regJ})$ or $\mathrm{SNAPSHOTack}(\mathnormal{regJ},\bullet)$. 
	\end{claim}
	\begin{claimProof}
	Suppose during the first asynchronous cycles of $R$, node $p_k$ indeed sends message $m$, \ie $m$ does not appear in $R$'s starting system state. Let $a_{k} \in R$ be the first step in $R$ in which $p_k$ calls line~\ref{ln:sendGossip},~\ref{ln:sendWRITEackReg} or~\ref{ln:sendSNAPSHOTackRegSsn} and for which there is a step $a_{depart,k} \in R$, which appears in $R$ after $a_{k}$. Note that the value of $reg_{m}[i].ts$ in the message payload is defined by the value of $reg_{i}[i].ts$ in the system state that immediately precedes $a_{k}$. The rest of the proof relies on the fact that until $m$ arrives to $p_j$, the invariant $reg_{i}[i].ts \geq reg_{m}[i].ts$ holds (due to Claim~\ref{thm:neverRemove}). 

	Let $a_{arrival,k} \in R$ be the first step that appears after $a_{depart,k}$ in $R$, if there is any such step, in which the node at $p_j$ delivers the packet that $a_{depart,k}$ transmits the message $m$ (if there are several such packets, consider that the last to arrive). By the assumption that $R$ is fair and the definition of asynchronous cycles, step $a_{arrival,k}$ appears in $R$ within $\bigO(1)$ asynchronous cycles. During $a_{arrival,k}$, node $p_j$ raises the message delivery event $\mathrm{GOSSIP}(\mathnormal{regJ})$ (when $a_{k}$ considers line~\ref{ln:sendGossip}) 
	 $\mathrm{WRITEack}(\mathnormal{regJ})$ (when $a_{k}$ considers line~\ref{ln:sendWRITEackReg}) or $\mathrm{SNAPSHOTack}(\mathnormal{regJ},\bullet)$ (when $a_{k}$ considers line~\ref{ln:sendSNAPSHOTackRegSsn}), such that $reg_{i}[i].ts \geq reg_{m}[i].ts = \mathnormal{regJ}[i].ts$.		
	
For the case in which step $a_{k}$ does not appear in $R$, the proof follows the same arguments that appear in the proof of Claim~\ref{thm:GOSSIPregJ}.
\end{claimProof}

\noindent This completes the proof of the lemma.
\end{proof}

\begin{lemma}[\textbf{Sequence number convergence}]
	\label{thm:sequenceNumberRecovery}
	Let $R$ be a fair and unbounded execution of Algorithm~\ref{alg:disCongif}. Within $\bigO(1)$ asynchronous cycles in $R$, the system reaches a state $c_x \in R$ in which the value of $ssn_i$ is greater than or equal to any $p_i$'s snapshot sequence number.
\end{lemma} 

\begin{proof}
		Claims~\ref{thm:neverRemoveS},~\ref{thm:GOSSIPregJS} and~\ref{thm:tsititTRS} prove the lemma.	
		\begin{claim}
		\label{thm:neverRemoveS}
		The sequence $ssn_{i,\ell}$ is non-decreasing.
	\end{claim}
	\begin{claimProof}
		Algorithm~\ref{alg:disCongif} only increments (line~\ref{ln:prevSsnGetsRegSsnPlusOne}), and assigns (lines~\ref{ln:ssnPlusOne} and~\ref{ln:sendSNAPSHOTackRegSsn}) $ssn$ values. Thus, the claim is true, because the value of this field is never decremented during $R$.
	\end{claimProof}
	
	\smallskip
	
	\noindent The proofs of Claims~\ref{thm:GOSSIPregJS} and~\ref{thm:tsititTRS} are followed by similar arguments to the ones that appear in the proofs of Claims~\ref{thm:GOSSIPregJ} and~\ref{thm:tsititTR}.  
	
		\begin{claim}
		\label{thm:GOSSIPregJS}
		Let $m\in \mathit{channel}_{i,j}$ be a $\mathrm{SNAPSHOT}$ message on transit from $p_i$ to $p_j$ (during the first asynchronous cycles of $R$) that includes the filed $ssn$ with the value of $ssn_m$. Within $\bigO(1)$ asynchronous cycles, $ssn_{i} \geq ssn_{m}$ including when $p_j$ raises the event $\mathrm{SNAPSHOT}(\mathnormal{regJ}, ssn_{m})$.
	\end{claim}

	\begin{claim}
	\label{thm:tsititTRS}
	Let $m \in \mathit{channel}_{j,i}$ be a $\mathrm{SNAPSHOTack}$ message on transit from $p_j$ to $p_i$ (during the first asynchronous cycles of $R$) and $ssn_{m}$ the value of the $reg$ filed in $m$. Within $\bigO(1)$ asynchronous cycles, $ssn_{i} \geq ssn_{m}$ including when $p_j$ raises the event $\mathrm{SNAPSHOTack}(\mathnormal{regJ}, ssn_m)$. 
	\end{claim}
\noindent This completes the proof of the lemma, which completes the proof of the theorem.
\end{proof}
\end{proof}

\smallskip

Theorem~\ref{thm:closure1} shows Algorithm~\ref{alg:disCongif}'s termination and linearization properties. It also bounds the cost of $\sWrite()$ and $\sSnapshot()$ operations. Since termination is not always guaranteed, the latter bound considers the case in which the execution of Algorithm~\ref{alg:disCongif} includes no $\sWrite()$ operations.

\begin{theorem}[\textbf{Algorithm~\ref{alg:disCongif}'s termination and linearization}]
	\label{thm:closure1}
	Let  $R$ be an execution of Algorithm~\ref{alg:disCongif} that starts in system state $c$, in which the timestamps and snapshot sequence numbers are consistent (Definition~\ref{def:safeConfig}). Execution $R$ is legal with respect to the task of emulating snapshot objects. Moreover, every $\sWrite()$ operation has a cost measure of two (Section~\ref{ss:asynchronousCycles}). Furthermore, after the occurrence of the last $\sWrite()$ operation, the cost measure of a $\sSnapshot()$ operation is four. 
\end{theorem} 

\begin{proof}
We start the proof by observing the differences between Algorithms~\ref{alg:0disCongif} and~\ref{alg:disCongif}.
%
%
Note that Algorithms~\ref{alg:0disCongif} and~\ref{alg:disCongif} use the same variables. Any message that Algorithm~\ref{alg:0disCongif} sends, also Algorithm~\ref{alg:disCongif} sends. The only exception are gossip messages: Algorithm~\ref{alg:disCongif} sends gossip messages, while Algorithm~\ref{alg:0disCongif} does not.
The two algorithms differ in line~\ref{ln:tsRegMaxTsRegTs}, lines~\ref{ln:deleteSNAPSHOTack} to~\ref{ln:sendGossip} and line~\ref{ln:GOSSIPupdate}. 

The next step in the proof is to show that during $R$, any step that includes the execution of line~\ref{ln:GOSSIPupdate} does not change the state of the calling processor. This is due to the fact the every timestamp uniquely couples an object value (line~\ref{ln:tsGetsMaxTsRegTs}) and that timestamps are consistent in every system state throughout $R$ (Lemma~\ref{thm:eventuallyAscendingSent}).

The rest of the proof considers $Alg_{noGOSSIP}$ that is obtained from the code of Algorithm~\ref{alg:disCongif} by the removal of lines~\ref{ln:sendGossip} and~\ref{ln:GOSSIPupdate}, in which the gossip messages are sent and received, respectively. We use this definition to show that $Alg_{noGOSSIP}$ simulates  Algorithm~\ref{alg:0disCongif}. This means that from the perspective of its external behavior (\ie its requests, replies and failure events), any trace of $Alg_{noGOSSIP}$ has a trace of Algorithm~\ref{alg:0disCongif} (as long as indeed the starting system state, $c$, encodes consistent timestamps and snapshot sequence numbers). Since Algorithm~\ref{alg:0disCongif} satisfies the task of emulating snapshot objects, it holds that $Alg_{noGOSSIP}$ also satisfies the task. This implies that Algorithm~\ref{alg:disCongif} satisfies the task as well.

Recall the fact the every timestamp uniquely couples an object value (line~\ref{ln:tsGetsMaxTsRegTs}) as well as that timestamps and snapshot sequence numbers are consistent in every system state throughout $R$ (Lemma~\ref{thm:eventuallyAscendingSent}). These facts imply that also line~\ref{ln:tsRegMaxTsRegTs}, and lines~\ref{ln:deleteSNAPSHOTack} to~\ref{ln:tsGetsMaxTsRegTs} do not change the state of the calling node.

Observe that each $\sWrite()$ requires a single round of quorum communication. Thus, every $\sWrite()$ operation has a cost measure of two. Furthermore, after the last $\sWrite()$ operations, it takes at most two rounds of quorum communication for any $\sSnapshot()$ operation to terminate, \ie has a cost measure of four. However, before the occurrence last $\sWrite()$ operations, there is no bound on the communication costs.
\end{proof}

\begin{algorithm*}[t!]
	\begin{\VCalgSizeSmall}

		\noindent \textbf{input:} $\delta$ a number of observed concurrent writes after which write operations block temporarily; 
		
		\noindent \textbf{local variables initialization (optional in the context of self-stabilization):} $ts := 0$ is $p_i$'s write operation index; $ssn,sns := 0$ are $p_i$'s snapshot operation indices; $reg[n]:=  [\bot, \ldots, $ $ \bot]$ buffers all shared registers; $writePending \gets \bot$ stores $p_i$'s write task; $\pndSnap[n] := [(0,\bot,\bot),\ldots,(0,\bot,\bot)]$ control variables of snapshot operations; each entry form is $(sns, vc, \mathit{fnl})$, where $sns$ is an index, $vc$ is a vector clock that timestamps the snapshot operation $sns$, and $\mathit{fnl}$ is the operation's returned value\label{ln:alg4Variables}; 
		
		
		\noindent \textbf{macro} $\mathrm{VC}:=[ts_k]_{p_k\in\sP}$ \textbf{where} $ts_k:=0$ \textbf{when} $reg[k]=\bot$ \textbf{otherwise} $reg[k]=(\bullet,ts_k)$\label{ln:vc}\; 
		
		
		\noindent \textbf{macro} $exceeds(k) := (\delta \leq \sum_{\ell\in\{1,\ldots,n\}}\mathrm{VC}[\ell]-\pndSnap[k].vc[\ell])$;
		
		\noindent \textbf{macro} $\exceed(k) := \{(k,\pndSnap[k].sns,\pndSnap[k].vc):\pndSnap[k].\mathit{fnl} = \bot \land ((\delta=0\land \pndSnap[k].sns>0)\lor (\pndSnap[k].vc\neq \bot \land exceeds(k)))\}$\label{ln:exceedDeltaK};
		
		\noindent \textbf{macro} $\exceed := (\bigcup_{p_k \in \sP} \exceed(k) )\cup \{(i,\pndSnap[i].sns,\pndSnap[i].vc):\pndSnap[i].sns> 0 \land \pndSnap[i].\mathit{fnl}= \bot\}$\label{ln:exceedDelta};
		
		
		\noindent \textbf{macro}  $\sStore(A)$ \lRepeat{\emph{majority of $\mathrm{SAVEack}(\mathnormal{AJ}$$=$$\{(k,s)$$:$$(k,$$s,$$\bullet)$$\in$$A\})$ arrived}}{$\mathsf{broadcast}$ $\mathsf{SAVE}(A)$\label{ln:safeStoreSend}} 
		
		
		\noindent \textbf{macro}  $\mathrm{merge}(Rec)$ \label{ln:merge2} \{$ts\gets \max (\{ts, reg[i].ts\} \cup \{r[i].ts \mid r \in Rec\})$\label{ln:tsRegMaxTsRegTs2};
		\lFor{$p_k \in \sP$}{$reg[k]\gets \max (\{reg[k]\} \cup \{r[k] \mid r \in Rec\})$\label{ln:regGetsMaxRegCupMid2}\}}
		
		
		\textbf{do forever} \Begin{\label{ln:terminatingDoForever}
			
			\lForEach{$ssn'\neq ssn$}{\textbf{delete} $\mathrm{SNAPSHOTack}(-, ssn')$\label{ln:deleteSNAPSHOTack2}}
			
			$(ts,sns)\gets (\max \{ts, reg[i].ts\},\max \{sns, \pndSnap[i].sns\})$\label{ln:tsGetsMaxTsRegTs2}\;
			
			\lFor{$k \in \{1,\ldots,n\}: \pndSnap[k].vc \not \preceq \mathrm{VC}$, \textbf{where} line~\ref{ln:def3} defines the relation $\preceq$}{$\pndSnap[k].vc \gets \bot$\label{ln:vcFix}}
			
			\lIf{$sns \neq \pndSnap[i].sns$}{$\pndSnap[i]\gets (sns,\bot,\bot)$\label{ln:snsNegRepSnap}}
			
			\lFor{$p_k \in \sP:k\neq i$}{$\mathbf{send}~ \mathrm{GOSSIP}(reg[k],\pndSnap[k].sns)~ \mathsf{to}~ p_k$\label{ln:sendGossip2}}
			
			\lIf{$writePending \neq \bot$}{$\{\mathrm{baseWrite}(writePending);writePending \gets \bot;\}$\label{ln:writePendingExceedDelta}}
			\lIf{$\exceed \neq \emptyset$}{$\baseSnapshot(\exceed)$\label{ln:letSexceedDeltaCall}}
		}
		
		
		\noindent \textbf{operation} $\sWrite(v)$ \{\label{ln:tsEqtsPlusOne}$writePending \gets v$;
		\textbf{wait} \textbf{until} $(writePending = \bot); \Return()$;\label{ln:waitUntilwritePendingBotsWritets}\}
	
		
		\noindent \textbf{operation} $\sSnapshot()$ \label{ln:operationSnapshot2}\Begin
		{
			$(sns, \pndSnap[i]) \gets (sns + 1,(sns,\bot,\bot))$\label{ln:snsPLUSonepndSnap}; 
			\textbf{wait} \textbf{until} $(\pndSnap[i].\mathit{fnl} \neq \bot);\label{ln:terminatingWaitUntilSnapshot} \Return(\pndSnap[i].\mathit{fnl})$\;
		}
		
		
		\noindent \textbf{function}  $\mathrm{baseWrite}(v)$ \{$ts \gets ts + 1; reg[i] \gets (ts,v)$;\label{ln:tsIncrementWrite} \textbf{let} $lReg:=reg$;
		\lRepeat{$\mathrm{WRITEack}(\mathnormal{regJ} \succeq lReg)$ received from a majority\label{ln:waitUntilWRITEackReg2}}{$\mathsf{broadcast~} \mathrm{WRITE}(lReg)$\label{ln:broadcastWRITEreg2}; $\mathrm{merge}(Rec)$ \textbf{where} $Rec$ is the received $reg$ arrays\label{ln:mergeRecWrite2}\}}
				
		
		\noindent \textbf{function}  $\baseSnapshot(S)$ \Begin{
			\Repeat{$(S \cap \Delta)=\emptyset \lor ((S \cap \Delta) =(i, \bullet) \land \pndSnap[i].sns> 0 \land \pndSnap[i].\mathit{fnl}= \bot \land \neg exceeds(i))$\label{ln:outerStop}}{
				$ssn \gets ssn + 1$\label{ln:prevSsnGetsRegSsnPlusOne2};
				\textbf{let} $prev:=reg$\label{ln:letSprime}; \Repeat{$(S \cap \Delta)=\emptyset$ \emph{or majority of} $(\mathrm{SNAPSHOTack}(\bullet, \mathnormal{ssnJ} =ssn)$ \emph{arrived)}\label{ln:waitUntilSNAPSHOTackReg2}}{
					$\mathsf{broadcast} ~\mathrm{SNAPSHOT}((S \cap \Delta), reg, ssn)$\label{ln:ssnPlusOne2}\;
				}
				$\mathrm{merge}(Rec)$ \textbf{where} $Rec$ is the set of $reg$ arrays received at line~\ref{ln:waitUntilSNAPSHOTackReg2}\label{ln:waitUntilSNAPSHOTackReg2a}\;
				
				\lIf{$prev = reg \land (S \cap \Delta) \neq \emptyset$\label{ln:prevEqReg}}{$\sStore(\{(k, \pndSnap[k].sns, prev):(k, s, \bullet) \in (S \cap \Delta)\})$\label{ln:call2sStore}}
				
				\lElseIf{$((i, \bullet) \in (S \cap \Delta)) \land (\pndSnap[i].vc=\bot)$\label{ln:vcUpdateCond}}{$\pndSnap[i].vc\gets \mathrm{VC}$\label{ln:vcUpdate}}
			}
		}
		
		
		\textbf{upon} message $\mathrm{SAVE}(\mathnormal{AJ})$ \textbf{arrival} \textbf{from} $p_j$\label{ln:safeStoreArrival} 
		\Begin{
			\ForEach{$(k, s, r) \in \mathnormal{AJ}$}{			
			\lIf{$\pndSnap[k]=(s, \bullet,\bot)$}{$\pndSnap[k].\mathit{fnl}\gets r$\label{ln:safeStoreUpon}}
			\lElseIf{$\pndSnap[k].sns<s $}{$\pndSnap[k]\gets (s,\bot,r)$\label{ln:safeStoreUponMore}}
		}	
						
			\textbf{send} $\mathrm{SAVEack}(\{(k, s) :(k, s, \bullet) \in \mathnormal{AJ}\})$ \textbf{to} $p_j$\label{ln:safeStoreSendAck}\;
		}
		
		
		\textbf{upon} message $\mathrm{GOSSIP}(\mathnormal{regJ}, \mathnormal{snsJ})$ \textbf{arrival} \textbf{from} $p_j$ \label{ln:GOSSIParrival}\Begin{ 
			$reg[i]\gets \max \{reg[i],\mathnormal{regJ}\};(ts, sns)\gets (\max \{ts, reg[i].ts\}, \max \{sns, \mathnormal{snsJ}\})$\label{ln:GOSSIPupdate2};}
		
		\textbf{upon} message $\mathrm{WRITE}(\mathnormal{regJ})$  \textbf{arrival} \textbf{from} $p_j$ \Begin{
			\lFor{$p_k \in \sP$}{$reg[k] \gets  \max_{\prec_{sn}}(reg[k], \mathnormal{regJ}[k])$}
			\textbf{send} $\mathrm{WRITEack}(reg)$ \textbf{to} $p_j$ (*concurent $\baseSnapshot()$ calls need piggybacking with line~\ref{ln:ssnPlusOne2}'s message*)\label{ln:concurentPiggybacking}\;
		}
		
		\textbf{upon} message $\mathrm{SNAPSHOT}(\mathnormal{SJ}, \mathnormal{regJ}, \mathnormal{ssnJ})$ \textbf{arrival} \textbf{from} $p_j$ \label{ln:SNAPSHOTarrival} \Begin{
			\lFor{$p_k \in \sP$}{$reg[k] \gets \max_{\prec_{sn}} (reg[k], \mathnormal{regJ}[k])$\label{ln:SNAPSHOTarrivalEND}}
			
			\lForEach{$(s,sn,vc) \in \mathnormal{SJ}: \pndSnap[s].sns<sn\lor \pndSnap[s]=(sn,\bot,\bot)$}{$\pndSnap[s]\gets (sn,vc,\bot)$\label{ln:vcRemoteUpdate}}
			
			\textbf{let} $A:=\{(k,\pndSnap[k].sns,\pndSnap[k].\mathit{fnl}):(k,s,\bullet) \in \mathnormal{SJ} : ( \pndSnap[k].\mathit{fnl} \neq \bot\lor s<\pndSnap[k].sns ) \}$\label{ln:prepareREsponse}\;
			
			\textbf{send} $\mathrm{SNAPSHOTack}(reg, \mathnormal{ssnJ})$ \textbf{to} $p_j$\label{ln:SNAPSHOTackSend}; \lIf{$A\neq \emptyset$}{\textbf{send} $\mathsf{SAVE}(A)$ \textbf{to} $p_j$ \label{ln:sendSAVEwithA} (* piggyback these messages *)}}

	\end{\VCalgSizeSmall}
	
	\caption{\label{alg:terminating}{Self-stabilizing always-terminating snapshot objects; code for $p_i$}}
	
\end{algorithm*}

\section{An Unbounded Self-stabilizing Always Terminating Algorithm}
\label{sec:aussata}
We propose Algorithm~\ref{alg:terminating} as a variation of Delporte-Gallet \etal~\cite[Algorithm~2]{DBLP:journals/tpds/Delporte-Gallet18}. Algorithms~\ref{alg:9disCongif} and~\ref{alg:terminating} differ mainly in their ability to recover from transient faults. This implies some constraints. For example, Algorithm~\ref{alg:terminating} must have a clear bound on the number of pending snapshot tasks as well as on the number of stored results from snapshot tasks that have already terminated (see Section~\ref{sec:alwaysTerm} for details). For sake of simple presentation, Algorithm~\ref{alg:terminating} assumes that the system needs, for each processor, to cater for at most one pending snapshot task. It turns out that this assumption allows us to avoid the use of a self-stabilizing mechanism for reliable broadcast, as an extension of the non-self-stabilizing reliable broadcast that Delporte-Gallet \etal~\cite[Algorithm~2]{DBLP:journals/tpds/Delporte-Gallet18} use. Instead,  Algorithm~\ref{alg:terminating} uses a simpler mechanism for safe registers. 

The above opens up another opportunity: Algorithm~\ref{alg:terminating} can defer pending snapshot tasks until either (i) at least one processor was able to observe at least $\delta$ concurrent write operations, where $\delta$ is an input parameter, or (ii) no $\delta$ concurrent write operations were observed, \ie $\exceed=\emptyset$ (line~\ref {ln:exceedDelta}). Our intention here is to have $\delta$ as a tunable parameter that balances the latency (with respect to snapshot operations) vs. communication costs. That is, for the case of $\delta$ being a very high (finite) value, Algorithm~\ref{alg:terminating} guarantees termination in a way that resembles~\cite[Algorithm~1]{DBLP:journals/tpds/Delporte-Gallet18}, which uses $\bigO(n)$ messages per snapshot operation, and for the case of $\delta=0$, Algorithm~\ref{alg:terminating} behaves in a way that resembles~\cite[Algorithm~2]{DBLP:journals/tpds/Delporte-Gallet18}, which uses $\bigO(n^2)$ messages per snapshot operation.

\subsection{High-level description}
Algorithms~\ref{alg:9disCongif} uses reliable broadcasts for informing all non-failing processors about new snapshot tasks (line~\ref{ln:reliableBroadcastout}) as well as the results of snapshot tasks that have terminated (line~\ref{ln:prevReg}). Since we assume that each processor can have at most one pending snapshot task, we can avoid the need of using a self-stabilizing mechanism for reliable broadcast. Indeed, Algorithm~\ref{alg:terminating} simply lets every processor disseminate its (at most one) pending snapshot task and use a safe register for facilitating the delivery of the task result to its initiator. That is, once a processor finishes a snapshot task, it broadcasts the result to all processors and waits for replies from a majority of processors, which may possibly include the initiator of the snapshot task (using the macro $\sStore()$, line~\ref{ln:safeStoreSend}). This way, if processor $p_j$ notices that it has the result of an ongoing snapshot task, it sends that result to the requesting processor.

\subsection{Algorithm details}
We review Algorithm~\ref{alg:terminating}'s do-forever loop (lines~\ref{ln:deleteSNAPSHOTack2} to~\ref{ln:letSexceedDeltaCall}), the $\baseSnapshot()$ function together with the dealing of message $\mathrm{SNAPSHOT}$ (lines~\ref{ln:SNAPSHOTarrival} to~\ref{ln:SNAPSHOTarrivalEND}), as well as the macro $\sStore(s,r)$ (line~\ref{ln:safeStoreSend}) together with the dealing of message $\mathrm{SAVE}$ (lines~\ref{ln:safeStoreArrival} to~\ref{ln:safeStoreSendAck}). 

\smallskip

\noindent
\textbf{The do-forever loop.~~}
Algorithm~\ref{alg:terminating}'s do-forever loop (lines~\ref{ln:deleteSNAPSHOTack2} to~\ref{ln:letSexceedDeltaCall}), includes a number of lines for cleaning stale information, such as out-of-synch $\mathrm{SNAPSHOTack}$ messages (line~\ref{ln:deleteSNAPSHOTack2}), out-dated operation indices (line~\ref{ln:tsGetsMaxTsRegTs2}), illogical vector-clocks (line~\ref{ln:vcFix}) or corrupted $\pndSnap$ entries (line~\ref{ln:snsNegRepSnap}). The gossiping of operation indices (lines~\ref{ln:sendGossip2} and~\ref{ln:GOSSIParrival}) also helps to remove stale information (as in Algorithm~\ref{alg:disCongif} but only with the addition of $sns$ values). 

The synchronization between write and snapshot operations (lines~\ref{ln:writePendingExceedDelta} and~\ref{ln:letSexceedDeltaCall}) starts with a write, if there is any such pending task (line~\ref{ln:writePendingExceedDelta}), before running its own snapshot task, if there is any such pending, as well as any snapshot task (initiated by others) for which $p_i$ observed that at least $\delta$ write operations occur concurrently with it (line~\ref{ln:letSexceedDeltaCall}). 

\smallskip

\noindent \textbf{The $\sWrite()$ operation and the $\mathrm{baseWrite}()$ function.~~}
As in Algorithm~\ref{alg:9disCongif}, $p_i$ does not start immediately a write operation. Node $p_i$ permits concurrent write operations by storing $v$ and a unique index in $writePending_i$ (line~\ref{ln:tsEqtsPlusOne}). The algorithm then runs the write operation as a background task (line~\ref{ln:writePendingExceedDelta}) using the $\mathrm{baseWrite}()$ function (line~\ref{ln:waitUntilWRITEackReg2}).

\smallskip

\noindent
\textbf{The $\baseSnapshot()$ function and the $\mathrm{SNAPSHOT}$ message.~~}
Algorithm~\ref{alg:terminating} maintains the state of every snapshot task in the array $\pndSnap$. The entry $\pndSnap_i[k]=(sns,vc,\mathit{fnl})$ includes: (i) the index $sns$ of the most recent snapshot operation that $p_k \in \sP$ has initiated and $p_i$ is aware of, (ii) the vector clock representation of $reg_k$ (\ie just the timestamps of $reg_k$, cf. line~\ref{ln:vc}) and (iii) the final result $\mathit{fnl}$ of the snapshot operation (or $\bot$, in case it is still running). 

The $\baseSnapshot()$ function includes an outer loop part (lines~\ref{ln:prevSsnGetsRegSsnPlusOne2} and~\ref{ln:outerStop}), an inner loop part (lines~\ref{ln:letSprime} to~\ref{ln:waitUntilSNAPSHOTackReg2a}), and a result update part (lines~\ref{ln:prevEqReg} to~\ref{ln:vcUpdate}). The outer loop increments the snapshot index, $ssn$ (line~\ref{ln:prevSsnGetsRegSsnPlusOne2}), so that it can consider a new query attempt by the inner loop. The outer loop ends when (i) there are no more pending snapshot tasks that this call to $\baseSnapshot()$ needs to handle, or (ii) the only pending snapshot task for the current invocation of $\baseSnapshot()$ is the one of $p_i$ and $p_i$ has not observed at least $\delta$ concurrent writes. The inner loop broadcasts $\mathrm{SNAPSHOT}$ messages, which includes all the pending $(S\cap \exceed)$ that are relevant to this call to $\baseSnapshot()$ together with the local current value of $reg$ and the snapshot query index $ssn$. The inner loop ends when acknowledgments are received from a majority of processors and the received values are merged (line~\ref{ln:waitUntilSNAPSHOTackReg2a}). The results are updated by writing to an emulated safe shared register (line~\ref{ln:prevEqReg}) whenever $prev=reg$. In case the results do not allow $p_i$ to terminate its snapshot task (line~\ref{ln:vcUpdate}), Algorithm~\ref{alg:terminating} uses the query results for storing the timestamps in the field $vs$. This allows to balance a trade-off between snapshot operation latency and communication costs, as we explain next.

\smallskip

\indent
\emph{The use of the input parameter $\delta$ for balancing the trade-off between snapshot operation latency and communication costs.~~}
For the case of $\delta=0$, the set $\exceed$ (line~\ref{ln:exceedDelta}) includes all the nodes for which there is no stored result, \ie $\pndSnap[k].\mathit{fnl}=\bot$. Thus, no snapshot tasks are ever deferred, as in Delporte-Gallet \etal~\cite[Algorithm~2]{DBLP:journals/tpds/Delporte-Gallet18}. The case of $\delta>0$ uses the fact that 
Algorithm~\ref{alg:terminating} samples the vector clock value of $reg_i$ and stores it in $\pndSnap[i].vc$ (line~\ref{ln:vcUpdate}) once it had completed at least one iteration of the repeat-until loop (line~\ref{ln:waitUntilSNAPSHOTackReg2} and~\ref{ln:waitUntilSNAPSHOTackReg2a}). This way, we can be sure that the sampling of the vector clock is an event that occurred not before the start of $p_i$'s snapshot operation that has the index of $\pndSnap[i].sns$. 

\smallskip

\indent
\emph{Many-jobs-stealing scheme for reduced blocking periods.~~}
We note that $p_k$'s task is considered active as long as $\pndSnap[k].\mathit{fnl} \neq \bot$. For helping all currently actives snapshot tasks, $p_i$ samples the set of currently pending task $(S_i\cap \exceed_i)$ (line~\ref{ln:letSprime}) before starting the inner repeat-until loop (lines~\ref{ln:letSprime} to~\ref{ln:waitUntilSNAPSHOTackReg2a}). Processor $p_i$ broadcasts from the client-side the $\mathrm{SNAPSHOT}$ message, which includes the most recent snapshot task information, to all processors. The reception of this $\mathrm{SNAPSHOT}$ message on the server-side (lines~\ref{ln:SNAPSHOTarrival} to~\ref{ln:SNAPSHOTarrivalEND}), updates the local information (line~\ref{ln:vcRemoteUpdate}) and prepares the response information (line~\ref{ln:prepareREsponse}) before sending the reply to the client-side (line~\ref{ln:SNAPSHOTackSend}). Note that if the receiver notices that it has the result of an ongoing snapshot task, it sends that result to the requesting processor (line~\ref{ln:sendSAVEwithA}).

\smallskip

\noindent
\textbf{The $\sStore()$ function and the $\mathrm{SAVE}$ message.~~}
The $\sStore()$ function considers a snapshot task that was initiated by processor $p_k \in \sP$. This function is responsible for storing the result $r$ of this snapshot task in a safe register. It does so by broadcasting the client-side message $\mathsf{SAVE}$ to all processors in the system (line~\ref{ln:safeStoreSend}). Upon the arrival of the $\mathsf{SAVE}$ message to the server-side, the receiver stores the arriving information, as long as the arriving information is more recent than the local one. Then, the server-side replies with a $\mathrm{SAVEack}$ message to the client-side, who is waiting for a majority of such replies (line~\ref{ln:safeStoreSend}).   

\begin{figure*}[t!]
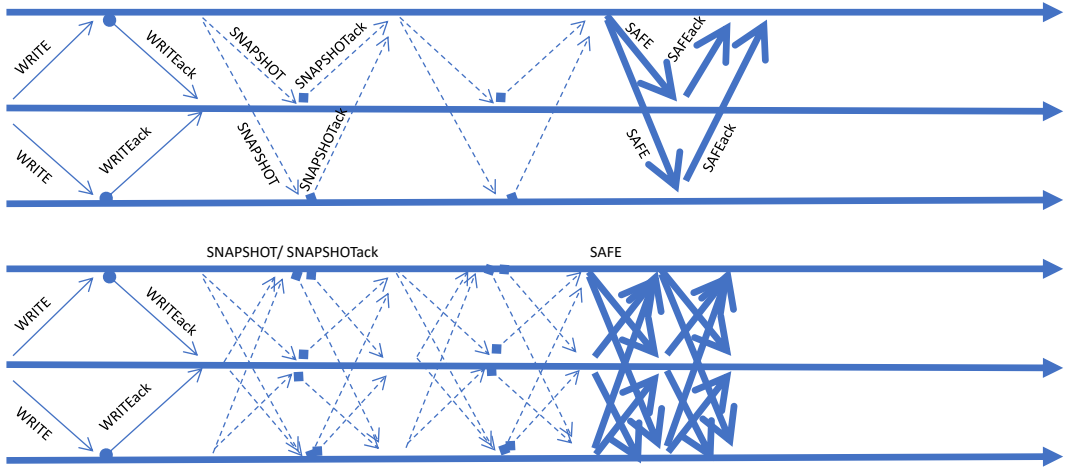

	\begin{center}
		\includegraphics[page=5,scale=\figScale]{DrawingsCrop.pdf}
	\end{center}	
	\begin{center}
		\includegraphics[page=6,scale=\figScale]{DrawingsCrop.pdf}
	\end{center}
	\caption{\label{fig:terminatingExec}The upper drawing depicts an example of Algorithm~\ref{alg:terminating}'s execution for a case that is equivalent to the one depicted in the upper drawing of Figure~\ref{fig:9nonBlockExec}, i.e., only one snapshot operation. The lower drawing illustrates the case of concurrent invocations of snapshot operations by all nodes.}
\end{figure*}

Figure~\ref{fig:terminatingExec} depicts two examples of Algorithm~\ref{alg:terminating}'s execution. In the upper drawing, a write operation is followed by a snapshot operation. Note that fewer messages are considered when comparing to Figure~\ref{fig:9nonBlockExec}'s example. The lower drawing illustrates the case of concurrent invocations of snapshot operations by all nodes. Observe the potential improvement with respect to number of messages (in the upper drawing) and throughput (in the lower drawing) since Algorithm~\ref{alg:9disCongif} uses $\bigO(n^2)$ messages for each snapshot task and handles only one snapshot task at a time.  

\subsection{Correctness}
\label{sec:cor2}
We now prove the convergence (recovery), termination and linearization of Algorithm~\ref{alg:terminating}.

\begin{definition}[Algorithm~\ref{alg:terminating}'s consistent system states and executions]
\label{def:safeConfigTerminating}
(i) Let $c$ be a system state in which $ts_i$ is greater than or equal to any $p_i$'s timestamp values in the variables and fields related to $ts$. We say that the $ts$' timestamps are consistent in $c$.
(ii) Let $c$ be a system state in which $ssn_i$ is greater than or equal to any $p_i$'s snapshot sequence numbers in the variables and fields related to $ssn$. We say that the $ssn$'s snapshot sequence numbers are consistent in $c$.	
(iii) Let $c$ be a system state in which $sns_i$ is greater than or equal to any $p_i$'s snapshot operation index in the variables and fields related to $sns$. Moreover, $\forall p_i \in \sP: sns_i = \pndSnap_i[i].sns$ and $\forall p_i,p_j \in \sP: \pndSnap_j[i].sns \leq \pndSnap_i[i].sns$. We say that the $sns$'s snapshot sequence numbers are consistent in $c$.	
(iv) Let $c$ be a system state in which $\forall p_i,p_k \in \sP:\pndSnap_i[k].vc\preceq \mathrm{VC}_i$ holds, where $\mathrm{VC}_i$ is the returned value from a macro defined in line~\ref{ln:vc} when executed by processor $p_i$. We say that the vector clock values are consistent in $c$.	
We say that system state $c$ is consistent if it is consistent with respect to invariants (i) to (iv). Let $R$ be an execution of Algorithm~\ref{alg:terminating} that all of its system states are consistent and $R'$ be a suffix of $R$. We say that execution $R'$ is consistent (with respect to $R$) if any message arriving in $R'$ was indeed sent in $R$ and any reply arriving in $R'$ has a matching request in $R$.  
\end{definition}

\begin{theorem}[\textbf{Algorithm~\ref{alg:terminating}'s convergence}]
\label{thm:terminatingAlgConvergence}
   Let $R$ be a fair and unbounded execution of Algorithm~\ref{alg:terminating}. Within $\bigO(1)$ asynchronous cycles in $R$, the system reaches a consistent state $c \in R$ (Definition~\ref{def:safeConfigTerminating}).
Within $\bigO(1)$ asynchronous cycles after $c$, the system starts a consistent execution~$R'$.
\end{theorem}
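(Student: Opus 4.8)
The proof will follow the same template as the convergence proof for Algorithm~\ref{alg:disCongif} (Theorem~\ref{thm:mainConvergence}), but now with four invariants instead of two. The plan is to establish each of the four consistency conditions of Definition~\ref{def:safeConfigTerminating} in turn, each within $\bigO(1)$ asynchronous cycles, and then argue that once all four hold simultaneously the execution is consistent with respect to its own past (the ``$R'$ is consistent'' claim). First I would handle invariants (i) and (ii) --- consistency of $ts$ timestamps and of $ssn$ snapshot-query sequence numbers --- essentially by \emph{reusing} Lemmas~\ref{thm:eventuallyAscendingSent} and~\ref{thm:sequenceNumberRecovery} from Section~\ref{sec:aussnba}, since Algorithm~\ref{alg:terminating} treats $ts$ and $ssn$ exactly as Algorithm~\ref{alg:disCongif} does: the only operations are increment (lines~\ref{ln:tsIncrementWrite},~\ref{ln:prevSsnGetsRegSsnPlusOne2}) and $\max$-merges (lines~\ref{ln:tsRegMaxTsRegTs2},~\ref{ln:regGetsMaxRegCupMid2},~\ref{ln:tsGetsMaxTsRegTs2},~\ref{ln:GOSSIPupdate2}, and the $\mathrm{WRITE}/\mathrm{SNAPSHOT}$ arrival handlers); the do-forever loop deletes stale $\mathrm{SNAPSHOTack}$ messages (line~\ref{ln:deleteSNAPSHOTack2}) and re-raises $ts$ to $reg[i].ts$, and the $\mathrm{GOSSIP}$ traffic (lines~\ref{ln:sendGossip2},~\ref{ln:GOSSIParrival}) propagates these maxima, so the monotonicity / in-transit-bounding claims carry over verbatim.

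**The new work is invariants (iii) and (iv).** For (iii), the key observations are: the do-forever loop sets $sns \gets \max\{sns,\pndSnap[i].sns\}$ and, whenever $sns \neq \pndSnap_i[i].sns$, overwrites $\pndSnap_i[i] \gets (sns,\bot,\bot)$ (lines~\ref{ln:tsGetsMaxTsRegTs2},~\ref{ln:snsNegRepSnap}); after one full iteration this forces $sns_i = \pndSnap_i[i].sns$. Because $\sSnapshot()$ only ever increments $sns$ and re-initializes $\pndSnap_i[i]$ consistently (line~\ref{ln:operationSnapshot2}), and $\mathrm{GOSSIP}$ only raises $sns$ via $\max$, the equality is then stable. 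For the cross-node part $\pndSnap_j[i].sns \leq \pndSnap_i[i].sns$, I would argue that after the timestamp/sequence-number cleanup any in-transit $\mathrm{SNAPSHOT}$ or $\mathrm{GOSSIP}$ or $\mathrm{SAVE}$ message carrying an inflated $sns$-for-$i$ value is flushed within $\bigO(1)$ cycles (Remark~\ref{ss:first asynchronous cycles}), and thereafter $p_j$ can only learn of $p_i$'s index $s$ through a message that $p_i$ itself originated while $\pndSnap_i[i].sns \ge s$; combined with monotonicity of $\pndSnap_i[i].sns$ this gives the inequality. For (iv), $\pndSnap_i[k].vc \preceq \mathrm{VC}_i$ is directly enforced by line~\ref{ln:vcFix} of the do-forever loop, which nulls out any $\pndSnap[k].vc$ that is not $\preceq \mathrm{VC}$; I need to check that the other writers of $\pndSnap[\cdot].vc$ --- line~\ref{ln:vcUpdate}, which sets it to the \emph{current} $\mathrm{VC}$, and line~\ref{ln:vcRemoteUpdate}, which copies an incoming $vc$ --- cannot re-break the invariant persistently. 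Line~\ref{ln:vcUpdate} is fine because $\mathrm{VC}$ is only increasing; line~\ref{ln:vcRemoteUpdate} can temporarily install a ``too large'' $vc$, but the very next pass of line~\ref{ln:vcFix} (within the next cycle) removes it, so the invariant holds from the second cycle onward.

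**The consistency-of-execution claim.** Once the four static invariants hold in some $c$, I would spend one more $\bigO(1)$-cycle block flushing every message that was already in transit in $c$ but not legitimately sent during $R$ after $c$ (again by Remark~\ref{ss:first asynchronous cycles} and the channel-capacity bound from Section~\ref{sec:commModel}): stale $\mathrm{SNAPSHOTack}$/$\mathrm{SAVEack}$ messages are deleted or become unmatched-and-ignored, stale $\mathrm{SNAPSHOT}$/$\mathrm{WRITE}$/$\mathrm{SAVE}$/$\mathrm{GOSSIP}$ requests are received and answered once and then gone. After that the suffix $R'$ has the property that every arriving message was sent in $R'$ (or at worst in the earlier part of $R$ that is already consistent) and every reply has a matching request, which is exactly Definition~\ref{def:safeConfigTerminating}'s notion of a consistent execution; closure (that $R'$ stays consistent) then follows because all of invariants (i)--(iv) are preserved by every line of the algorithm, which is a routine line-by-line check analogous to the argument inside the proof of Theorem~\ref{thm:closure1}.

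**Expected main obstacle.** The delicate point will be invariant (iii)'s cross-node clause $\pndSnap_j[i].sns \le \pndSnap_i[i].sns$ in the presence of the $\mathrm{SAVE}$ traffic. A $\mathrm{SAVE}$ message carries triples $(k,s,r)$ and the handler at line~\ref{ln:safeStoreUpon} can raise $\pndSnap_j[k].sns$ to an arbitrary incoming $s$; if a transient fault planted a $\mathrm{SAVE}$ with an inflated $s$ for node $i$, then $p_j$ adopts it and re-broadcasts results tagged with it, and one must make sure such ``phantom'' indices cannot keep circulating or, worse, cause a $\pndSnap[i].fnl$ to be installed that a later legitimate snapshot of $p_i$ with the same index then mistakenly reads (line~\ref{ln:terminatingWaitUntilSnapshot}). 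The resolution is that $\sSnapshot()$ at $p_i$ first \emph{increments} $sns_i$ past anything the do-forever loop has absorbed and resets $\pndSnap_i[i]$ to $(sns_i,\bot,\bot)$, so a genuinely fresh snapshot uses an index strictly larger than any value currently in the system (all of which are $\le sns_i$ by the time-stamp-style convergence of $sns$); phantom $\mathrm{SAVE}$s therefore get filtered by the ``$\pndSnap[k].sns<s \lor \pndSnap[k]=(s,\bot,\bot)$'' guards and are flushed within $\bigO(1)$ cycles. Making this argument airtight --- in particular pinning down that no in-transit phantom survives longer than the flushing window and that the guards genuinely reject it --- is where the real care is needed; everything else parallels Section~\ref{sec:cor1} closely.
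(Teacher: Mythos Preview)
Your proposal is correct and follows essentially the same approach as the paper: reuse Lemmas~\ref{thm:eventuallyAscendingSent} and~\ref{thm:sequenceNumberRecovery} for invariants (i)--(ii), invoke the do-forever cleanup lines~\ref{ln:tsGetsMaxTsRegTs2},~\ref{ln:vcFix},~\ref{ln:snsNegRepSnap} together with a Claim~\ref{thm:GOSSIPregJ}-style in-transit argument for (iii)--(iv), and flush stale messages via Remark~\ref{ss:first asynchronous cycles} for the consistent-execution clause. You are in fact more careful than the paper on invariant (iv): the paper's proof asserts that only lines~\ref{ln:vcFix} and~\ref{ln:vcUpdate} assign to $\pndSnap[\cdot].vc$, overlooking line~\ref{ln:vcRemoteUpdate}, whereas you correctly identify that writer and argue that line~\ref{ln:vcFix} repairs any resulting violation within the next cycle; your discussion of phantom $\mathrm{SAVE}$ indices is likewise an elaboration the paper leaves implicit in its appeal to ``similar arguments.''
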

\begin{proof}
Note that Lemmas~\ref{thm:eventuallyAscendingSent} and~\ref{thm:sequenceNumberRecovery} imply invariants (i), and respectively, (ii) of  Definition~\ref{def:safeConfigTerminating} also for the case of Algorithm~\ref{alg:terminating}, because they use the similar code lines for asserting these invariants. 

We now consider the proof of invariant (iii) of Definition~\ref{def:safeConfigTerminating}. 
Note that the variables and fields of $sns$ and the data structure $\pndSnap$ in Algorithm~\ref{alg:terminating} follow the same patterns of information as the variables and fields of $ts$ and the data structure $reg$ in Algorithm~\ref{alg:disCongif}. Moreover, within one asynchronous cycle, every processor $p_i \in \sP$ executes line~\ref{ln:snsNegRepSnap} at least once. Therefore, the proof of invariant (iii) can follow similar arguments to the ones appearing in the proof of Lemma~\ref{thm:eventuallyAscendingSent}. Specifically, $\forall p_i,p_j \in \sP: \pndSnap_j[i].sns \leq \pndSnap_i[i].sns$ holds due to arguments that appear in the proof of Claim~\ref{thm:GOSSIPregJ} with respect to the variables and the fields of $ts$ and the structure $reg$.

The proof of invariant (iv) is implied by the fact that within one asynchronous cycle, every processor $p_i \in \sP$ executes line~\ref{ln:vcFix} at least once and the fact that $\mathrm{VC}_i$ is assigned to $\pndSnap_i[k].vc$ in line~\ref{ln:vcUpdate}. Note that these are the only lines of code that assign values to $\pndSnap_i[k].vc$ and that value of every entry in $\mathrm{VC}_i$ is not decreasing (Claim~\ref{thm:neverRemove}).

By the definition of asynchronous cycles (Section~\ref{ss:asynchronousCycles}), within one asynchronous cycle, $R$ reaches a suffix $R'$, such that every received message during $R'$  was sent during $R$. By repeating the previous argument, it holds that within $\bigO(1)$ asynchronous cycles, $R$ reaches a suffix $R'$ in which for every received reply message, we have that its associated request message was sent during $R$. Thus, $R'$ is consistent.   
\end{proof} 

\smallskip

The proof of Theorem~\ref{thm:livenessAll} considers both complete and not complete $\sSnapshot()$ operations. We say that a $\sSnapshot()$ operation is \emph{complete} if it starts due to a step $a_i$ in which $p_i$ calls the $\sSnapshot()$ operation (line~\ref{ln:operationSnapshot2}) and its operation index, $s$, is greater than any of $p_i$'s snapshot indices in the system state that appears immediately before $a_i$, where $s$ is the value of $sns_i$ stored in $\pndSnap_i[i].sns$ (line~\ref{ln:snsPLUSonepndSnap}). Otherwise, was say that it is not complete.

\begin{theorem}[\textbf{Algorithm~\ref{alg:terminating}'s termination and linearization}]
	\label{thm:livenessAll}
Let $R$ be a consistent execution (as defined by Definition~\ref{def:safeConfigTerminating}) with respect to some execution of Algorithm~\ref{alg:terminating}. Suppose that there exists $p_i \in \sP$, such that in $R$'s second system state (which immediately follows $R$'s first step that may include a call to the $\sSnapshot()$ operation in line~\ref{ln:operationSnapshot2}) it holds that $\pndSnap_i[i]=(s,\bullet,\bot)$ and $s>0$.
%
%
Eventually the system reaches a state $c \in R$ in which $\pndSnap_i[i]=(s,\bullet,x):x\neq\bot$. 	Moreover, Algorithm~\ref{alg:terminating}'s cost measures are in $\bigO(n)$ per $\sWrite()$ operation and in $\bigO(n+\delta)$ per $\sSnapshot()$ operation.
\end{theorem}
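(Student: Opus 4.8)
The plan is to reduce the claim to the termination of the inner ``query'' loop inside $\snapshotIteration()$ --- which mirrors the $\sSnapshot()$ loop of Algorithm~\ref{alg:0disCongif} --- while arguing separately that (i) the task $(i,s)$ stays pending at $p_i$ and is worked on repeatedly, (ii) once its result is stored at a quorum it reaches $p_i$ within $\bigO(1)$ cycles, and (iii) at most $\bigO(\delta)$ concurrent writes can delay it before the outer loop of $\snapshotIteration()$ forces every non-failing processor other than $p_i$ to stop issuing writes. Throughout I take $R$ fair (implicit in the asynchronous-cycle measure and needed for Corollary~\ref{thm:basicCorollary}), $p_i$ non-failing (otherwise the claim is vacuous), and use the ``one operation per node'' service assumption to keep $writePending_i=\bot$ while $p_i$ is blocked in line~\ref{ln:terminatingWaitUntilSnapshot}. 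First, using the consistency invariants of Definition~\ref{def:safeConfigTerminating} --- in particular $\forall p_j\in\sP:\pndSnap_j[i].sns\le\pndSnap_i[i].sns=sns_i$ --- and the fact that $p_i$ performs no further $\sSnapshot()$ call, I would show $\pndSnap_i[i].sns=sns_i=s$ at every state of $R$ up to the target state: $sns_i$ changes only by $\max$-assignments, every incoming $\mathrm{GOSSIP}/\mathrm{SNAPSHOT}$ value of that field is $\le s$, and line~\ref{ln:snsNegRepSnap} keeps $\pndSnap_i[i].sns$ tied to $sns_i$. Hence $(i,s)$ remains $p_i$'s pending task with $\pndSnap_i[i].fnl=\bot$ until it completes.

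Second, since $s>0$ and $\pndSnap_i[i].fnl=\bot$, the last set in the definition of $\exceed$ (line~\ref{ln:exceedDelta}) always contains $(i,s,\pndSnap_i[i].vc)$, so $\exceed_i\neq\emptyset$; thus every $\bigO(1)$ cycles $p_i$ reaches line~\ref{ln:letSexceedDeltaCall} and runs $\snapshotIteration$ with $(i,s,\bullet)$ in its argument, performing query rounds (broadcast $\mathrm{SNAPSHOT}$, collect a majority of $\mathrm{SNAPSHOTack}$, then $\mathrm{merge}$) that each take $\bigO(1)$ cycles by Corollary~\ref{thm:basicCorollary}. Within $\bigO(1)$ cycles the $\mathrm{SNAPSHOT}$ broadcasts of line~\ref{ln:ssnPlusOne2} also make every non-failing $p_k$ set $\pndSnap_k[i].sns=s$ (line~\ref{ln:vcRemoteUpdate}), a value invariant~(iii) of Definition~\ref{def:safeConfigTerminating} then freezes.

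Third, I would isolate the delivery step: if at some state of $R$ a majority $Q$ has $\pndSnap_k[i]=(s,\bullet,x)$ with $x\neq\bot$ for all $p_k\in Q$, then $p_i$ reaches the target within $\bigO(1)$ cycles. In its next query round $p_i$ collects acknowledgements from a majority $Q'$; since $2f<n$, $Q\cap Q'\neq\emptyset$, so a server $p_k\in Q\cap Q'$ puts $(i,s,x)$ into the set $A$ of line~\ref{ln:prepareREsponse} and piggybacks $\mathsf{SAVE}(A)$ (line~\ref{ln:sendSAVEwithA}); on arrival, line~\ref{ln:safeStoreUpon} sets $\pndSnap_i[i]\gets(s,\bullet,x)$ with $x\neq\bot$. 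The same quorum-intersection argument shows that whenever some processor executes the $\sStore(\cdot)$ of line~\ref{ln:prevEqReg} with $(i,s,\bullet)$ covered by its argument --- which is exactly when a query round has $prev=reg$ --- a majority stores $(i,s,prev)$ with $prev\neq\bot$ within $\bigO(1)$ cycles, after which the previous sentence applies.

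Finally, the case analysis. If some query round run on behalf of $(i,s)$ ever has $prev=reg$, the third step concludes; otherwise every such round fails. In a consistent execution, after the $\bigO(1)$ cycles needed to flush all stale in-transit data (Remark~\ref{ss:first asynchronous cycles}), each failed query round of $p_i$ strictly increases $\sum_\ell\mathrm{VC}_i[\ell]$ --- a changed $reg$ entry forces a strictly larger timestamp because consistency makes timestamps uniquely determine their object values, and $\mathrm{VC}$ is non-decreasing (Claim~\ref{thm:neverRemove}) --- and $p_i$ samples $vc_0:=\mathrm{VC}_i$ at its first failed round via line~\ref{ln:vcUpdate}. Since each $\mathrm{baseWrite}$ terminates in $\bigO(1)$ cycles (line~\ref{ln:waitUntilWRITEackReg2}) and $p_i$ starts a fresh query round every $\bigO(1)$ cycles, within $\bigO(\delta)$ cycles we are either in the ``$prev=reg$'' case or have $\delta\le\sum_\ell(\mathrm{VC}_i[\ell]-\pndSnap_i[i].vc[\ell])$; the latter inequality then holds at every non-failing $p_k$ within $\bigO(1)$ further cycles (gossip and $\mathrm{merge}$), so $(i,s,\pndSnap_k[i].vc)\in\exceed_k$, the stopping condition of line~\ref{ln:outerStop} is never met at any $p_k\neq p_i$, and each such $p_k$ stays inside $\snapshotIteration$ and launches no $\mathrm{baseWrite}$. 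Within $\bigO(1)$ more cycles the writes in progress drain and none start, so $p_i$'s next query round has $prev=reg$ and the third step concludes; overall $p_i$ reaches the target within $\bigO(\delta)$ cycles. I expect the main obstacle to be making this last paragraph rigorous --- turning ``a query round of $p_i$ fails'' into ``at least one further write is reflected in $\mathrm{VC}_i$'' and ruling out that stale propagation rather than genuine concurrency keeps rounds from succeeding; the remaining steps are close re-runs of the termination argument of~\cite{DBLP:journals/tpds/Delporte-Gallet18} and of the convergence reasoning behind Theorem~\ref{thm:terminatingAlgConvergence}.
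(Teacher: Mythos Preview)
Your proposal is correct and follows essentially the same route as the paper. The paper packages the argument into two lemmas --- a ``part I'' showing that within $\bigO(\delta)$ cycles either (i) every non-failing $p_j$ has $(i,\bullet)\in\sS_j()$, (ii) a majority already stores the result, or (iii) $p_i$ already has it, and a ``part II'' showing that (i) or (ii) collapse to the target within $\bigO(1)$ further cycles --- whereas you weave the same case analysis (persistence of $(i,s)$ at $p_i$; quorum-intersection delivery via the piggybacked $\mathsf{SAVE}$; at most $\delta$ failed query rounds before the threshold forces every $p_k$ into $\snapshotIteration$ and drains writes) into a single linear narrative. The substantive steps --- Argument~(3) of the paper's Claim~\ref{thm:livenessInoStep}, Claim~\ref{thm:livenessIcv}, Claim~\ref{thm:livenessIatLeastDelta}, and Claims~\ref{thm:livenessIImajority}--\ref{thm:livenessIIexceedDelta} --- all have direct counterparts in your text.

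Two small remarks. First, the theorem's title also promises \emph{linearization}, and the paper discharges that separately (Lemma~\ref{thm:linearizationII}) by observing that the query loop inside $\snapshotIteration()$ and $\mathrm{baseWrite}()$ coincide with Algorithm~\ref{alg:9disCongif}'s, so the linearization-point argument of~\cite{DBLP:journals/tpds/Delporte-Gallet18} carries over; your proposal treats only the termination clause that the formal statement actually asserts. Second, in your final paragraph the step ``the latter inequality then holds at every non-failing $p_k$'' needs both $\pndSnap_k[i].vc\neq\bot$ (which comes from line~\ref{ln:vcRemoteUpdate} once $p_i$ has sampled $vc_0$ and rebroadcast) and $\mathrm{VC}_k\succeq\mathrm{VC}_i$ at the relevant moment; the paper isolates the first as a separate claim (Claim~\ref{thm:livenessIcv}) before running the counting argument, which makes the dependency explicit. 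Neither point is a gap --- your self-identified ``main obstacle'' is exactly where the paper also leans on prose rather than a fully formal invariant.
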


\begin{proof}
Lemmas~\ref{thm:livenessI},~\ref{thm:livenessII},~\ref{thm:linearizationII} and~\ref{thm:allBounds} prove the theorem. Lemmas~\ref{thm:livenessI} and~\ref{thm:livenessII} use the function $\sS_i()$, which we define next, as well as Definition~\ref{def:dissemination}.  Whenever $p_i$'s program counter is outside of the function $\baseSnapshot()$, the $\sS_i()$ function returns the value of $\exceed_i$. Otherwise, the function returns the value of $(S_i \cap \Delta_i)$. The proof of lemmas~\ref{thm:livenessI} and~\ref{thm:livenessII} is by convergence stairs~\cite[Chapter 2.9]{DBLP:books/mit/Dolev2000}, which Definition~\ref{def:dissemination} facilitates.

\begin{definition}[Eventual (i) diffusion, (ii) triggering, and (iii) non-interruption]
	\label{def:dissemination}
	Let $R$ be an execution of Algorithm~\ref{alg:terminating} and $p_i \in \sP$ a non-faulty processor that invokes snapshot task, $T(i,s)$, such that $\pndSnap_i[i]=(s,\bullet,\bot)$ and $s>0$ in the starting system state of $R$. The definition refers to a \emph{diffusion set} of processors $D \subseteq \sP$ that includes at least one majority set  $M \subseteq \sP: |M|> |\sP|/2$ and any processor that executes an unbounded (possibly infinite) number of write operations concurrently with $T(i,s)$. 
	
	(i) Suppose that there is a diffusion set, $D$, such that eventually, the system reaches a state $c \in R$ in which $(i, \bullet) \in \sS_i()$ and $\pndSnap_j[i]=(s,\bullet,\bot)$, for any $p_j \in D$. In this case, we say that $R$ has diffused task $T(i,s)$ by $c$. 
	
	(ii) Suppose that there is a diffusion set, $D$, such that eventually, the system reaches a state $c' \in R$ in which for $p_j$, it holds that $(i, \bullet) \in \sS_i()$, $\pndSnap_j[i]=(s,y,\bullet):y\neq\bot$ and $y \preceq VC_j$, for any $p_j \in D$. In this case, we say that $R$ triggers the helping-scheme for task $T(i,s)$ in $c'$. 
	
	(iii) Suppose that there is a diffusion set, $D$, such that eventually, the system reaches a state $c'' \in R$ in which for $p_j$, it holds that $(i, \bullet) \in \sS_i()$ and $\pndSnap_j[i] \in \exceed_j$, for any $p_j \in D$, we say that $R$ does not interrupt task $T(i,s)$ in $c''$.
\end{definition}

\begin{lemma}[\textbf{Algorithm~\ref{alg:terminating}'s termination --- part I}]
\label{thm:livenessI}
Let  $R$ be a consistent execution (Definition~\ref{def:safeConfigTerminating}) with respect to some execution of Algorithm~\ref{alg:terminating}. Suppose that there exists $p_i \in \sP$, such that in $R$'s second system state (which immediately follows $R$'s first step that may include a call to the $\sSnapshot()$ operation in line~\ref{ln:operationSnapshot2}) it holds that $\pndSnap_i[i]=(s,\bullet,\bot)$ and $s>0$.
%
%
Eventually the system reaches a state $c \in R$ in which either: (i) $R$ does not interrupts task $T(i,s)$ in $c$ (Definition~\ref{def:dissemination}),
%
%
(ii) any majority $M \subseteq \sP:|M|>|\sP|/2$ includes at least one $p_j \in M$, such that $\pndSnap_j[i]=(s,\bullet,x):x\neq\bot$ or (iii) $\pndSnap_i[i]=(s,\bullet,x):x\neq\bot$. Moreover, whenever $R$ is a fair execution, the above occurs within $\bigO(\delta)$ asynchronous cycles.
\end{lemma}
\begin{proof}
Towards a proof in the way of contradiction, suppose that the lemma is false. That is, for any finite prefix $R'$ of $R$,
%
%
none of the lemma invariants hold.
%
%
The proof uses claims~\ref{thm:livenessInoStep} and~\ref{thm:livenessIcv} for demonstrating a  contradiction with the above assumption in Claim~\ref{thm:livenessIatLeastDelta}.

\begin{claim}
\label{thm:livenessInoStep}
$R'$ does not include a step in which processor $p_i$ evaluates the if-statement condition in line~\ref{ln:prevEqReg} to be true (or at least one of the lemma invariants holds). 
\end{claim}
\begin{claimProof}
Arguments (1), (2) and (3) show that during $a_i \in R'$ processor $p_i$ calls the function $\sStore(\{(k, \pndSnap[k].sns, prev):(k, s, \bullet) \in S\})$. Argument (4) shows that this implies that invariant (ii) holds. Thus, we reached a contradiction with the assumption in the lemma proof.

\smallskip

\noindent Argument (1): \emph{a call to $\mathrm{baseWrite}()$ ends eventually, \ie within a finite time.~~}


A call to $\mathrm{baseWrite}(v)$ starts with $p_i$ incrementing $ts_i$ and storing it in $reg_i[i]$ (line~\ref{ln:tsIncrementWrite}) to a value that is unique (in the system state that immediately follows) with respect to $ts$'s variables and fields that are associated with $p_i$ (Theorem~\ref{thm:terminatingAlgConvergence}). We note that the repeat-until loop in line~\ref{ln:waitUntilWRITEackReg2} terminates within a finite time due to the theorem assumption that $R$ is consistent as well as the system settings assumptions about communication fairness and that the system always includes a majority of processors that never fail.

\smallskip

\noindent Argument (2): \emph{the repeat-until loop in lines~\ref{ln:letSprime} to~\ref{ln:waitUntilSNAPSHOTackReg2} ends eventually.~~} 

The call to $\baseSnapshot(S_i): (i, \bullet) \in S_i$ starts with $p_i$ incrementing $ssn_i$ (line~\ref{ln:prevSsnGetsRegSsnPlusOne2}) to a value that is unique (in the system state that immediately follows) with respect to $ssn$'s variables and fields that are associated with $p_i$ (Theorem~\ref{thm:terminatingAlgConvergence}). We note that the repeat-until loop in line~\ref{ln:waitUntilSNAPSHOTackReg2} terminates within a finite time due to the system settings assumptions about communication fairness and that the system always includes a majority of processors that never fail as well as the theorem assumption that $R$ is consistent and the uniqueness of $ssn_i$'s value (or the fact that $\sS_i()=\emptyset$, which implies the lemma since then invariant (iii) holds).

\smallskip

Argument (3) considers a call that $p_i$ performs to $\baseSnapshot()$ with the parameter $S_i$.

\noindent Argument (3): \emph{showing that eventually, $p_i \in \sP$ executes $\baseSnapshot_i(S_i): (i, \bullet) \in S_i$ where it takes a step $a_i$ that includes the execution of the if-statement in line~\ref{ln:prevEqReg}.~~}

\indent  The assumption that invariant (iii) does not hold in $R'$ implies that $(i, \bullet) \in \sS_i$ whenever processor $p_i$ takes a step that includes the execution of $\baseSnapshot_i(S_i)$ or line~\ref{ln:letSexceedDeltaCall}, which is part of Algorithm~\ref{alg:terminating}'s do-forever loop. The latter occurs within finite time (due to Argument (1) of this claim) and it includes the call to $\baseSnapshot_i()$ (line~\ref{ln:letSexceedDeltaCall}). Thus, the execution of line~\ref{ln:prevEqReg} is implied by the fact that the repeat-until loop in lines~\ref{ln:letSprime} to~\ref{ln:waitUntilSNAPSHOTackReg2a} eventually ends due to Argument (2) of this claim.

\smallskip

\noindent Argument (4): \emph{showing that invariant (ii) holds.~~} 

\indent The function $\sStore_i()$, which $p_i$ calls in line~\ref{ln:prevEqReg}, repeatedly sends to all processors the message $\mathrm{SAVE}$ until $p_i$ receives matching $\mathrm{SAVEack}$ messages from a majority of processors. Theorem~\ref{thm:terminatingAlgConvergence} and the assumption that $R'$ is consistent imply that every received $\mathrm{SAVEack}$ message can be associated with a matching $\mathrm{SAVE}$ message that was indeed sent during $R$. Thus, the rest of the proof shows that the existence of this majority of acknowledgments from processors $p_j \in \sP$ implies that invariant (ii) holds (due to the intersection property of majority groups). According to lines~\ref{ln:safeStoreArrival} to~\ref{ln:safeStoreSendAck}, the arrival of the message $\mathrm{SAVE}$ to $p_j \in \sP$ assures that $\pndSnap_j[i].\mathit{fnl} \neq \bot$ before sending the message $\mathrm{SAVEack}$ back to $p_i$. This is due to Theorem~\ref{thm:terminatingAlgConvergence} and the assumption that $R$ is consistent.  
\end{claimProof}

\begin{claim}
\label{thm:livenessIcv}
%
%
The property of eventual helping-scheme triggering holds (Definition~\ref{def:dissemination}) or at least one of the lemma invariants holds. 
\end{claim}
\begin{claimProof}
We first show that the claim holds for a majority of the processors by considering the case of $j=i$ before considering the case of $j \neq i$. Then, we show that the claim holds for any processor that performs an unbounded (possibly infinite) number of write operations concurrently with $p_i$'s snapshot operation.

\smallskip

\textbf{The $\mathbf{j=i}$ case.~~}
%
%
Eventually, $p_i$ calls $\baseSnapshot(S_i): (i, \bullet) \in S_i$ (line~\ref{ln:letSexceedDeltaCall}) due to Argument (3) in Claim~\ref{thm:livenessInoStep}. 
This, Argument (2) in Claim~\ref{thm:livenessInoStep} and Claim~\ref{thm:livenessInoStep} imply the execution of line~\ref{ln:vcUpdateCond} in every call for $\baseSnapshot(S_i)$. Hence, the claim for the case of $j=i$.  

\smallskip

\textbf{The $\mathbf{j\neq i}$ case.~~}
By the arguments of the case of $j=i$, eventually, processor $p_i$ executes  
lines~\ref{ln:letSprime} and~\ref{ln:ssnPlusOne2} in which $p_i$ broadcasts the record $(i,\pndSnap_i[i].sns$, $\pndSnap_i[i].vc) \in S'$ to all processors in the system via $\mathrm{SNAPSHOT}(S',\bullet)$ messages. This repeats until a majority of nodes acknowledge its reception, which we know to occur within a finite time due to the theorem assumption that $R$ is consistent as well as the system settings assumptions about communication fairness and that the system always includes a majority of processors that never fail. Note that $\pndSnap_i[i].vc\neq \bot$ holds by the above case of $j=i$. Moreover, once processor $p_j$ receives this $\mathrm{SNAPSHOT}$ message, $\pndSnap_j[i].vc\neq \bot$ holds (line~\ref{ln:vcRemoteUpdate}). Furthermore, $y \preceq VC_j$ holds since otherwise, $R$ is not a consistent execution.

\smallskip

\textbf{The case of any processor, $p_k \in \sP$, that performs an unbounded (possibly infinite) number of write operations concurrently with $p_i$'s snapshot operation.~~}
The above arguments for the case of $j\neq i$ can be repeated as long as invariant (iii) does not hold particularly for the case of $p_k$ since the acknowledgments to its $\mathrm{WRITE}$ messages are   piggybacked with $\mathrm{SNAPSHOT}$ message, cf. the comment of the code in line~\ref{ln:concurentPiggybacking}. Thus, the arrival of such a $\mathrm{SNAPSHOT}$ message to all $p_j \in \sP$ occurs eventually (or one of the lemma invariants holds).
\end{claimProof}

\begin{claim}
\label{thm:livenessIatLeastDelta}
Let $c' \in R'$ be the system state that appear in Definition~\ref{def:dissemination}, which Claim~\ref{thm:livenessIcv} showed to exist. Let $x$ be the (finite or infinite) number of iterations of Algorithm~\ref{alg:terminating}'s outer loop in $\baseSnapshot()$ function (lines~\ref{ln:prevSsnGetsRegSsnPlusOne2} and~\ref{ln:outerStop}) that processor $p_i$ takes between $c'$ and $c'' \in R'$, where $c''$ is a system state after which the system eventually reaches the state $c'''$ in which at least one of the lemma invariants holds. The value of $x$ is actually finite and $x\leq\delta$. 
\end{claim}

\begin{claimProof}
Arguments (1) to (3) show that $x\leq\delta$. Moreover, between $c''$ and $c'''$ there are a finite number of steps.


\smallskip

\noindent Argument (1): \emph{as long as none of the proof invariants hold, whenever processor $p_i$ iterates over the outer loop in $\baseSnapshot()$ function (lines~\ref{ln:prevSsnGetsRegSsnPlusOne2} and~\ref{ln:outerStop}), $p_i$ takes a step in which it tests the if-statement condition at line~\ref{ln:prevEqReg} and that condition does \emph{not} hold.}

\indent  
%
%
Eventually, $p_i$ takes a step that includes a call to $\baseSnapshot(S_i): (i, \bullet) \in S_i$ (line~\ref{ln:letSexceedDeltaCall}) at least once (Argument (3) in Claim~\ref{thm:livenessInoStep}). By Claim~\ref{thm:livenessInoStep}, that call includes the execution of line~\ref{ln:prevEqReg} in which the if-statement condition does not hold (because then Argument (4) in Claim~\ref{thm:livenessInoStep} implies that invariant (ii) holds). 

\smallskip


\noindent Argument (2):  \emph{suppose that there are at least $x$ consecutive and complete iterations of $p_i$'s outer loop in the $\baseSnapshot()$ function (lines~\ref{ln:prevSsnGetsRegSsnPlusOne2} and~\ref{ln:outerStop}) between $c'$ and $c''$ in which the if-statement condition at line~\ref{ln:prevEqReg} does not hold. There are at least $x$ write operations that run concurrently with the snapshot operation that has the index of $s$.} 

\indent  The only way that the if-statement condition in line~\ref{ln:prevEqReg} does not hold in a repeated manner is by repeated changes of $ts$ field values in $reg_i$ during the different executions of lines~\ref{ln:prevSsnGetsRegSsnPlusOne2} to~\ref{ln:waitUntilSNAPSHOTackReg2a}. Such changes can only happen due to increments of $ts_j:p_j \in \sP$ (line~\ref{ln:tsEqtsPlusOne}) at the start of $\sWrite()$ operations.

\smallskip

\noindent Argument (3): \emph{there exists $x'\leq \delta$ for which $(i, \bullet) \in \sS_i()$ (or at least one of the lemma invariants hold), where $x'$ is the number of consecutive and complete iterations of the outer loop in the $\baseSnapshot()$ function (lines~\ref{ln:prevSsnGetsRegSsnPlusOne2} and~\ref{ln:outerStop}) between $c'$ and $c''$ in which the if-statement condition at line~\ref{ln:prevEqReg} does not hold.} 

\indent  Argument (2) implies that the number of iterations continues to grow (as long as none of the lemma invariants holds). The proof of Argument (2) and Claim~\ref{thm:neverRemove} imply that during every such iteration there are increments of at least one of the summation $\sum_{\ell\in\{1,\ldots,n\}}\mathrm{VC}_i[\ell]-\pndSnap_i[i].vc[\ell]$ until that summation is at least $\delta$. Recall that $\pndSnap[i].vc\neq\bot$ (Claim~\ref{thm:livenessIcv}) and $\pndSnap_i[i].\mathit{fnl} = \bot$ (the assumption that none of the lemma invariants hold in $R'$). Thus, $(i, \bullet) \in \sS_i()$ holds (line~\ref{ln:exceedDelta}, for the case of $k=i$).

\smallskip

\noindent Argument (4): \emph{suppose that $p_i$ has taken at least $x'$ iterations of the outer loop in $\baseSnapshot()$ function (lines~\ref{ln:prevSsnGetsRegSsnPlusOne2} and~\ref{ln:outerStop}) after system state $c'$ (which is defined in Claim~\ref{thm:livenessIcv}). After these $x'$ iterations, suppose that the system has reached a state $c''$ in which $(i, \bullet) \in \sS_i()$, as in Argument (3). 
%
%
Eventually after $c''$, one of the lemma invariants holds in system state $c'''$.}

To the end of showing that there is a diffusion set, $D$, that does not interrupt $(i, \bullet) \in \sS_i()$ in $c''' \in R$, we first show that a majority set $M \subseteq \sP: |M|> |\sP|/2$ exists before considering any processor  $p_j$ that executes an unbounded (possibly infinite) number of write operations concurrently with task $(i, \bullet) \in \sS_i()$.

Eventually after $c''$ (which Argument (3) defines), it holds that $reg_j$'s $ts$ fields are not smaller than the ones of $reg_i$'s $ts$ fields in $c''$. This is because in every iteration of the outer loop in $\baseSnapshot()$ function (lines~\ref{ln:prevSsnGetsRegSsnPlusOne2} and~\ref{ln:outerStop}), processor $p_i$ broadcasts $reg_i$ to all processors (line~\ref{ln:ssnPlusOne2}). These $\mathrm{SNAPSHOT}$ messages arrive eventually at least a majority, $M$, of non-failing processors $p_j\in M$ and upon their arrival $p_j$ updates $reg_j$ (lines~\ref{ln:SNAPSHOTarrival} to~\ref{ln:SNAPSHOTarrivalEND}). The rest of the proof shows that $(i, \bullet) \in \sS_j()$ holds (line~\ref{ln:exceedDelta} for the case of $k=i$); the reasons for that are similar to the ones that appear in the proof of Argument (3).

For the case of processor $p_j$ that executes an unbounded (possibly infinite) number of write operations concurrently with task $(i, \bullet) \in \sS_i()$, we note that similar to the above arguments can repeat due to the acknowledgments of $\mathrm{WRITE}$ messages are piggybacked with $\mathrm{SNAPSHOT}$ message, cf. the comment of the code in line~\ref{ln:concurentPiggybacking}.  
\end{claimProof}

To the end of completing the proof of the lemma, we note that, other than in Claim~\ref{thm:livenessIatLeastDelta}, all interaction between $p_i$ and the other diffusion set members $p_j \in D$ are based on a constant number of round-trip interactions, which can be performed within $\bigO(1)$ asynchronous cycles. For the case of Claim~\ref{thm:livenessIatLeastDelta}, $\bigO(\delta)$ asynchronous cycles are required.
\end{proof}

\smallskip

We note that invariants (i) and (i) of lemmas~\ref{thm:livenessI} and~\ref{thm:livenessII} match and invariant (iii) of Lemma~\ref{thm:livenessI} implies that the theorem holds.

\begin{lemma}[\textbf{Algorithm~\ref{alg:terminating}'s termination --- part II}]
	\label{thm:livenessII}
	Let  $R$ be a consistent execution of Algorithm~\ref{alg:terminating} (Definition~\ref{def:safeConfigTerminating}) and $p_i \in \sP$. Moreover, suppose that 
	%
	%
	in any system state of $R$, it holds that $\pndSnap_i[i]=(s,\bullet,\bot)$, $s>0$.
	Eventually, the system reaches a state $c \in R$ in which $\pndSnap_i[i]=(s,\bullet,x):x\neq\bot$. Moreover, whenever $R$ is a fair execution, the above occurs within $\bigO(\delta)$ asynchronous cycles. 
\end{lemma}
\begin{proof}
	The proof is implied by Claims~\ref{thm:livenessIImajority} and~\ref{thm:livenessIIexceedDelta}.
	
	\begin{claim}
		\label{thm:livenessIImajority}
		Suppose that $\pndSnap_i[i].sns> 0$ holds in any system state of $R$ and that for any majority $M \subseteq \sP:|M|>|\sP|/2$ includes at least one $p_j \in M$, such that $\pndSnap_j[i]=(s,\bullet,x):x\neq\bot$. Eventually, the system reaches a state $c \in R$ in which $\pndSnap_i[i]=(s,\bullet,x):x\neq\bot$.
	\end{claim}
	\begin{claimProof}
		Towards a proof in the way of contradiction, suppose that the claim is false. That is, $R$ has no suffix $R'$, such that $\pndSnap_i[i]=(s,\bullet,x):x \neq\bot$ holds in the starting system state of $R'$. Arguments (1) to (2) show the needed contradiction. Recall that by Argument (3) in Claim~\ref{thm:livenessInoStep} it holds that every iteration of the do-forever loop during $R'$ includes a call to $\baseSnapshot(S_i):(i, \bullet) \in S_i$ (line~\ref{ln:letSexceedDeltaCall}).
		
\smallskip

 \noindent Argument (1): \emph{eventually, a majority of nodes acknowledge the message $\mathrm{SNAPSHOT}(\bullet,reg_i, ssn_i)$, such that for at least one $\mathrm{SNAPSHOTack}(\mathnormal{AJ} ,\bullet,\mathnormal{ssnJ})$ acknowledgment, it holds that $\mathnormal{ssnJ} =ssn_i$, $(s, \pndSnap_i[s].sns,\bullet, x) \in \mathnormal{AJ}$ and $\bot \neq x = \pndSnap_j[s].\mathit{fnl}$.~~}

\indent We show that eventually, for at least one $\mathrm{SNAPSHOTack}(\mathnormal{AJ} ,\bullet,\mathnormal{ssnJ})$ message, say the one from $p_j$, it holds that $\mathnormal{ssnJ} =ssn_i$, $(s, \pndSnap_i[s].sns, x) \in \mathnormal{AJ}$ and $\bot \neq x = \pndSnap_j[s].\mathit{fnl}$. This is followed from the fact that line~\ref{ln:ssnPlusOne2} broadcasts repeatedly the $\mathrm{SNAPSHOT}(\bullet, ssn_i)$ message until at least a majority receives it and acknowledges it. By Argument (2) in Claim~\ref{thm:livenessInoStep}, the repeat-until loop in lines~\ref{ln:letSprime} to~\ref{ln:waitUntilSNAPSHOTackReg2} ends eventually. Moreover, by the proof of Argument (2) in Claim~\ref{thm:livenessInoStep}, the received acknowledgments indeed refer to these messages, and at least one of these acknowledgments includes $(s, \pndSnap_i[s].sns,\bullet, x) \in \mathnormal{AJ}:\bot \neq x = \pndSnap_j[s].\mathit{fnl}$ due to the claim assumption about $M$. 
		
\smallskip

\noindent Argument (2): \emph{eventually, $\pndSnap_i[i].\mathit{fnl}\neq \bot$ holds.~~}

\indent By the proof of Argument (1), a majority $M$ of processors successfully acknowledge the message $\mathrm{SNAPSHOT}$ (line~\ref{ln:SNAPSHOTackSend}). By the claim assumption and the intersection property of majority sets, for at least one of the acknowledging nodes, say, $p_k$, it holds that $\pndSnap_k[i]=(s,\bullet,x):x\neq\bot$. Since line~\ref{ln:SNAPSHOTackSend} piggyback the $\mathrm{SNAPSHOTack}(reg, \mathnormal{ssnJ})$ and $\mathsf{SAVE}(A)$ messages, the message $\mathsf{SAVE}(A)$ successfully arrives from $p_k$ to $p_i$. The proof is done by the fact that once the message $\mathsf{SAVE}$ arrives to $p_i$, line~\ref{ln:safeStoreUpon} updates $\pndSnap_i[i]=(s,\bullet,x):x\neq\bot$.
\end{claimProof}

\begin{claim}
	\label{thm:livenessIIexceedDelta}
	Suppose that $R$ does not interrupt task $(i, \bullet) \in \sS_i()$ in $R$'s starting system state (Definition~\ref{def:dissemination}). Eventually, the system reaches a state $c \in R$ in which $\pndSnap_i[i]=(s,\bullet,x):x\neq\bot$.
\end{claim}
\begin{claimProof}
The proof is by a sequence of statements, \ie arguments (1) to (3). 
			
	\smallskip

	\noindent Argument (1): \emph{a call to $\sStore_k()$ during $R$ implies the claim eventually.~~}

	\indent Suppose that the if-statement condition in line~\ref{ln:prevEqReg} holds after the execution of lines~\ref{ln:ssnPlusOne2} to~\ref{ln:waitUntilSNAPSHOTackReg2a}. For the case of $i=k$, the call to $\sStore_i(\{(\bullet,r_i):r_i\neq \bot\})$ causes $p_i$ to send the message  $\mathsf{SAVE}(\{(\bullet,r_i):r_i\neq \bot\})$ to itself and the reception of this message assigns $r_i\neq \bot$ to $\pndSnap_i[i].\mathit{fnl}$ (line~\ref{ln:safeStoreUpon}). For the case of $k\neq i$, the proof of Argument (4) of Claim~\ref{thm:livenessInoStep} implies that the call to $\sStore_k()$ creates a set, $M \subseteq \sP$, of majority processors, $p_j$, that store in $\pndSnap_j[i].\mathit{fnl}$ the result of $(i, \bullet) \in \sS_j()$. Moreover, the proof of Argument (2) of Claim~\ref{thm:livenessIImajority} implies the claim.

	\smallskip
	
	\noindent Argument (2): \emph{eventually, the system reaches a state in which} there are no active write operations {or neither $(i, \bullet) \in \sS_j()$ nor $(i, \bullet) \in \sS_i()$ hold.~~}
	
	\indent  Note that during $R$, any processor that executes the $\sWrite()$ function, returns from this call to $\sWrite()$ eventually (Argument (1) in Claim~\ref{thm:livenessInoStep} and the assumption that $R$ is non-interruptive). Specifically, by Definition~\ref{def:dissemination}, for any processor $p_j$ that executes an unbounded (possibly infinite) number of write operations concurrently with task $(i, \bullet) \in \sS_j()$, it holds that $\pndSnap_j[i] \in \exceed_j$ eventually. Then, $p_j$ calls $\baseSnapshot_j(S_j):S_j=\exceed_j$. Note that by the exit condition of the outer repeat-until loop (line~\ref{ln:outerStop}) implies that $p_j$ does not return from that call to $\baseSnapshot_j(S_j):S_j=\exceed_j$ as long as $(i, \bullet) \in \sS_j()$ holds. Note that if the latter does not hold, then neither $(i, \bullet) \in \sS_i()$ holds (due to claims~\ref{thm:livenessIcv} and~\ref{thm:livenessIatLeastDelta}).

	\smallskip

	\noindent Argument (3): \emph{eventually, $p_i$'s snapshot operation, $(i, \bullet) \in \sS_i()$, terminates.~~}
	
	\indent Towards a proof in the way of contradiction, suppose that the statement is false. That is, let $R'$ be a suffix of $R$, in which $p_i$'s snapshot operation does not terminate. In other words, in every system state of $R'$, it holds that $\pndSnap_i[i]=(s,\bullet,\bot)$.

	By Argument (2) of Claim~\ref{thm:livenessInoStep}, the repeat-until loop in lines~\ref{ln:ssnPlusOne2} to~\ref{ln:waitUntilSNAPSHOTackReg2a}  terminates. By line~\ref{ln:waitUntilSNAPSHOTackReg2}, this happens only when (i) $(S_i \cap \Delta_i) \neq \emptyset$ or (ii) majority of matching $\mathrm{SNAPSHOTack}$ messages arrived. The former case implies that the proof is done, because the only way in which $(i,\bullet)$ leaves the set $(S_i \cap \Delta_i)$ is by having $\pndSnap_i[i]=(s,\bullet,x):x\neq\bot$ (due to the execution of line~\ref{ln:safeStoreUpon}). For the latter case, we note that Argument (2) implies that the if-statement condition in line~\ref{ln:prevEqReg} holds, and thus by Argument (2), which showed the needed contradiction, completes the proof of the claim.	
\end{claimProof}

To the end of completing the proof of the lemma, we note that all interaction between $p_i$ and the other diffusion set members $p_j \in D$ (that are considered directly in the proof of the lemma) are based on a constant number of round-trip interactions, which can be performed within $\bigO(1)$ asynchronous cycles. Since the proof considers also Lemma~\ref{thm:livenessI}, then (indirectly) $\bigO(\delta)$ asynchronous cycles are needed.
\end{proof}

\begin{lemma}[\textbf{Algorithm~\ref{alg:terminating}'s linearization}]
	\label{thm:linearizationII}
	Algorithm~\ref{alg:terminating} respects the sequential specification of the snapshot object.
\end{lemma}
\begin{proof} We note that the $\mathrm{baseWrite}()$ functions in Algorithms~\ref{alg:9disCongif} and~\ref{alg:terminating} are identical. Moreover, Algorithm~\ref{alg:9disCongif}'s lines~\ref{ln:prevRegSsnPlusOne} to~\ref{ln:9waitUntilSNAPSHOTack} are similar to Algorithm~\ref{alg:terminating}'s lines~\ref{ln:prevSsnGetsRegSsnPlusOne2} to~\ref{ln:waitUntilSNAPSHOTackReg2a}, but differ in the following manner: (i) the dissemination of the operation tasks is done outside of Algorithm~\ref{alg:9disCongif}'s lines~\ref{ln:prevRegSsnPlusOne} to~\ref{ln:9waitUntilSNAPSHOTack} but inside of Algorithm~\ref{alg:terminating}'s lines~\ref{ln:prevSsnGetsRegSsnPlusOne2}, and (ii) Algorithm~\ref{alg:9disCongif} considers one snapshot operation at a time whereas Algorithm~\ref{alg:terminating} considers many snapshot operations.


The proof is based on observing that the definition of linearizability (Section~\ref{sec:snapshotTask}) allows concurrent snapshot operations to have the same result (as long as they each individually respect all the other constraints that appear in the definition of linearizability). Moreover, by the same definition, the  linearizability property does not depend on the way in which the snapshot tasks (and their results) are disseminated.  
(Indeed, the linearizability proof of Delporte-Gallet \etal~\cite[Lemma~7]{DBLP:journals/tpds/Delporte-Gallet18} does not consider the way in which the snapshot tasks, and their results, are disseminated when selecting linearization points. These linearization points are selected according to some partition, defined in~\cite[Lemma~7]{DBLP:journals/tpds/Delporte-Gallet18}. The proof there explicitly allows the same partition to include more than one snapshot result.)
\end{proof}

\begin{lemma} \label{thm:allBounds}
	Algorithm~\ref{alg:terminating}'s costs are $4n+18$ per $\sWrite()$ and $8n+2\delta+34$ per $\sSnapshot()$.
\end{lemma}

\begin{proof}
	The lemma's proof is given in claims~\ref{thm:latencyBoundWrite} and~\ref{thm:latencyBoundSnapshot}. They consider a legal execution of Algorithm~\ref{alg:terminating} and use claim~\ref{thm:specificInBaseSnapshot} for bounding Algorithm~\ref{alg:terminating}'s cost measures. Their proofs consider the term message round trip exchange (RTE), which refers to the period that starts in a system state that appears immediately before a step in which a request message is sent from processor $p_i$ to $p_j$ and ends immediately after the step in which $p_i$ receives $p_j$'s response to its request. We note that each RTE has the cost measure of two (Section~\ref{sec:timeComplexity}).
    
    \begin{claim} \label{thm:specificInBaseSnapshot}
        Let $c_{\mathit{calls}} \in R$ be the system state that appears immediately before the step where node $p_i$ calls $baseSnapshot_i(S_i)$. Within $(2n+8)$ RTEs, the system reaches the state $c_{\mathit{returns}} \in R$ that appears immediately after the step where $p_i$ returns from this call to $baseSnapshot_i(S_i)$.
    \end{claim}

    \begin{claimProof}
        The proof begins with a presentation of the longest possible happened-before relation between the system states $c_{\mathit{calls}}$ and $c_{\mathit{returns}}$. Then, the proof sums up the RTEs found in this happened-before relation.

        \begin{enumerate}
            \item \label{item:allDelta} \textit{Node $p_i$ observes the tasks in $\sdelta$ to be concurrent with at least $\delta$ writes.~~} Suppose that $p_i$ has observed all tasks in $\sdelta$ to be concurrent with at least $\delta$ writes. To the end of not losing generality, at the end of the proof (item~\ref{item:double}), we consider the complementary case. Thus, $p_i$ does not return until all tasks in $\sdelta$ have a result due to the exit condition of the outer repeat-until loop (line~\ref{ln:outerStop}). 
            
            \item \label{item:freshtasks} \textit{Only fresh tasks are stored in $\pndSnap[]$.~~} Until item~\ref{item:safeRegEVENMORE} in the proof, we assume that no node stores in $\pndSnap[]$ a result for any task in $\sdelta$. In item~\ref{item:safeRegEVENMORE}, we explain why the proof does not lose generality due to this assumption.
            
            \item \label{item:majority} \textit{A majority observe the tasks in $\sdelta$ to be concurrent with at least $\delta$ writes.~~} Since node $p_i$ called $\baseSnapshot_i(S_i)$, it accesses the quorum system via broadcasting of $\mathrm{SNAPSHOT}(\sdelta, reg_i, \bullet)$ (line~\ref{ln:ssnPlusOne2}). Within one RTE from $c_{\mathit{calls}}$, node $p_i$'s first snapshot quorum access is completed and the system reaches a state, $c_{\sdelta \text{ majority}} \in R$, in which a majority of the nodes store the tasks in $\sdelta$ in their local $\Delta$-set. The reasons for this are as follows. Recall that we assume that $p_i$ has observed that all tasks in $S_i$ have been concurrent with at least $\delta$ writes (item~\ref{item:allDelta}). Also note that incoming $reg$ values are always merged with the local $reg$ variable using the max function (line~\ref{ln:SNAPSHOTarrivalEND}). Therefore, all nodes in this majority merge the incoming $reg$ values with their local ones and thus they must include the tasks in $\sdelta$ in their local $\Delta$-sets.

            \item \label{item:whenpossible} \textit{Node $p_i$ performs as many snapshot quorum accesses as possible.~~} After $c_{\mathit{calls}}$, $p_i$ performs snapshot quorum accesses (line~\ref{ln:ssnPlusOne2}). If $reg_i = prev_i$ (line~\ref{ln:prevEqReg}), $baseSnapshot_i(S_i)$ returns. Otherwise, a concurrent write operation was observed by $p_i$. Since the proof aims to find the longest happened-before relation, let us assume, whenever possible, that $reg_i \neq prev_i$ after every quorum access completion. Similarly, we assume that, whenever possible, no task in $\sdelta$ terminates, because that also allows $\baseSnapshot_i(S_i)$ to return.

            \item \label{item:oneMoreCalled} \textit{All nodes observe the tasks in $\sdelta$ to be concurrent with at least $\delta$ writes.~~} In order for $reg \neq prev$ to hold after the completion of every quorum access, at least one write has to be performed by another node concurrently with every quorum access of $p_i$ (line~\ref{ln:ssnPlusOne2}). Starting from $c_{\sdelta \text{ majority}}$, this repeated interruption by a concurrent write can happen at most $n-1$ times, because each time a node $p_j$ writes, at least one node in the majority that $p_j$ receives replies for its $\mathrm{WRITEack}$ message has that reply piggybacked with a $\mathrm{SNAPSHOT}(\sdelta, \bullet)$ message, cf. the comment of the code in line~\ref{ln:concurentPiggybacking}. 
            
            Thus, when $p_j$ returns from $\mathrm{baseWrite}_j()$, node $p_j$ immediately calls $\baseSnapshot_j(S_j) : (\sdelta) \subseteq S_j$. The reasons for this are as follows. Recall that in item~\ref{item:majority}, we showed that for a majority of the nodes, the property that the node has observed the tasks in $\sdelta$ to be concurrent with at least $\delta$ writes, holds. Due to that property, and because that property propagates due to the piggybacked $\mathrm{SNAPSHOT}(\sdelta, \bullet)$ message (by a similar reason as in item~\ref{item:majority}), $(\sdelta) \subseteq S_j$ holds. Moreover, the call to $\baseSnapshot_j(S_j)$ is indeed immediate, because the call appears in line~\ref{ln:letSexceedDeltaCall} which is listed right after line~\ref{ln:writePendingExceedDelta} where $\mathrm{baseWrite}()$ returns. Furthermore, recall the assumption from item~\ref{item:whenpossible} that, whenever possible, no task in $\sdelta$ terminates.

            \item \label{item:allCalled} \textit{All non-failing and writing nodes call $\baseSnapshot()$.~~} Let $c_\mathit{all~called}\in R$ denote the system state where all non-failing and writing nodes $p_j$ have called $\baseSnapshot_j(S_j) : (\sdelta) \subseteq S_j$. Recall from item~\ref{item:oneMoreCalled} that during each snapshot quorum access (line~\ref{ln:ssnPlusOne2}) by $p_i$, one other node might perform a write and then immediately calls $\baseSnapshot_j(S_j) : (\sdelta) \subseteq S_j$. Since each such snapshot quorum access takes one RTE and since by item~\ref{item:oneMoreCalled} there are at most $(n-1)$ such other nodes, the system reaches $c_\mathit{all~called}$ within $(n-1)$ RTEs from $c_{\sdelta \text{ majority}}$. It is significant that system state $c_\mathit{all~called}$ is reached, and not only $c_{\sdelta \text{ majority}}$, since the argument in item~\ref{item:safeReg} requires that all nodes and not only a majority of the nodes have called $\baseSnapshot_j(S_j) : (\sdelta) \subseteq S_j$.
            
            \item \label{item:safeReg} \textit{The first call to $\sStore()$.~~} Within one RTE from $c_\mathit{all~called}$, at least one node $p_k$ notices that $reg_k = prev_k$ (line~\ref{ln:prevEqReg}) after its snapshot quorum access (line~\ref{ln:ssnPlusOne2}) finishes. This is true because no node is performing a write quorum access at this point (they are only done in line~\ref{ln:broadcastWRITEreg2}), and thus no write can interrupt the snapshot. Since $reg_k = prev_k$, node $p_k$ calls $\sStore_k(S_k) : (\sdelta) \subseteq S_k$ (line~\ref{ln:call2sStore}) and within one more RTE, a majority of the nodes store the results of all tasks in $\sdelta$. Let $c_\mathit{majority~result} \in R$ denote this system state.
            
            We note that the arguments stated above can also consider the cases in which more than one node calls $\sStore()$ concurrently or that the calling node crashes during the execution of $\sStore()$. This is due to our assumption that the system includes a majority of non-faulty nodes and the fact that any non-$\bot$ task result stored in $\pndSnap[].\mathit{fnl}$ is legitimate.

            \item \label{item:piggyback} \textit{Node $p_i$ returns from $\baseSnapshot()$.~~} Within one RTE from $c_\mathit{majority~result}$, $p_i$ receives a $\mathrm{SAVE}$ message that is piggybacked with a $\mathrm{SNAPSHOTack}$ (cf. line~\ref{ln:sendSAVEwithA}) from a majority of the nodes when $p_i$ performs a snapshot quorum access (line~\ref{ln:waitUntilSNAPSHOTackReg2}). Thus, for each task $T \in \sdelta$, either $p_i$ gets the result for $T$ (the $\mathit{fnl}$ field stores a non-$\bot$ value) or finds that the $sns$ field is increased (meaning the node which created $T$ has finished it and then begun a new snapshot), see line~\ref{ln:prepareREsponse}. In both cases, $\sdelta$ becomes empty and $p_i$ returns from $baseSnapshot_i(S_i)$ immediately before system state $c_{\mathit{returns}}$.

\item \label{item:safeRegEVENMORE} \textit{Removing the assumption made in item \ref{item:freshtasks} about only fresh tasks are stored in $\pndSnap[]$.~~} We now turn to consider the case in which there is a non-failing node, $p_k\in \sP$, that stores in $\pndSnap_k[j]$ a result for any task in $\sdelta$ (without ever having the result stored in $\pndSnap[j]$ by the task initiator or a majority of processors). During legal executions, all transient faults are absent (Section~\ref{sec:sys}), and thus the only way that this can happen is when the processor that executes $\sStore()$ crashes. By the assumption that there is a majority of non-faulty processors and the last paragraph of item~\ref{item:safeReg}, it holds that there is a non-faulty processor that will complete the task correctly and within the bounds stated by the lemma. 

            \item \label{item:double} \textit{The complementary cases to item~\ref{item:allDelta}.~~} Recall that in item~\ref{item:allDelta}, we assume that all tasks in $S_i$ have been observed to run concurrently with at least $\delta$ writes. We now turn to consider the two complementary cases to that assumption, which are: (i) $p_i$ does not observe any task in $S_i$ to run concurrently with at least $\delta$ writes and (ii) $p_i$ observes some, but not all, tasks in $S_i$ to run concurrently with at least $\delta$ writes. 
            
            In case (i), the only task that can be in $S_i$ without having been observed to run concurrently with at least $\delta$ writes is $p_i$'s own task $T_i$ (line~\ref{ln:exceedDelta}). When $\baseSnapshot_i(S_i)$ is called, $p_i$ performs one quorum access. If $reg_i \neq prev_i$, node $p_i$, after this single quorum access, might observe that $T_i$ has been running concurrently with at least $\delta$ writes. In case it does, we are back to the case in item~\ref{item:allDelta} and the total time is one RTE plus all RTEs from item~\ref{item:allDelta} to item~\ref{item:safeRegEVENMORE}. In case it does not, $p_i$ returns from $\baseSnapshot_i(S_i)$ and the proof is complete, since only one RTE was spent. If instead $reg_i = prev_i$, node $p_i$ calls $\sStore_i()$ and then returns from $\baseSnapshot_i(S_i)$. In total, two RTEs were spent, and the proof is completed.

            In case (ii), $S_i$ contains $p_i$'s own task $T_i$ as well as some non-empty set of tasks, $E$, that $p_i$ has observed to be concurrent with at least $\delta$ writes, \ie $S_i = \{ T_i \} \cup E$. Ignoring $T_i$, the same execution as in item~\ref{item:allDelta} to item~\ref{item:safeRegEVENMORE} happens, and the tasks in $E$ terminate within the sum of the RTEs in those items. During this execution, $p_i$ also tries to get a result for $T_i$, but $p_i$ is the only node which tries to do that. When the tasks in $E$ get a result, $T_i$ also gets a result if it is $p_i$ that reads $reg_i = prev_i$. If another node gets a result for the tasks in $E$, task $T_i$ might not get a result.

            During the execution described in the paragraph above, it may happen at any time that some or several nodes observe that $T_i$ suddenly \textit{has} been concurrent with at least $\delta$ writes. For all nodes $p_j:i \neq j$, this does not necessarily cause $T_i$ to be added to $S_j \cap \Delta_j$, since we have to consider the case in which when $p_j$ calls $\baseSnapshot_j(S_j)$, the predicate $T_i \in S_j$ did not hold. For $p_i$, if this happens just after $p_i$ received results for all other tasks in $\sdelta$, \ie $E$, it means that $p_i$ essentially has to start all over again in $\baseSnapshot_i(S_i)$, since $S_i$ contains a task, $T_i$, that only $p_i$ has, but no other node, $p_j$, has called $\baseSnapshot_j(S_j) : T_i \in S_j$. 
            %
            %
            Therefore, all RTEs spent in item~\ref{item:allDelta} to item~\ref{item:safeRegEVENMORE} have to be spent again, and thus the total time for this case is twice the RTEs from those items. We note that this ``start over'' behavior cannot happen more than once with respect to $T_i$, because no tasks are added to $\sdelta$ during the call to $\baseSnapshot_i(S_i)$, and $T_i$ is the only $p_i$'s task that can go from not having $\delta$ observed concurrent writes, to having $\delta$ observed concurrent writes.
        \end{enumerate}

        Item~\ref{item:majority} is one RTE, item~\ref{item:allCalled} is $n-1$ RTEs, item~\ref{item:safeReg} is two RTEs and item~\ref{item:piggyback} is one RTE. Item~\ref{item:double} tells us to multiply the sum by two. Therefore, the longest happened-before relation between the system states $c_{\mathit{calls}}$ and $c_{\mathit{returns}}$ has $2n+8$ RTEs.
    \end{claimProof}

	\begin{claim} \label{thm:latencyBoundWrite}  
		Algorithm~\ref{alg:terminating}'s cost measure is $4n+18$ per $\sWrite()$ operation.
	\end{claim}

	\begin{claimProof}
        Let $p_i$ be the node that invokes the client $\sWrite()$ operation and $a_i \in R$ the step that includes this invocation. The proof considers the cases in which (i) the case in which $a_i$ does not include the execution of any line in $\baseSnapshot()$ as well as (ii) the case in which it does.
        In case (i), $\mathrm{baseWrite}()$, and hence $\sWrite()$, return within one RTE due to similar arguments to the ones that appear in the end of the proof of Theorem~\ref{thm:closure1} about the number of rounds it takes to perform a $\sWrite()$ operation. In case (ii), $\sWrite()$ returns within $2n+9$ RTEs due to Claim~\ref{thm:specificInBaseSnapshot} and case (i).
	\end{claimProof}

    \begin{claim} \label{thm:latencyBoundSnapshot}
        Algorithm~\ref{alg:terminating}'s cost measure is $8n+2\delta+34$ per $\sSnapshot()$ operation.
    \end{claim}
    
    \begin{claimProof}
        Let $p_i$ be the node that invokes the client $\sSnapshot_i()$ operation, which creates task $T$. Let $c_{\mathit{calls}} \in R$ be the system state that appears immediately before the step, $a_i \in R$, that includes this innovation of $\sSnapshot_i()$ and let $c_{\mathit{returns}} \in R$ be the system state that appears immediately after the step in which $p_i$ returns from that $\sSnapshot_i()$ invocation.
        The proof begins with a presentation of the longest possible happened-before relation between the system states $c_{\mathit{calls}}$ and $c_{\mathit{returns}}$. Then, the proof  sums up the RTEs found in this happened-before relation.
        
        \begin{enumerate}
            \item \label{item:baseSnapshot1} \textit{Node $p_i$ returns from $\baseSnapshot()$.~~} Suppose that step $a_i$ includes the execution of some line in $\baseSnapshot_i(S_i) : T \notin S_i$. Note that the case of $T \in S_i$ reduces the proof to only item~\ref{item:baseSnapshot2}. Moreover, if $a_i$ does not include the execution of $\baseSnapshot_i(\dot)$, the proof reduces to only items~\ref{item:doForever} and~\ref{item:baseSnapshot2}.
            Due to Claim \ref{thm:specificInBaseSnapshot}, $p_i$ returns from $\baseSnapshot_i(S_i)$ in $2n+8$ RTEs, and the system reaches state $c_\mathit{do~forever} \in R$.

            \item \label{item:doForever} \textit{Node $p_i$ observes $T$ to be concurrent with $\delta$ writes.~~} Starting from $c_\mathit{do~forever}$, node $p_i$ performs a number of iterations of its do-forever loop. 
            We follow similar assumptions as in the proof of Claim~\ref{thm:specificInBaseSnapshot}'s items $1$ to $3$. Namely, we assume that until $p_i$ has observed $T$ to be concurrent with at least $\delta$ writes, node $p_i$ calls $\baseSnapshot_i(S_i):S_i=\{T\}$ and that $reg_i \neq prev_i$ when executing lines~\ref{ln:call2sStore} and~\ref{ln:vcUpdate}, \ie whenever $p_i$ decides whether it can calculate the result of task $T$, and respectively, return from the call to $\baseSnapshot_i(S_i):S_i=\{T\}$. 
            We note that from state $c_\mathit{do~forever}$, there can be $\delta$ RTEs before $p_i$ has observed $T$ to be concurrent with at least $\delta$ writes. The reason is the following. Each RTE $p_i$ spends in the do-forever loop corresponds to one write that $T$ is concurrent with. One write-RTE-correspondence happens in $\mathrm{baseWrite}()$, where one write occurs and $p_i$ spends one RTE. One write-RTE-correspondence happens in $\baseSnapshot()$, since we assumed that $reg_i \neq prev_i$, meaning at least one write occurs during $p_i$'s call to $\baseSnapshot()$. 

            \item \label{item:baseSnapshot2} \textit{Node $p_i$ returns from $\baseSnapshot(S):T\in S$.~~} If state $c_\mathit{T \in \Delta}$ (item~\ref{item:doForever}) is reached after a call to $\baseSnapshot_i()$ returns, $p_i$ may call $\mathrm{baseWrite}_i()$, which takes one RTE. Then, $p_i$ has to call $\baseSnapshot_i(S_i): T \in S$ and returns within $2n+8$ due to Claim~\ref{thm:specificInBaseSnapshot} and by then $p_i$ has a result for $T$, meaning $snapshot_i()$ returns and state $c_{\mathit{returns}}$ is reached.
        \end{enumerate}

        Items~\ref{item:baseSnapshot1} to~\ref{item:baseSnapshot2} consider $2n+8$ RTEs, $\delta$ RTEs and $2n+9$ RTEs, respectively. Thus, the longest happened-before relation between the system states $c_{\mathit{calls}}$ and $c_{\mathit{returns}}$ has $4n+\delta+17$ RTEs.
    \end{claimProof}

Lemma~\ref{thm:allBounds}'s proof end.
\end{proof}

Theorem~\ref{thm:livenessAll}'s proof end.
\end{proof}

\section{Bounded Variations on Algorithms~\ref{alg:disCongif} and~\ref{alg:terminating}}
\label{sec:bounded}
In this section, we discuss how we can obtain bounded variations of our two unbounded self-stabilization algorithms.
Dolev \etal~\cite[Section 10]{DBLP:journals/corr/abs-1806-03498} present a solution to a similar transformation: They show how to take a self-stabilizing atomic MWMR register algorithm for message passing systems  that uses unbounded operation indices and transform it to an algorithm that uses bounded indices. We review the techniques that Dolev \etal use and explain how a similar transformation can also be used for Algorithms~\ref{alg:disCongif} and~\ref{alg:terminating} with respect to their different operation indices.   

The procedure by Dolev \etal~\cite[Section 10]{DBLP:journals/corr/abs-1806-03498} considers operation indices, which they call tags, whereas the proposed algorithms refer to operation indices as (i) $ts$ values in the variables and message fields, as well as (ii) $ssn$ and $sns$ values in the variables and message fields. That is, Dolev \etal consider just one type of operation index whereas we consider several types. For sack of simple presentation, when describing next the procedure by Dolev \etal, we refer to the case of many types of operation indices.

\begin{enumerate}
\item\label{ln:stop} Once node $p_i \in \sP$ stores an operation index that is at least $\mathrm{MAXINT}$, node $p_i$ disables the invocation of all operations (of all types) while allowing the completion of the existing ones (until all nodes agree on the highest index for each type of operation, cf. item~\ref{ln:restart}), where $\mathrm{MAXINT} \in \mathbb{Z}^+$ is a very large constant, say, $\mathrm{MAXINT} = 2^{64}-1$.
				
\item\label{ln:restart} While the invocation of new operations (of all types) is disabled (by item~\ref{ln:stop}), the gossip procedure keeps on propagating the maximal operation indices (and merge the arriving information with the local one). Eventually, all nodes share the same operation indices (for all types). At that point in time, the procedure for dealing with integer overflow events uses a consensus-based global reset procedure for replacing, per operation type,  the highest operation index with its initial value $0$, while keeping the values of all shared registers unchanged. 
\end{enumerate}

\noindent \textbf{Self-stabilizing global reset procedure.~~}
The implementation of the self-stabilizing procedure for global reset can be based on existing mechanisms, such as the one by Awerbuch \etal~\cite{DBLP:conf/wdag/AwerbuchPVD94}. We note that the system settings of Awerbuch \etal~\cite{DBLP:conf/wdag/AwerbuchPVD94} assume execution fairness. 
%
%
This is because we assume that reaching $\mathrm{MAXINT}$ can only occur due to a transient fault. Thus, execution fairness, which implies all nodes are eventually alive, is seldom required (only for recovering from transient faults).

\section{Evaluation of Self-Stabilizing Snapshots} 
\label{sec:eval}
%
%
%
%
To the end of validating the proposed solutions and comparing their performances to the ones proposed in~\cite {DBLP:journals/tpds/Delporte-Gallet18}, we present an experimental evaluation of algorithms~\ref{alg:0disCongif} to~\ref{alg:terminating} that focuses on the following research questions. 
\begin{itemize}
	\item How well does the system scale with respect to the number of writers and snapshotters? 
	\item What is the overhead needed to pay when comparing algorithms~\ref{alg:0disCongif} and~\ref{alg:disCongif} as well as algorithms~\ref{alg:9disCongif} and~\ref{alg:terminating}? 
	\item What role does the trade-off parameter, $\delta$, play in the performance of Algorithm~\ref{alg:terminating}? 
\end{itemize}
Our evaluation criteria mainly consider the operation latency, which is the time between the client invocation and operation return. We also consider the number of communication rounds, \ie the communication costs. We measure the communication rounds as the number of quorum accesses (performed in \eg line \ref{ln:0waitUntilWRITEackReg}). While the throughput might also be of interest, we note that, in our case, it can be found by simply inverting the latency, since in our experiments, the nodes perform operations immediately one after the other. We have implemented the four algorithms in a thread-based approach since some of the pseudocode in~\cite {DBLP:journals/tpds/Delporte-Gallet18} considers non-event based handling of the communications. During our experiments we have observed that the overhead of Algorithm~\ref{alg:disCongif} compared to Algorithm~\ref{alg:0disCongif} is negligible, as well as that Algorithm~\ref{alg:terminating}'s snapshot operations are completed within a period that appears to be constant with respect to the number of pending snapshot operations, whereas Algorithm~\ref{alg:9disCongif}'s does depend on the number of pending snapshots. More detailed conclusions appear in Section~\ref{sec:discussion}.

\subsection{Experiment description}
Our list of experiments includes the following.
\begin{enumerate}
    \item \textbf{Scalability of write operations with respect to write operations.~~}
    This information about write operation scalability is a basic property of Attiya \etal~\cite{DBLP:journals/jacm/AttiyaBD95}, which we need as a control experiment for the evaluation of our implementations. The experiments are conducted by considering 0 snapshotters and a varying number of writers, \ie 1 to 7 writers.
    
    \item \textbf{Scalability of write operations with respect to snapshot operations.~~}
    Since writes and snapshots compete for resources when they run concurrently, it is important to observe how everything works together. The experiments are conducted by considering 7 writers and a varying number of snapshotters, \ie 0 to 7 snapshotters.
    
    \item \textbf{Scalability of snapshot operations with respect to snapshot operations.~~}
    Since some of the studied algorithms can deal with more than one snapshot at a time, we use this experiment for understanding how snapshot operations change, with respect to the evaluation criteria, when the number of concurrent snapshots increases. The experiments are conducted by considering 0 writers and a varying number of snapshotters, \ie 1 to 7 snapshotters.
    
    \item \textbf{Scalability of snapshot operations with respect to write operations.~~}
    The aim here is to observe how concurrent snapshot operations are affected by various number of concurrent writes. The experiments are conducted by considering 7 snapshotters and a varying number of writers, \ie 0 to 7 writers.
\end{enumerate}

\subsection{System setup}
\label{sec:system_setup}
%
%
We run the experiments on PlanetLab~\cite{DBLP:journals/ccr/ChunCRBPWB03}. PlanetLab includes a set of servers distributed across the Internet, in various places of the world. PlanetLab fits well our objective to evaluate distributed systems since it provides more real-world scenarios than \eg computer simulated ones, because its experiments include real-data regarding packet loss, varying delays and varying node performances. Next, we describe our implementation, the experiment setup and the algorithmic test cases.

Remark~\ref{thm:planetLab} clarifies a natural difference between the system settings described in Section~\ref{sec:sys} and the ones that hold for PlanetLab~\cite{DBLP:journals/ccr/ChunCRBPWB03}. 

\begin{remark}
\label{thm:planetLab}
Our analytical system model (Section~\ref{sec:sys}) assumes that local processing time is always zero since it aims at allowing to demonstrate the algorithms' correctness, \ie the ability to recover from transient-faults, rather than accurately predicting the system performances (for a given implementation). However, in the PlanetLab testbed, the local processing time depends on the system load and can be observed in the experiment results.
\end{remark}


\smallskip \noindent
\textbf{Implementation.~~} The program was written in the Rust Programming Language~\cite{Matsakis:2014:RL:2692956.2663188} and is based on UDP, threads and condition variables. Each node uses a single UDP socket for communication. Furthermore, each node that runs the program uses the following one, two or three threads:

\begin{itemize}
    \item A thread that continuously listens for incoming packets on the UDP socket, by doing a blocking receive. Once a packet arrives (meaning an application level message arrives), the corresponding ``upon message arrival'' logic (\eg line \ref{ln:0arrivalWRITE}) is executed by this thread as well.
    \item A thread that continuously runs the do-forever loop if the current algorithm has a do-forever loop (\ie algorithms \ref{alg:9disCongif}, \ref{alg:disCongif} and \ref{alg:terminating}).
    \item A thread that performs client $write()$ or $snapshot()$ operations, if the node in question is a writer or a snapshotter, respectively.
\end{itemize}

Each wait-until statement (lines~\ref{ln:waitUntilReadSnap},~\ref{ln:preWrite},~\ref{ln:9repSnapIsns},~\ref{ln:waitUntilwritePendingBotsWritets} and~\ref{ln:terminatingWaitUntilSnapshot}) is implemented using a condition variable, which is notified in other parts of the program when the program notices the condition is met.
To the end of avoiding CPU overload, the do-forever loops are not running all the time. For Algorithm~\ref{alg:9disCongif}, the do-forever loop is only taking a single iteration whenever there is write that needs to be handled (line~\ref{ln:backGroundWrite}, $writePending$ becomes different from $\bot$) or there is a snapshot that needs to be handled (line~\ref{ln:unboundedBuffer}, a $SNAP()$ message is received and not yet processed). These conditions are handled using condition variables. For Algorithm~\ref{alg:disCongif}, the do-forever loop is run with a 1 second delay between every iteration, since the loop is only needed for recovery from arbitrary transient faults. We see that there is a trade-off one has to make between saving CPU time and having a fast recovery from arbitrary transient faults. For Algorithm~\ref{alg:terminating}, the do-forever loop is run with a $0.1~ms$ sleep. This ensures that the CPU is not overloaded while still adding a negligible delay compared to the RTT of the selected PlanetLab nodes, which on average is approximately $25~ms$. Furthermore, the do-forever loop only sends gossip messages once every second, since they are only needed for recovering after the occurrence of arbitrary transient faults. We note that Algorithm \ref{alg:terminating}'s do-forever loop could also be implemented using a condition variable just like Algorithm \ref{alg:9disCongif}'s was.

Quorum access is done as follows: (1) The sender constructs the message to be broadcast and sends it to all nodes. (2) The sender waits for the arrival of replies for a quorum of nodes (which is a simple majority in our case) before notifying the condition variable. If more than $100~ms$ passes, the sender re-broadcasts the message. (3) Upon receiving an acknowledgment message, the sender adds the messages's sender identity to a set according to the message's type (\ie one set for $\mathrm{WRITEack}$ messages, one set for $\mathrm{SNAPSHOTack}$ messages and one set for $\mathrm{SAVEack}$ messages). If a set contains a majority of all identities, the corresponding condition variable is notified.

The reliable broadcast used by Algorithm~\ref{alg:9disCongif} is implemented in the same way as the quorum access, but instead of waiting for a majority, all processors must acknowledge. Since no experiment contains crashed nodes, this implementation of a reliable broadcast works. However, we note that for real-world use, a fault-tolerant reliable broadcast is needed, for example uniform reliable broadcast~\cite{DBLP:books/sp/Raynal18}.

\smallskip \noindent
\textbf{Experiment setup.~~} Each physical PlanetLab node runs the program described above and they get a unique identifier. Writers are allocated from the highest node identifier and snapshotters from the lowest node identifier. Fifteen nodes are used for all experiments, and in any experiment, there are at most seven writers and snapshotters, respectively, meaning no node is both a writer and a snapshotter since it allows a clearer system behavior. The fifteen PlanetLab nodes were selected from the nodes the tool plcli~\cite{plcli} reported as healthy.

Each combination of experiment, number of (write or snapshot) clients, algorithm and $\delta$ is run for 60 seconds. The client operation latency is measured as $\frac{1}{|W|} \sum_{p_i \in W} \frac{60}{nwo_i}$ and $\frac{1}{|S|} \sum_{p_i \in S} \frac{60}{nso_i}$ respectively. The number of quorum accesses per write operation is measured as $\frac{1}{|W|} \sum_{p_i \in W} \frac{nwq_i}{nwo_i}$ and the number of quorum accesses per snapshot operation is measured as ${\sum_{p_i \in \sP} (nsq_i + nsq'_i + nrb_i)}/{\sum_{p_i \in S} nso_i}$, where $W \subseteq \sP$ denotes the set of writers, $S \subseteq \sP$ denotes the set of snapshotters, $nwo$ the number of write operations, $nso$ the number of snapshot operations, $nwq$ the number of write quorum accesses, $nsq$ the number of snapshot quorum accesses, $nsq'$ the number of quorum accesses for the  safe register and $nrb$ the number of reliable broadcasts. We note that since nodes that are not snapshotters can still perform snapshot quorum accesses (cf. Algorithm \ref{alg:9disCongif} and Algorithm \ref{alg:terminating}, for the helping schemes), the number of quorum accesses of snapshot operations is measured system-wide, as opposed to per-node, as for write operations. Each combination is run 10 times, with a random node identifier being assigned to each physical node each time. To the end of mitigating the effect of outliers, the highest and the lowest result values are discarded and the average of the rest is presented.

\smallskip \noindent
\textbf{Algorithmic test cases.~~} The experiments are run for algorithms \ref{alg:0disCongif} to \ref{alg:terminating}. For Algorithm \ref{alg:terminating}, the experiments are run with $\delta \in \{0, 1, 10, 100, 500 and 1,000,000\}$. Note that $\delta = 0$ means that as soon as a node sees a pending snapshot, it blocks its write operations and helps complete the snapshot. This means the snapshot latency is minimized at the cost of a longer write latency. The fact that $\delta \in \{1, 10, 100, 500\}$ means that at least 1, 10, 100, and respectively, 500 write operations need to be observed before a node blocks its write operations. The case of $\delta = 1,000,000$ means that, at least in our experiment setup, write operations never block in order to help snapshot operations. This implies that the write latency is minimized at the cost of potentially starving the snapshotters.

\subsection{Results}
We present the results of experiments 1 to 4.

\subsubsection*{Experiment 1: Scalability of write operations with respect to write operations}
\label{sec:exp1}

\begin{figure}
\centering
\begin{minipage}{.45\textwidth}
    \centering
    \includegraphics[width=\textwidth]{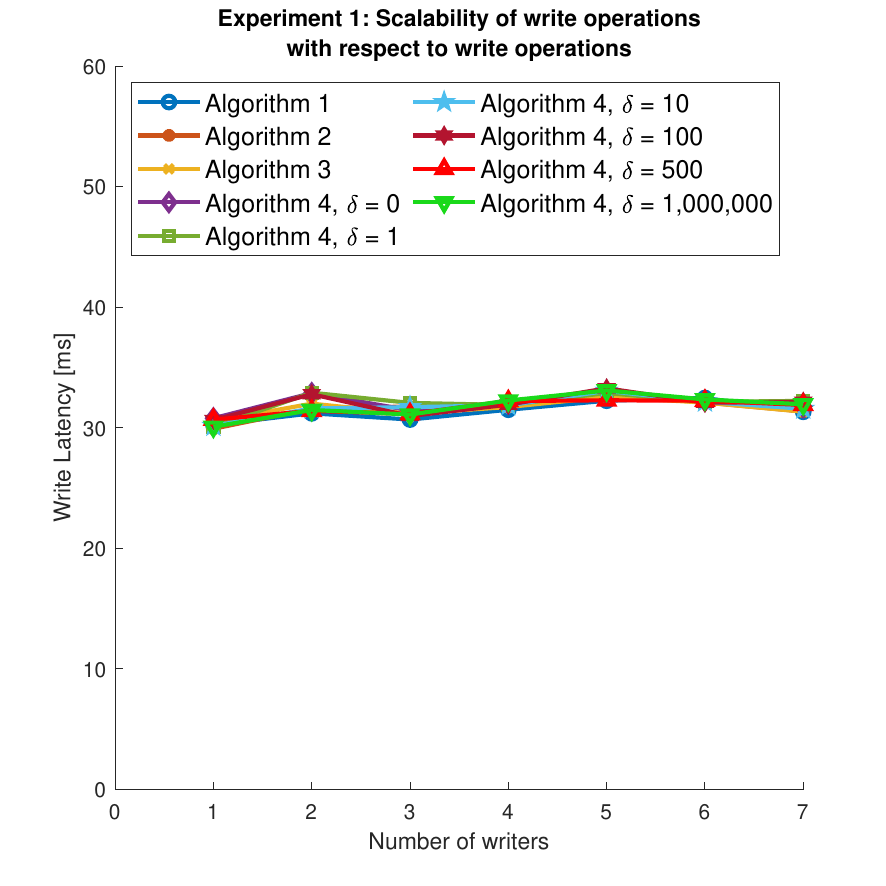}
    \captionof{figure}{The result of Experiment 1, showing an overview of the write latencies.}
    \label{fig:exp1_latency_overview}
\end{minipage}%
\hspace{0.5cm}
\begin{minipage}{.45\textwidth}
    \includegraphics[width=\textwidth]{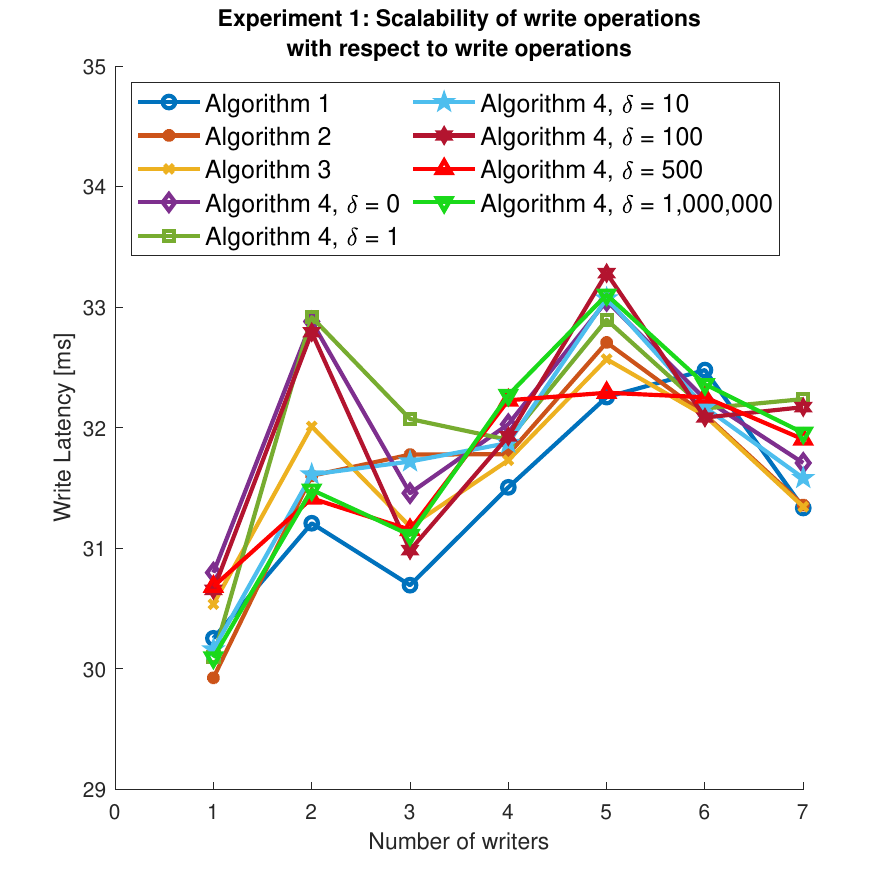}
    \centering
    \captionof{figure}{The result of Experiment 1, showing the write latencies zoomed in.}
    \label{fig:exp1_latency_zoomed}
\end{minipage}
\end{figure}

The result of Experiment 1 (latency) is shown in Figure~\ref{fig:exp1_latency_overview}. We observe that the write latency is between 29 and $34~ms$ for all algorithms and values of $\delta$. This is expected, since a write operation only needs a single quorum access and thus a single round trip. Recall that the average round trip time for the selected PlanetLab nodes is around $25~ms$ (see Section \ref{sec:system_setup}), which is matched by our results. The extra 4 to $9~ms$ comes from the local processing time which is needed in addition to the network delay (see Remark~\ref{thm:planetLab}). Figure~\ref{fig:exp1_latency_zoomed} is a zoomed in version of Figure~\ref{fig:exp1_latency_overview}. We see that there is a non-trivial pattern of latencies, which varies with the number of writers, but appears similar for all algorithms. Specifically, for a given number of writers, the difference between any two algorithms is at most $2~ms$. We suspect that this pattern is due to local processing time and CPU scheduling.

The number of write quorum accesses per write operation was, as expected, almost always one for all number of writers in Experiment 1. For all combinations of number of writers, algorithm and $\delta$, at most $0.002$ retransmissions per write operation were needed due to packet loss that is common in the Internet. Some retransmissions can also have occurred if a node started slightly before all other nodes. In that case the node might perform some retransmissions because it has started writing, but since no other node is started, the node does not get any replies, resulting in retransmissions.

\subsubsection*{Experiment 2: Scalability of write operations with respect to snapshot operations}
\label{sec:exp2}

\begin{figure}
\centering
\begin{minipage}{.45\textwidth}
    \centering
    \includegraphics[width=\textwidth]{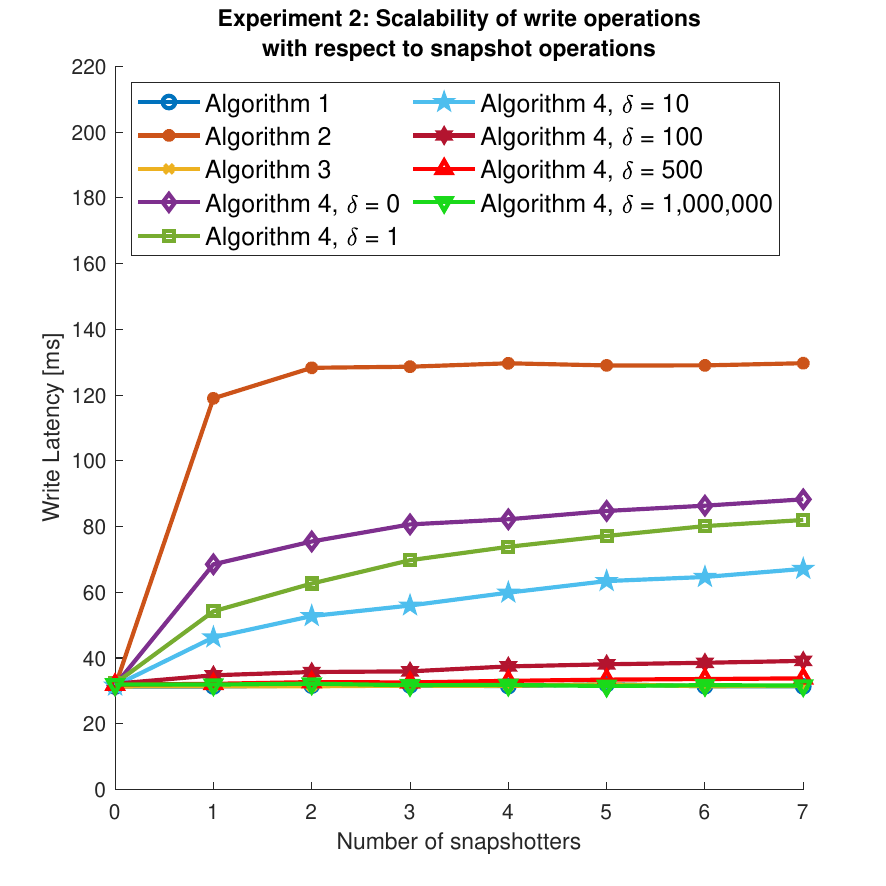}
    \captionof{figure}{The result of Experiment 2, showing an overview of the write latencies.}
    \label{fig:exp2_latency_overview}
\end{minipage}%
\hspace{0.5cm}
\begin{minipage}{.45\textwidth}
    \includegraphics[width=\textwidth]{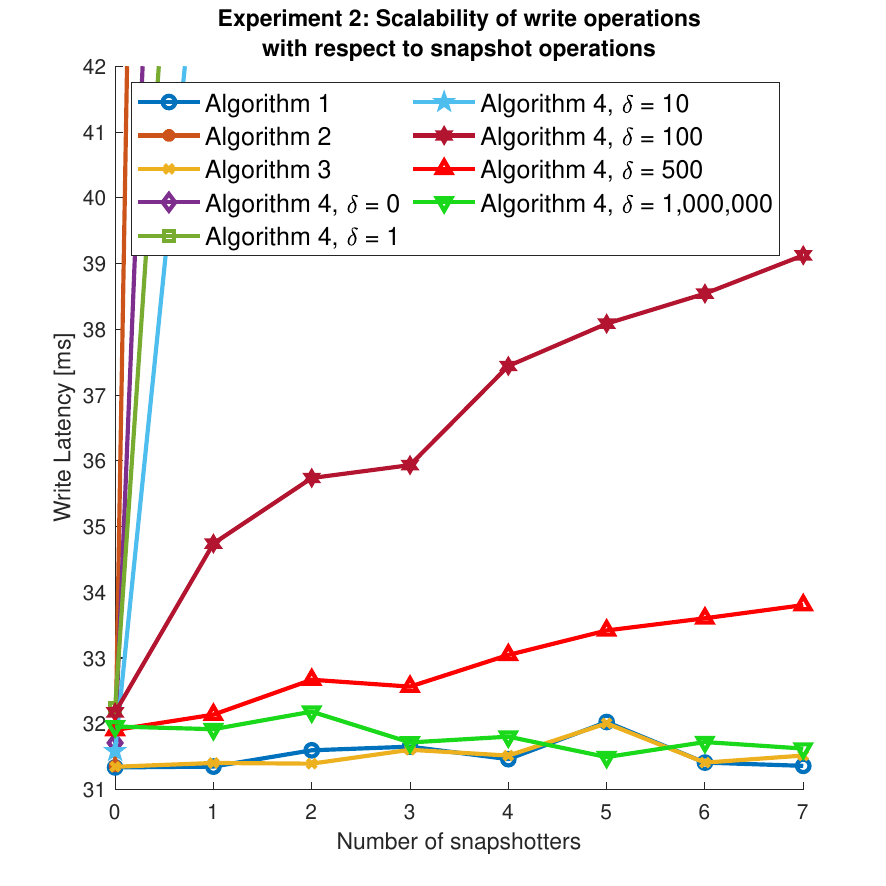}
    \centering
    \captionof{figure}{The result of Experiment 2, showing the write latencies zoomed in.}
    \label{fig:exp2_latency_zoomed}
\end{minipage}
\end{figure}

The result of Experiment 2 (latency) is shown in Figure~\ref{fig:exp2_latency_overview} (overview) and Figure~\ref{fig:exp2_latency_zoomed} (zoomed). There are several intersting observations. (1) When there are 0 snapshotters, the write latency is the same as in Experiment 1 for 7 writers. (2) The higher the $\delta$, the lower the write latency for Algorithm~\ref{alg:terminating}. In other words, this experiment confirms that higher $\delta$'s lead to lower write latencies since snapshot tasks are deferred for a longer time. (3) As the number of snapshotters increases, the write latency becomes longer, which is especially noticably for lower $\delta$ values. This is because writer nodes more often have to handle snapshot tasks, too, in $\baseSnapshot()$, because $\Delta$ becomes non-empty faster with more snapshotters. (4) The write latency of Algorithm~\ref{alg:9disCongif} is longer than Algorithm~\ref{alg:terminating} for all $\delta$ and number of snapshotters, meaning Algorithm~\ref{alg:terminating} is an improvement over Algorithm~\ref{alg:9disCongif} in this experiment. (5) There is no significant latency difference between Algorithm~\ref{alg:0disCongif} and \ref{alg:disCongif}.
As for number of write quorum accesses per write operation, we observed the same result as Experiment 1. Even if there are snapshotters, each write operation still requires only one quorum access.

\subsubsection*{Experiment 3: Scalability of snapshot operations with respect to snapshot operations}
\label{sec:exp3}

\begin{figure}
\centering
\begin{minipage}{.45\textwidth}
    \centering
    \includegraphics[width=\textwidth]{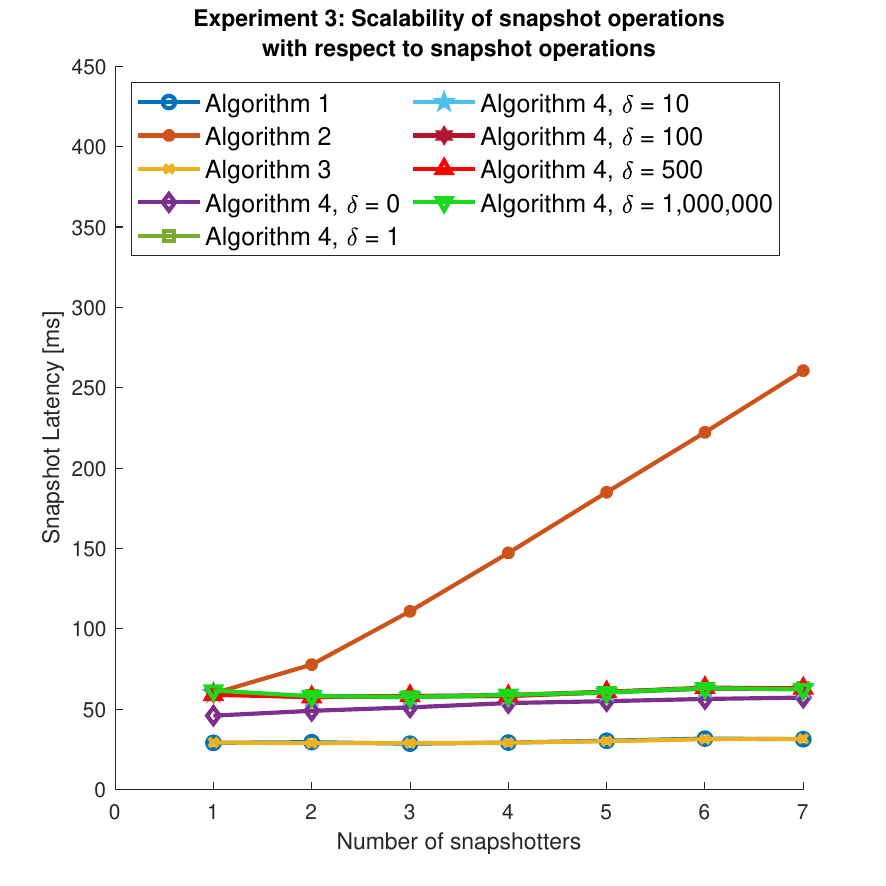}
    \captionof{figure}{The result of Experiment 3, showing an overview of the snapshot latencies.}
    \label{fig:exp3_latency_overview}
\end{minipage}%
\hspace{0.5cm}
\begin{minipage}{.45\textwidth}
    \includegraphics[width=\textwidth]{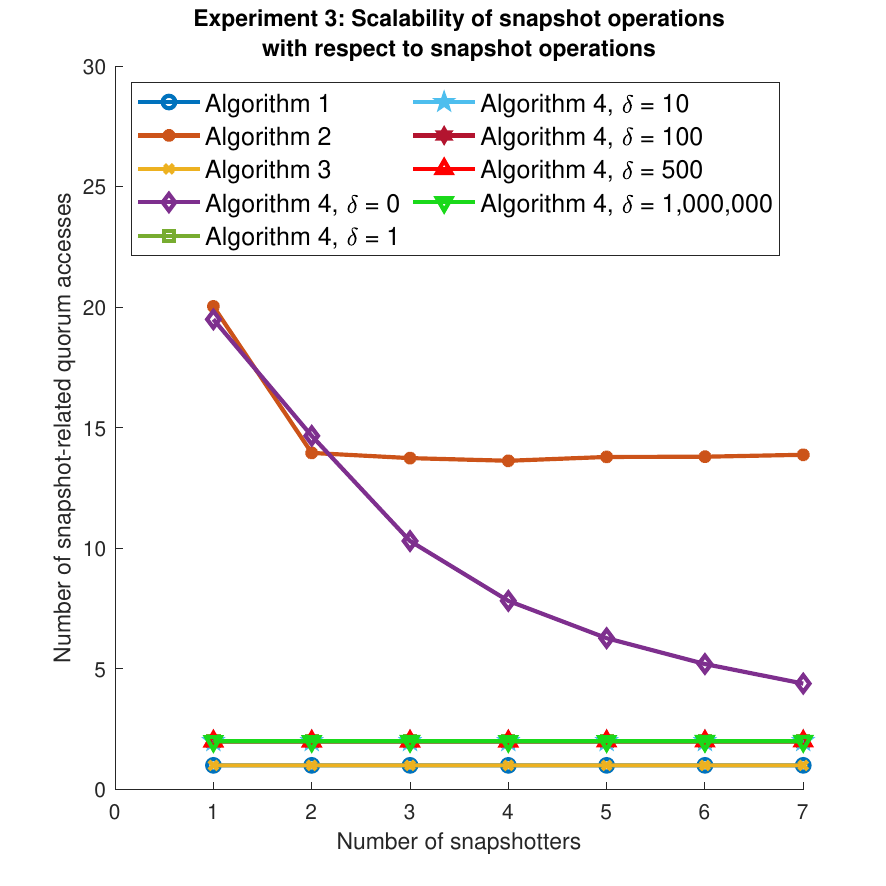}
    \centering
    \captionof{figure}{The result of Experiment 3, showing an overview of the number of snapshot-related quorum accesses.}
    \label{fig:exp3_quorum_overview}
\end{minipage}
\end{figure}

Figure~\ref{fig:exp3_latency_overview} shows the snapshot latency results for Experiment~3. For algorithms~\ref{alg:0disCongif} and \ref{alg:disCongif}, the snapshot latency is around $30~ms$, corresponding to one round trip time and one quorum access. This is expected since in the absence of writers, snapshots need only a single quorum access. 

For Algorithm~\ref{alg:terminating} and all $\delta$, the snapshot latency is around $60~ms$, which is corresponding to two round trip times and two quorum accesses. This is expected since one access is needed for the snapshot (line~\ref{ln:waitUntilSNAPSHOTackReg2}) and one for $\sStore()$ (line~\ref{ln:prevEqReg}). For the case of $\delta = 0$, we note that the latency is slightly lower than for the cases of $\delta > 0$. This is because for $\delta = 0$, node $p_j$ activates the helping scheme as soon as it becomes aware of any snapshot task. Thus, it can happen that $p_j$ calls $\sStore_j()$ with the task result, and the task creator, $p_i$, receives the corresponding $\mathrm{SAVE}$ message before $p_i$'s call to $\sStore_i()$. This means that $p_i$ only needs one quorum access for this particular client snapshot operation, instead of two. Our raw evaluation data confirms this hypothesis. Specifically, the largest difference appeared when there was a single snapshotting node that performed less calls to $\sStore()$ than invocations of $\sSnapshot()$. For $\delta > 0$, this cannot happen, since nodes do not activate the helping scheme before the occurrence of a concurrent write. However, Experiment~3 does not consider any write operations.

For Algorithm~\ref{alg:9disCongif}, the snapshot latency increases linearly as the number of snapshotters increases. This is because in Algorithm~\ref{alg:9disCongif}, all nodes must handle all snapshot operations, and they handle them sequentially. We see that also in Experiment~3, Algorithm~\ref{alg:terminating} offers improved latencies over Algorithm~\ref{alg:9disCongif}.

We now turn to Figure~\ref{fig:exp3_quorum_overview}, which shows the number of snapshot-related quorum accesses per snapshot operation. For algorithms~\ref{alg:0disCongif},~\ref{alg:disCongif} and~\ref{alg:terminating}, for the cases in which $\delta > 0$, the reason the results look like they do are for the same reasons given above for the snapshot latency results. For the case of $\delta = 0$, even if there is just a single snapshotter, all nodes help with that node's snapshot tasks immediately. Since Algorithm~\ref{alg:terminating} handles multiple snapshot tasks at once, as the number of snapshotters increases, the number of quorum accesses \emph{per snapshotter} decreases. We also see that for $\delta > 0$, Algorithm~\ref{alg:9disCongif} uses at least 6 times as many quorum accesses compared to Algorithm~\ref{alg:terminating}. For $\delta = 0$, Algorithm~\ref{alg:9disCongif} uses up to 3 times as many quorum accesses compared to Algorithm~\ref{alg:terminating}, for 7 snapshotters. In other words, Algorithm~\ref{alg:terminating} offers improved costs compared to Algorithm~\ref{alg:9disCongif}.

\subsubsection*{Experiment 4: Scalability of snapshot operations with respect to write operations}
\label{sec:exp4}

\begin{figure}
\centering
\begin{minipage}{.45\textwidth}
    \centering
    \includegraphics[width=\textwidth]{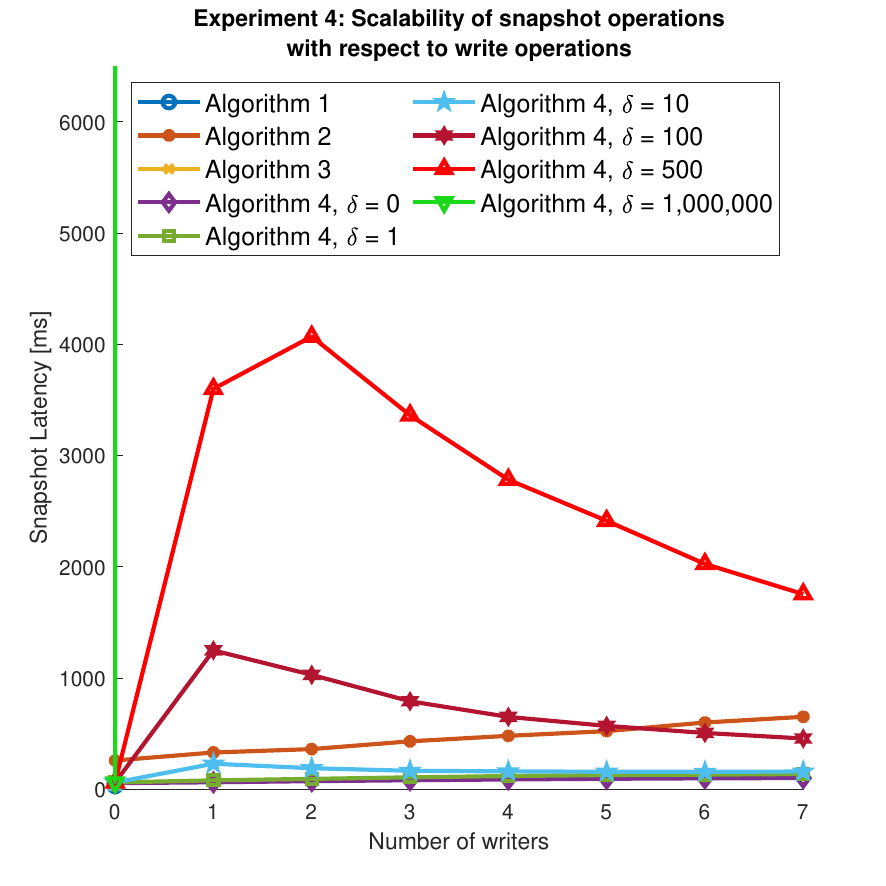}
    \captionof{figure}{The result of Experiment 4, showing an overview of the snapshot latencies.}
    \label{fig:exp4_latency_overview}
\end{minipage}%
\hspace{0.5cm}
\begin{minipage}{.45\textwidth}
    \includegraphics[width=\textwidth]{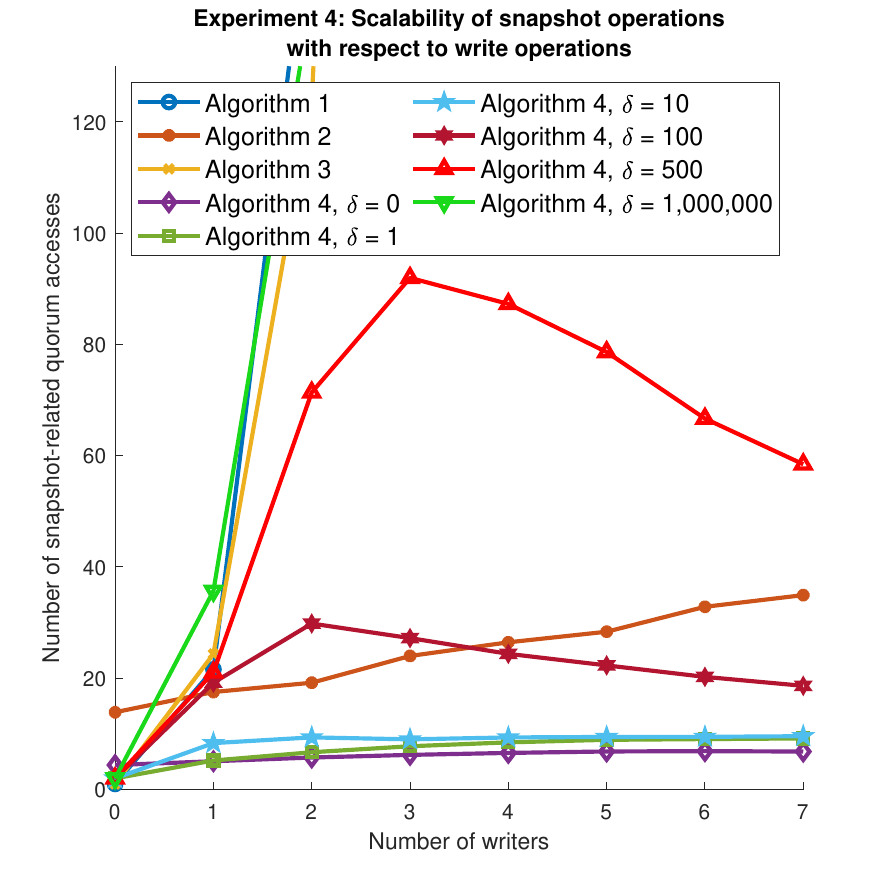}
    \centering
    \captionof{figure}{The result of Experiment 4, showing an overview of the number of snapshot-related quorum accesses.}
    \label{fig:exp4_quorum_overview}
\end{minipage}
\end{figure}

Figure~\ref{fig:exp4_latency_overview} shows the snapshot latency results of Experiment 4. There are several interesting observations to make. (1) As the number of writers changes from 0 to 1, the snapshot latency of algorithms~\ref{alg:0disCongif}, \ref{alg:disCongif} and \ref{alg:terminating} $\delta = 1,000,000$ goes to infinity. This is because they never pause write operations to help snapshot operations. (2) Higher $\delta$ means that the snapshot latency is longer. Thus, $\delta$ indeed can serve as a parameter that can balance the trade-off between write latency and snapshot latency. (3) For low $\delta$ values, increasing the number of writers leads to a longer snapshot latency since the job-stealing scheme is activated more often at the writers. (4) For high $\delta$ values, increasing the number of writers leads to a shorter snapshot latency. This is because a new snapshot quorum access has to be made (line \ref{ln:waitUntilSNAPSHOTackReg2}) even if $reg$ is different from $prev$ on only a single entry, which will happen with only 1 writer. If there are more writers, this will also happen. However, with more writers, the snapshot task will become concurrent with $\delta$ writes in a shorter amount of time, which in turn means that all nodes must block their writes faster, leading to a lower snapshot latency.

\begin{table*}[t!]
\centering
\caption{The correlation coefficients between the snapshot latency and the number of snapshot quorum accesses for Experiment~4, \ie the correlation between Figure~\ref{fig:exp4_latency_overview} and Figure~\ref{fig:exp4_quorum_overview}.}
\label{tab:correlation}
\begin{tabular}{|l|l|}
\hline
\textbf{Algorithm and $\delta$}          & \textbf{Correlation} \\ \hline
Algorithm 1                     & 0.76        \\ \hline
Algorithm 2                     & 1.00        \\ \hline
Algorithm 3                     & 0.83        \\ \hline
Algorithm 4, $\delta=0$         & 0.98        \\ \hline
Algorithm 4, $\delta=1$         & 0.97        \\ \hline
Algorithm 4, $\delta=10$        & 0.77        \\ \hline
Algorithm 4, $\delta=100$       & 0.69        \\ \hline
Algorithm 4, $\delta=500$       & 0.52        \\ \hline
Algorithm 4, $\delta=10^6$ & 0.76        \\ \hline
\end{tabular}
\end{table*}

Figure~\ref{fig:exp4_quorum_overview} shows the number of snapshot-related quorum accesses for Experiment~4. We observe a resembling behavior as for the latencies in this experiment. The reason is that if the latency is low, not many quorum accesses have the time to happen, and hence the number of quorum accesses is also low. From Table~\ref{tab:correlation}, we see that the correlation coefficients related to algorithms~\ref{alg:9disCongif} and~\ref{alg:terminating} (for the case of $\delta \in \{0,1\}$) indicate a strong correlation between the latency and the number of quorum accesses. For the cases with higher values of $\delta$, this correlation slowly diminishes, but still, their plots allow us to observe some degree of similarity despite the somewhat more complex behavior. Namely, the number of quorum accesses goes down slightly after the latency goes done.

The major difference is that the number of snapshot-related quorum accesses does not go to infinity as fast as the snapshot latency does (for algorithms~\ref{alg:0disCongif}, \ref{alg:disCongif} and \ref{alg:terminating} $\delta = 1,000,000$). This is due to the way the metrics are measured, see the formulas in Section \ref{sec:system_setup}, repeated here for convenience: $\frac{1}{|S|} \sum_{p_i \in S} \frac{60}{nso_i}$ for snapshot latency and ${\sum_{p_i \in \sP} (nsq_i + nsq'_i + nrb_i)}/{\sum_{p_i \in S} nso_i}$ for number of quorum accesses per snapshot operation. If only a single node completes no snapshot operation ($nso_i = 0$ for some $i$), the latency formula equals infinity. But if some node completes at least one snapshot operation, meaning the denominator $\sum_{p_i \in S} nso_i > 0$, the number of quorum accesses does not equal infinity. As the number of writers increases, fewer snapshot operations are be completed for all nodes, and eventually no node completes any snapshot operation, but based on the above reasoning, this is noticed quicker in the latency results compared to the number of quorum accesses.

\subsection{Discussion}
\label{sec:discussion}
Our first research question asks how well the system scales with respect to writers and snapshotters. From the results, we see that write operations scale well for all algorithms. Even when there are concurrent snapshot operations, the write latencies do not grow unbounded. The number of quorum accesses needed per write operation is always around one. For snapshot operations, the picture is different. For Algorithm~\ref{alg:0disCongif} and Algorithm~\ref{alg:disCongif}, snapshot operations take constant time when there are no writes, but if there are concurrent writes, the snapshot latency becomes unbounded. While Algorithm~\ref{alg:9disCongif} does not starve snapshot operations, the snapshot latency grows linearly as the number of snapshotters increases. For Algorithm~\ref{alg:terminating}, by varying $\delta$, we can get various latencies. Selecting $\delta = 0$, the write latency scales fairly well and the snapshot latency is constant, only twice as slow as Algorithm~\ref{alg:0disCongif}'s and Algorithm~\ref{alg:disCongif}'s snapshot latency.

Our second research question asks what overhead is needed to pay when comparing algorithms~\ref{alg:0disCongif} and~\ref{alg:disCongif} as well as algorithms~\ref{alg:9disCongif} and~\ref{alg:terminating}. In all experiments we observed that the overhead of Algorithm~\ref{alg:disCongif} compared to Algorithm~\ref{alg:0disCongif} is negligible. We also see that in all experiments, Algorithm~\ref{alg:terminating} with $\delta = 0$ is always faster, or very close to, Algorithm~\ref{alg:9disCongif}. The potential overhead of self-stabilization is more than compensated for by the other improvements of Algorithm~\ref{alg:terminating}. Furthermore, by tuning $\delta$ to suit the pattern of write and snapshot operations the current application of Algorithm~\ref{alg:terminating} generates, Algorithm~\ref{alg:terminating} can be made significantly faster than Algorithm~\ref{alg:9disCongif}.

Our third research question asks what role $\delta$ plays. We see that when there are only writers or snapshotters (experiments 1 and 3), the value of a non-zero $\delta$ does not play a significant role. In Experiment~2, we see that, on the one hand, a higher $\delta$ leads to shorter write latency, while on the other hand, in Experiment~4, we see that a higher $\delta$ leads to longer snapshot latency. Based on the number of nodes in the system and the pattern of writes and snapshots, one can decide a value of $\delta$ that gives the desired trade-off between write and snapshot latency. There is no single best $\delta$, but in our experiment setup, it seems that $\delta = 10$ gives a good trade-off and is in most cases faster than Algorithm~\ref{alg:9disCongif}.

\section{Conclusions}
\label{sec:disc}
We showed how to transform the two non-self-stabilizing algorithms of Delporte-Gallet \etal~\cite{DBLP:journals/tpds/Delporte-Gallet18} into ones that can recover after the occurrence of transient faults. This requires some non-trivial considerations that are imperative for self-stabilizing systems, such as the explicit use of bounded memory and the reoccurring clean-up of stale information. Interestingly, these considerations are not restrictive for the case of Delporte-Gallet \etal~\cite{DBLP:journals/tpds/Delporte-Gallet18}. For our self-stabilizing atomic snapshot algorithm that always terminates, we chose to use safe registers for storing the results of recent snapshot operations, rather than a mechanism for reliable broadcast, which is more expensive to implement. Moreover, instead of dealing with one snapshot operation at a time, we deal with several at a time. In addition, we consider a tunable input parameter, $\delta$, for allowing the system to balance a trade-off between the latency of snapshot operations and communications costs, which range from $\bigO(n)$ to $\bigO(n^2)$ messages per snapshot operation. 

One future direction emanating from this work is to consider the $\bigO(n)$ gap in the number of messages when 
designing future applications. For example, one might prefer the use of repeated snapshots (using the proposed solution) over a replicated state machine, which always costs $\bigO(n^2)$ messages per state transition. Another future direction is to consider the techniques presented here for providing self-stabilizing versions of more advanced snapshot algorithms, such as the one by Imbs \etal~\cite{DBLP:conf/icdcn/ImbsMPR18,DBLP:books/sp/Raynal18}.

%

\begin{thebibliography}{10}
	
	\bibitem{Abraham07selfstabilizing}
	Uri Abraham.
	\newblock Self stabilizing weak snapshots and timestamps.
	\newblock Technical report, Department of Computer Science, Ben-Gurion
	University of the Negev, September 2007.
	\newblock Accessible via
	\url{https://www.cs.bgu.ac.il/~abraham/papers/cs/EfficientStableTS30.pdf}.
	
	\bibitem{DBLP:journals/jacm/AfekADGMS93}
	Yehuda Afek, Hagit Attiya, Danny Dolev, Eli Gafni, Michael Merritt, and Nir
	Shavit.
	\newblock Atomic snapshots of shared memory.
	\newblock {\em J. {ACM}}, 40(4):873--890, 1993.
	
	\bibitem{DBLP:journals/jcss/AlonADDPT15}
	Noga Alon, Hagit Attiya, Shlomi Dolev, Swan Dubois, Maria Potop{-}Butucaru, and
	S{\'{e}}bastien Tixeuil.
	\newblock Practically stabilizing {SWMR} atomic memory in message-passing
	systems.
	\newblock {\em J. Comput. Syst. Sci.}, 81(4):692--701, 2015.
	
	\bibitem{DBLP:series/synthesis/2019Altisen}
	Karine Altisen, St{\'{e}}phane Devismes, Swan Dubois, and Franck Petit.
	\newblock {\em Introduction to Distributed Self-Stabilizing Algorithms}.
	\newblock Synthesis Lectures on Distributed Computing Theory. Morgan {\&}
	Claypool Publishers, 2019.
	
	\bibitem{DBLP:journals/dc/Anderson94}
	James~H. Anderson.
	\newblock Multi-writer composite registers.
	\newblock {\em Distributed Computing}, 7(4):175--195, 1994.
	
	\bibitem{DBLP:journals/jal/Attiya00}
	Hagit Attiya.
	\newblock Efficient and robust sharing of memory in message-passing systems.
	\newblock {\em J. Algorithms}, 34(1):109--127, 2000.
	
	\bibitem{DBLP:journals/eatcs/Attiya10}
	Hagit Attiya.
	\newblock Robust simulation of shared memory: 20 years after.
	\newblock {\em Bulletin of the {EATCS}}, 100:99--113, 2010.
	
	\bibitem{DBLP:journals/jacm/AttiyaBD95}
	Hagit Attiya, Amotz Bar{-}Noy, and Danny Dolev.
	\newblock Sharing memory robustly in message-passing systems.
	\newblock {\em J. {ACM}}, 42(1):124--142, 1995.
	
	\bibitem{DBLP:conf/wdag/AwerbuchPVD94}
	Baruch Awerbuch, Boaz Patt{-}Shamir, George Varghese, and Shlomi Dolev.
	\newblock Self-stabilization by local checking and global reset.
	\newblock In {\em {WDAG}}, volume 857 of {\em LNCS}, pages 326--339. Springer,
	1994.
	
	\bibitem{DBLP:journals/jacm/ChandraT96}
	Tushar~Deepak Chandra and Sam Toueg.
	\newblock Unreliable failure detectors for reliable distributed systems.
	\newblock {\em J. {ACM}}, 43(2):225--267, 1996.
	
	\bibitem{DBLP:journals/ccr/ChunCRBPWB03}
	Brent~N. Chun, David~E. Culler, Timothy Roscoe, Andy~C. Bavier, Larry~L.
	Peterson, Mike Wawrzoniak, and Mic Bowman.
	\newblock Planetlab: an overlay testbed for broad-coverage services.
	\newblock {\em Computer Communication Review}, 33(3):3--12, 2003.
	
	\bibitem{DBLP:journals/jpdc/DelaetDNT10}
	Sylvie Dela{\"{e}}t, St{\'{e}}phane Devismes, Mikhail Nesterenko, and
	S{\'{e}}bastien Tixeuil.
	\newblock Snap-stabilization in message-passing systems.
	\newblock {\em J. Parallel Distrib. Comput.}, 70(12):1220--1230, 2010.
	
	\bibitem{DBLP:journals/tpds/Delporte-Gallet18}
	Carole Delporte{-}Gallet, Hugues Fauconnier, Sergio Rajsbaum, and Michel
	Raynal.
	\newblock Implementing snapshot objects on top of crash-prone asynchronous
	message-passing systems.
	\newblock {\em {IEEE} Trans. Parallel Distrib. Syst.}, 29(9):2033--2045, 2018.
	
	\bibitem{DBLP:journals/cacm/Dijkstra74}
	Edsger~W. Dijkstra.
	\newblock Self-stabilizing systems in spite of distributed control.
	\newblock {\em Commun. {ACM}}, 17(11):643--644, 1974.
	
	\bibitem{DBLP:books/mit/Dolev2000}
	Shlomi Dolev.
	\newblock {\em Self-Stabilization}.
	\newblock {MIT} Press, 2000.
	
	\bibitem{DBLP:journals/siamcomp/DolevIM97}
	Shlomi Dolev, Amos Israeli, and Shlomo Moran.
	\newblock Resource bounds for self-stabilizing message-driven protocols.
	\newblock {\em {SIAM} J. Comput.}, 26(1):273--290, 1997.
	
	\bibitem{DBLP:journals/corr/abs-1806-03498}
	Shlomi Dolev, Thomas Petig, and Elad~Michael Schiller.
	\newblock Self-stabilizing and private distributed shared atomic memory in
	seldomly fair message passing networks.
	\newblock {\em CoRR}, abs/1806.03498, 2018.
	
	\bibitem{DBLP:journals/siamcomp/DuttaGLV10}
	Partha Dutta, Rachid Guerraoui, Ron~R. Levy, and Marko Vukolic.
	\newblock Fast access to distributed atomic memory.
	\newblock {\em {SIAM} J. Comput.}, 39(8):3752--3783, 2010.
	
	\bibitem{DBLP:conf/netys/HadjistasiNS17}
	Theophanis Hadjistasi, Nicolas~C. Nicolaou, and Alexander~A. Schwarzmann.
	\newblock Oh-ram! one and a half round atomic memory.
	\newblock In {\em {NETYS}}, volume 10299 of {\em LNCS}, pages 117--132, 2017.
	
	\bibitem{DBLP:journals/toplas/HerlihyW90}
	Maurice Herlihy and Jeannette~M. Wing.
	\newblock Linearizability: {A} correctness condition for concurrent objects.
	\newblock {\em {ACM} Trans. Program. Lang. Syst.}, 12(3):463--492, 1990.
	
	\bibitem{DBLP:conf/icdcn/ImbsMPR18}
	Damien Imbs, Achour Most{\'{e}}faoui, Matthieu Perrin, and Michel Raynal.
	\newblock Set-constrained delivery broadcast: Definition, abstraction power,
	and computability limits.
	\newblock In {\em 19th Distributed Computing and Networking, {ICDCN}}, pages
	7:1--7:10. {ACM}, 2018.
	
	\bibitem{DBLP:journals/tpds/KirousisST94}
	Lefteris~M. Kirousis, Paul~G. Spirakis, and Philippas Tsigas.
	\newblock Reading many variables in one atomic operation: Solutions with linear
	or sublinear complexity.
	\newblock {\em {IEEE} Trans. Parallel Distrib. Syst.}, 5(7):688--696, 1994.
	
	\bibitem{DBLP:journals/ipl/KirousisST96}
	Lefteris~M. Kirousis, Paul~G. Spirakis, and Philippas Tsigas.
	\newblock Simple atomic snapshots: {A} linear complexity solution with
	unbounded time-stamps.
	\newblock {\em Inf. Process. Lett.}, 58(1):47--53, 1996.
	
	\bibitem{DBLP:journals/cacm/Lamport78}
	Leslie Lamport.
	\newblock Time, clocks, and the ordering of events in a distributed system.
	\newblock {\em Commun. {ACM}}, 21(7):558--565, 1978.
	
	\bibitem{DBLP:books/mk/Lynch96}
	Nancy~A. Lynch.
	\newblock {\em Distributed Algorithms}.
	\newblock Morgan Kaufmann, 1996.
	
	\bibitem{Matsakis:2014:RL:2692956.2663188}
	Nicholas~D. Matsakis and Felix~S. Klock, II.
	\newblock The rust language.
	\newblock {\em Ada Lett.}, 34(3):103--104, October 2014.
	
	\bibitem{DBLP:conf/podc/MostefaouiR16}
	Achour Most{\'{e}}faoui and Michel Raynal.
	\newblock Two-bit messages are sufficient to implement atomic read/write
	registers in crash-prone systems.
	\newblock In {\em {PODC}}, pages 381--389. {ACM}, 2016.
	
	\bibitem{plcli}
	Axel Niklasson.
	\newblock plcli - a tool for running distributed applications on planetlab.
	\newblock In {\em 2nd Workshop on Advanced tools, programming languages, and
		PLatforms for Implementing and Evaluating algorithms for Distributed systems,
		{(ApPLIED)} 2019, Proceedings}, pages 66--74, 2019.
	
	\bibitem{DBLP:books/daglib/0032304}
	Michel Raynal.
	\newblock {\em Distributed Algorithms for Message-Passing Systems}.
	\newblock Springer, 2013.
	
	\bibitem{DBLP:books/sp/Raynal18}
	Michel Raynal.
	\newblock {\em Fault-Tolerant Message-Passing Distributed Systems - An
		Algorithmic Approach}.
	\newblock Springer, 2018.
	
	\bibitem{DBLP:reference/algo/Ruppert16}
	Eric Ruppert.
	\newblock Implementing shared registers in asynchronous message-passing
	systems.
	\newblock In {\em Encyclopedia of Algorithms}, pages 954--958. Springer, 2016.
	
	\bibitem{DBLP:series/synthesis/2018Taubenfeld}
	Gadi Taubenfeld.
	\newblock {\em Distributed Computing Pearls}.
	\newblock Synthesis Lectures on Distributed Computing Theory. Morgan {\&}
	Claypool Publishers, 2018.
	
	\bibitem{DBLP:series/synthesis/2011Welch}
	Jennifer~L. Welch and Jennifer~E. Walter.
	\newblock {\em Link Reversal Algorithms}.
	\newblock Synthesis Lectures on Distributed Computing Theory. Morgan {\&}
	Claypool Publishers, 2011.
	
\end{thebibliography}

\end{document}